
\documentclass[letterpaper,twocolumn,10pt,]{article}
\usepackage{usenix-2020-09}

\newcommand{\editf}[1]{{#1}}

\usepackage{tikz}
\usepackage{amsmath,amssymb,amsthm}

\usepackage{filecontents}

\usepackage{enumitem}

\newcommand{\dom}[1]{\mathbb{#1}}
\newcommand{\mech}[1]{\mathcal{#1}}

\def\out[#1]{\dom{Y}_{#1}}
\def\Aldp[#1]{\mathcal{R}_{#1}}

\newtheorem{theorem}{Theorem}[section]

\newtheorem{lemma}[theorem]{Lemma}

\newtheorem{definition}[theorem]{Definition}

\usepackage{pifont}
\newcommand{\yesmark}{\text{\ding{51}}}
\newcommand{\nomark}{\text{\ding{55}}}
\usepackage[linesnumbered,vlined,boxruled,nofillcomment]{algorithm2e}
\DontPrintSemicolon

\usepackage{stmaryrd}
\usepackage{multirow}
\usepackage{balance}

\usepackage[final]{changes}
\makeatletter
\AddToHook{cmd/added/before}{\def\Changes@AuthorColor{blue}}
\AddToHook{cmd/deleted/before}{\def\Changes@AuthorColor{red}}
\AddToHook{cmd/replaced/before}{\def\Changes@AuthorColor{orange}}
\makeatother

\newcommand{\newadded}[1]{{\textcolor{blue}{#1}}}
\renewcommand{\newadded}[1]{{{#1}}}

\usepackage{tcolorbox}
\tcbuselibrary{skins,breakable}


\newtcolorbox{mybox}[2][]{%
  enhanced,
  title        = {#2},
  attach boxed title to top left={xshift=+3mm,yshift*=-3mm},
  colback      = white,
  colframe     = black,
  fonttitle    = \bfseries,
  fontupper    = \small,
  fontlower    = \small,
  colbacktitle = black!3!white,
  coltitle     = black,
  #1
}


\begin{document}

\date{}

\title{\Large \bf Beyond Statistical Estimation:\\ Differentially Private Individual Computation via Shuffling}

\author{
{\rm Shaowei\ Wang$^{1*}$, Changyu\ Dong$^1$, Xiangfu\ Song$^2$, Jin\ Li$^{1,3}$, Zhili\ Zhou$^{1*}$, Di\ Wang$^4$, Han\ Wu$^5$
}\\
\\
$^1$Guangzhou University\ \ \ \ \ \ \ \ $^2$National University of Singapore\\$^3$Guangdong Key Laboratory of Blockchain Security (Guangzhou University)\\
$^4$King Abdullah University of Science and Technology (KAUST)\ \ \ \ \ \ \ \ \ \ $^5$University of Southampton\\
$^*$Corresponding authors: \texttt{wangsw@gzhu.edu.cn}, \texttt{zhou\_zhili@163.com}
} 

\maketitle

\begin{abstract}
In data-driven applications, preserving user privacy while enabling valuable computations remains a critical challenge. Technologies like differential privacy have been pivotal in addressing these concerns. The shuffle model of DP requires no trusted curators and can achieve high utility by leveraging the privacy amplification effect yielded from shuffling. These benefits have led to significant interest in the shuffle model. However, the computation tasks in the shuffle model are limited to statistical estimation, making it inapplicable to real-world scenarios in which each user requires a personalized output. This paper introduces a novel paradigm termed Private Individual Computation (PIC), expanding the shuffle model to support a broader range of permutation-equivariant computations. PIC enables personalized outputs while preserving privacy, and enjoys privacy amplification through shuffling. We propose a concrete protocol that realizes PIC. By using one-time public keys, our protocol enables users to receive their outputs without compromising anonymity, which is essential for privacy amplification. Additionally, we present an optimal randomizer, the Minkowski Response, designed for the PIC model to enhance utility. We formally prove the security and privacy properties of the PIC protocol. Theoretical analysis and empirical evaluations demonstrate PIC's capability in handling non-statistical computation tasks, and the efficacy of PIC and the Minkowski randomizer in achieving superior utility compared to existing solutions.
\end{abstract}

\section{Introduction}\label{sec:intro}
Personal information fuels a wide array of data-driven applications, e.g. statistical analytics, machine learning, recommendation systems, spatial crowdsourcing, e-health, social networks, and smart cities. These applications deliver substantial value but rely on data collected from users, which is a prime target for attacks and carries a high risk of leakage or abuse. Data privacy concerns are escalating, especially after several high-profile data breach incidents. Despite the introduction of stricter privacy laws such as the EU's General Data Protection Regulation and the California Consumer Privacy Act, many users still distrust service providers and are hesitant to consent to the use of their data. To bridge this trust gap and encourage user participation, significant efforts have been made recently to develop privacy-enhancing technologies that enable private data processing.

Two prominent technologies addressing this problem are secure multiparty computation (MPC) \cite{yao1986generate} and differential privacy (DP) \cite{dwork2008differential}. MPC employs interactive cryptographic protocols that enable mutually untrusted parties to jointly compute a function on their private data, ensuring each party receives an output while learning nothing else. Despite its strong privacy guarantees, the cryptographic nature of MPC results in substantial overhead, posing scalability challenges for practical applications. In contrast, DP enhances privacy by adding random noise to the data, allowing computation to be performed on sanitized data without the need for heavy cryptographic machinery. Traditional central DP relies on a trusted curator \cite{dwork2008differential}, but the local model (LDP \cite{kasiviswanathan2011can}) enables users to sanitize their data locally before sharing it. LDP's practicality and minimal trust requirements have led to its adoption by companies such as Apple \cite{tang2017privacy}, Microsoft \cite{ding2017collecting}, and Google \cite{erlingsson2014rappor} in their real-world systems. However, LDP's primary drawback is that each user must independently add sufficient random noise to ensure privacy, which can significantly impact the utility of the data.

Recently, within the realm of DP research, the shuffle model \cite{bittau2017prochlo,erlingsson2019amplification} has emerged. This model introduces a shuffler that randomly permutes messages from users and then sends these anonymized messages to a computing server or analyzer. The server can then compute on these messages to derive the result. Trust in the shuffler is minimized because each user encrypts their locally sanitized data using the server's public key. This way, the shuffler remains oblivious to the messages received from the users. A key advantage of the shuffle model over LDP is utility. It has been proven that anonymizing/shuffling messages amplifies the privacy guarantee provided by the local randomizer used by the users. For instance, shuffled messages from $n$ users each adopting local $\epsilon$-DP actually preserve differential privacy at the level $\epsilon_c = \tilde{O}(\sqrt{e^\epsilon/n})$ \cite{feldman2021hiding,feldman2022stronger}. Consequently, to achieve a predefined global privacy goal, less noise must be added when users sanitize their data locally. This significantly improves the accuracy of the final result. Owing to these utility advantages, extensive studies have been conducted within the shuffle model, \replaced{e.g., \mbox{\cite{balle2020private,ghazi2020private,ghazi2021differentially,girgis2021shuffled}}}{e.g., \mbox{\cite{balle2020private,ghazi2020private,ghazi2021differentially,ghazi2021differentially,girgis2021shuffled}}}.

That said, the shuffle model has a noticeable limitation. The privacy amplification effect relies heavily on anonymizing/shuffling messages, which significantly restricts the types of computation that can be performed. So far, the sole form of computation achievable within the shuffle model is statistical estimation, i.e., the server takes the shuffled messages, aggregates them, and computes a single output from them, e.g., a count, sum, or histogram. However, many real-world applications are non-statistical in nature. When multiple users pool their data together for joint computation, they expect an \emph{individualized} output that may differ for each user. We coined the term ``individual computation'' for such tasks. Examples of individual computation tasks include:
\begin{itemize}[left=0.5em]
\setlength{\labelwidth}{0.1em}
\setlength{\labelsep}{0.3em}
\setlength{\itemsep}{0pt}
\setlength{\parskip}{0pt}
\setlength{\parsep}{0pt}
\item \textbf{Combinatorial optimization:} Spatial crowdsourcing \cite{tong2020spatial}, advertisement allocation \cite{mehta2013online}, and general combinatorial optimization \cite{korte2011combinatorial}, where two or more parties are often matched together based on their private information. Each party should get their own list of ``best matches'' whatever that means.
\item \textbf{Information retrieval:} Mobile search \cite{kamvar2006large}, location-based systems \cite{andres2013geo,cho2011friendship}, where the query results (e.g., nearby restaurants or neighboring users) depend on the private information of the inquirer.
\item \textbf{Incentive mechanisms:} In federated learning \cite{zhan2021survey} or crowd sensing \cite{shah2015double}, incentives play a vital role to encourage well-behaved participation. The amount of rewarding incentive must be computed for individuals based on their contribution (e.g., via Shapley values \cite{roth1988shapley}).
\end{itemize}

At first glance, private individual computation appears unattainable within the shuffle model because the need for personalized output conflicts with the anonymization required for privacy amplification. However, this is not necessarily the case. Our observation is that many individual computing tasks are equivariant to shuffling, meaning that the permutation applied to the inputs does not affect the computation. Therefore, shuffling does not prevent the server from producing personalized answers for each client -- the server does not need to know which answer belongs to whom. Nevertheless, there is a challenge: how to return the output to the correct user without compromising anonymity. One straightforward approach is to have the shuffler maintain a long-term duplex connection channel between each client and the server (e.g., as in an Onion routing network \cite{goldschlag1999onion}). However, this method is costly due to the need to store communication states and may be vulnerable to de-anonymization attacks on anonymous channels \cite{murdoch2005low,overlier2006locating}. Additionally, current shuffler implementations, e.g. the one described in the seminal work \cite{bittau2017prochlo}, do not support such duplex connections. {Moreover, after receiving the computation results, many individual computation tasks require establishing party-to-party communication  (e.g., a user communicates with the matched driver in taxi-hailing services, a user communicates with matched near users in social systems), which can be hard to implement in the duplex shuffle channel.} Addressing this issue is the first technical challenge we need to overcome.

The second challenge we face is designing optimal randomizers for individual computation within the shuffle model. Often, the randomizer in a shuffle model protocol should be tailored to specific tasks. For statistical tasks within the shuffle model, several studies have developed near-optimal randomizers, as demonstrated in histogram estimation \cite{ghazi2021power, feldman2021hiding} and one-dimensional summation estimation \cite{balle2019privacy}. However, the new setting of individual computation is different: the focus is on the accuracy of the output for each user rather than the statistical accuracy of the population. This difference renders existing randomization strategies (e.g., randomizers utilizing dimension sampling, or budget splitting, as reviewed in \cite{xiong2020comprehensive}), as well as prevalent randomizers (such as adding Laplace noise \cite{dwork2008differential}), less effective for the new setting. Therefore, we need to reconsider the fundamental privacy-utility trade-offs and redesign the underlying randomizers to better suit individual computation requirements.


\noindent\textbf{Our contributions} In this work, we introduce a new paradigm extending the shuffle model that allows a wider range of permutation-equivariant tasks to be computed with DP guarantee and can enjoy privacy amplification provided by shuffling. We term the new paradigm \emph{Private Individual Computation} (PIC in short). We define PIC formally as an ideal functionality, which captures its functional and security properties. We also provide a concrete protocol, with formal proof, that can realize the ideal functionality. 

Similar to the shuffle-DP protocols, each user adds noise locally to their data and encapsulates it (and possibly other auxiliary data) into an encrypted message under the public key of the computation server. The shuffler then shuffles the encrypted messages, before sending them to the computation server. The server can decrypt these messages and perform the permutation-equivariant computation. What differs is that each user also includes a one-time public key within the encrypted message. This one-time public key serves two purposes: (1) it allows the server to encrypt the computation result such that it can only be decrypted by the owner of the corresponding private key, and (2) it acts as a pseudonym for the key owner. This approach addresses our first challenge with minimal overhead: the server can publish a list where each entry consists of a public key along with the computation result encrypted under this key. Users can download the entire list and decrypt the entry associated with their own public key, maintaining their anonymity. Additionally, users can establish secure communication channels with other matched parties using their public keys to eventually complete PIC tasks.

Another main contribution is the development of an asymptotically optimal randomizer specifically designed for the PIC model. This randomizer is based on an LDP mechanism we call \emph{Minkowski Response}. The primary goal of this design is to enhance utility, measured by the single-report error, which is the expected squared error between a user's true data value and its sanitized version. To achieve this, a Minkowski distance $r$ is determined based on the privacy budget. The randomizer's output domain is defined as a ball extending the input domain's radius by $r$. The key to achieving high utility lies in the randomizer's output: it selects a value close to the true value (within $r$) with a relatively high probability and a value from elsewhere with a relatively low probability.

We formally prove the security and privacy properties of the PIC protocol. Additionally, we provide a theoretical analysis of the utility bounds achievable by protocols in the PIC model and the Minkowski Response mechanism. Our analysis demonstrates that asymptotically, the error upper bound of the Minkowski Response matches the error lower bound for all possible randomizers in the PIC model, thereby achieving optimality. Alongside theoretical analysis, we conducted extensive experiments using real-world applications and datasets. The evaluation confirms that computations conducted in the PIC model exhibit significantly better utility than those in the LDP model. Furthermore, the performance of the Minkowski randomizer, measured by single-report error and task-specific utility metrics, surpasses that of existing LDP randomizers commonly used in the shuffle model.

\noindent\textbf{Organization.} The remainder of this paper is organized as follows. Section \ref{sec:related} reviews related works. Section \ref{sec:background} provides preliminary knowledge about privacy definitions and security primitives. Section \ref{sec:setting} formalizes the problem setting. Section \ref{sec:protocol} presents the PIC protocol. Section \ref{sec:ldpmechanism} provides optimal randomizers. Section \ref{sec:exp} evaluates the performances of our proposals. Section \ref{sec:dis} discusses more merits and future directions of the PIC. Finally, Section \ref{sec:conclusion} concludes the paper.

\section{Related Works}\label{sec:related}
This section reviews various approaches to private computation, primarily concentrating on non-statistical tasks.

\subsection{Secure Multiparty Computation}
Secure Multiparty Computation (MPC) is a fundamental cryptographic primitive that enables multiple parties to jointly compute a function over their inputs while no party learns anything beyond their own input and the final output of the computation. MPC was first conceptualized by Andrew Yao in the 1980s, and it has been proven that any computable function can be realized by MPC \cite{yao82}. MPC relies on cryptographic protocols that exchange encrypted messages among parties. To allow computing on encrypted data, primitives such as homomorphic encryption \cite{Paillier99, Cheon17}, secret sharing \cite{Bogdanov08,Damgard12}, or garbled circuits \cite{yao1986generate} can be employed. The primary challenge in MPC lies in balancing privacy and efficiency. While MPC offers strong privacy guarantees, it often suffers from significant computational and communication overheads,  which makes scaling to large datasets or numerous parties difficult. Recent research in MPC has been focusing on optimizing protocols and practical implementations \cite{Cramer18, Keller20,Burra21, Dalskov21, RosulekR21, Dalskov22,Smart23, Boyle23}. Despite a significant improvement, MPC still faces efficiency issues that hinder its widespread real-world deployment.

\subsection{Curator and Local DP Methods}
Many works study matching, allocation, or general combinatorial optimization problems within the curator DP model \cite{dwork2008differential} in the presence of a trusted party collecting raw data from clients (e.g., in \cite{mcsherry2009differentially,cormode2012differentially,to2014framework,to2016differentially}). Since the assumption of a trustworthy party is often unrealistic in decentralized settings, many studies adopt the local model of DP (e.g., in \cite{wang2017location,ren2018lopub,to2018privacy,wang2022privacy}), where each client sanitizes data locally and sends the noisy data to the server for executing corresponding matching/allocation algorithms. As each client must injects sufficient noises into data to satisfy local DP, the execution results often maintain low utility.








\subsection{Shuffle Model of DP}
The recently proposed shuffle model \cite{bittau2017prochlo,erlingsson2019amplification} combines the advantages of the curator model (e.g., high utility) and the local model (e.g., minimal trust). Depending on the number of messages each client can send to the intermediate shuffler, the shuffle model can be categorized as single-message \cite{balle2019privacy,erlingsson2019amplification,feldman2022stronger} and multi-message \cite{balle2020private,ghazi2021power}. The single-message shuffle model leverages privacy amplification via shuffling to enhance data utility compared to the local model. A substantial body of work \cite{erlingsson2019amplification,feldman2021hiding,feldman2022stronger} demonstrates that $n$ shuffled messages from clients, each adopting a same $\epsilon$-LDP randomizer, can actually preserve $\tilde{O}(\sqrt{e^\epsilon/n})$-DP. By removing the constraint of sending one message, the multi-message shuffle model can achieve better utility than the single-message model and might be comparable to the curator model (e.g., in \cite{balle2020private,ghazi2021power}). However, each multi-message protocol is tailored to a specific statistical query (e.g., summation), rendering them unsuitable for permutation-equivariant tasks with non-linear computations. {There is a line of works on the shuffle model for private information retrieval (e.g., in \cite{ishai2006cryptography,ishai2024information,gascon2024computationally} with cryptography security and in \cite{toledo2016lower,albab2022batched} with statistical DP), where the query is represented as multiple secret shares before sent to the shuffler, and the server holding the database entries returns linear-transformed entries for each query, using the duplex shuffled communication channel. This kind of duplex-communication shuffle model can be vulnerable to anonymity attacks \cite{murdoch2005low,overlier2006locating}, and is pertained to the linear computation in private information retrieval. It can not be applied to other PIC tasks (such as combinatorial optimization and federated learning with incentive) that involve with non-linear computations, and can not provide secure user-to-user communication needed in tasks like spatial crowdsourcing and social systems.}

Overall, existing works in the shuffle model primarily focus on statistical queries. This work, for the first time, explore the shuffle model for non-statistical applications (i.e., combinatorial optimization, social systems, and incentive mechanisms). 


\subsection{Combining Cryptography and DP}
While cryptographic tools can protect data secrecy during multiparty computation, they do not necessarily preserve output's privacy. DP can be employed to enhance the privacy of the outputting result of secure multiparty computation through decentralized noisy addition \cite{goryczka2015comprehensive}. To account for privacy loss due to intermediate encrypted views in MPC, researchers have proposed the relaxed notion of computational DP \cite{mironov2009computational} against polynomial-time adversaries. Computational DP protocols often inherited the computation/communication complexity of MPC (refer to approaches in Table \ref{tab:mpec}\editf{ of the full version \cite{wang2025PIC}}).

\section{Preliminaries}\label{sec:background}
In this section, we provide a concise introduction to the preliminaries. A list of notations can be found in Table \ref{tab:notations}.
\begin{table}
\caption{List of notations.}
\label{tab:notations}
\centering
\begin{tabular}{|c|l|}
\hline
\bfseries Notation & \bfseries Description\\
\hline
$[i]$ & $\{1,2,...,i\}$ \\
$[i:j]$ & $\{i,i+1,...,j\}$ \\

\hline

$\mech{S}$ & the shuffling procedure\\

$\mech{R}$ & the randomization algorithm\\

$G_i$ & the $i$-th group of users ($i\in [m]$)\\

$n_i$ & the number of users in group $G_i$\\

$u_{i,j}$ & the $j$-th user in group $G_i$ where $j\in [n_i]$\\



\hline
$\epsilon$ & the local privacy budget\\

$\epsilon_c$ & the amplified privacy level\\

\hline

$sk$, $pk$ & the secret key and public key, respectively\\

$\lambda$ & the security parameter of cryptography \\

\hline
\end{tabular}
\end{table}

\subsection{Privacy Definitions}

\begin{definition}[Hockey-stick divergence \cite{sason2016f}]\label{def:hsd}
For two probability distributions $P$ and $Q$, the Hockey-stick divergence between them with parameter $e^\epsilon$ is as follows:
$$D_{\epsilon}(P||Q)=\int_{z\in \mathcal{Z}} \max\{0, P(z)-e^{\epsilon}Q(z)\}\mathrm{d}z.$$
\end{definition}

Differential privacy imposes divergence constraints on output probability distributions with respect to changes in the input. In the curator model of differential privacy, a trusted party collects raw data \(x_i \in \dom{X}\) from all users to form a dataset \(T = \{x_1, \ldots, x_n\}\) and applies a randomization algorithm \(\mech{R}\) to release query results \(\mech{R}(T)\). For two datasets \(T\) and \(T'\) of the same size and differing in only one element, they are referred to as \emph{neighboring datasets}. Differential privacy ensures that the Hockey-stick divergence between \(\mech{R}(T)\) and \(\mech{R}(T')\) is bounded by a sufficiently small value (i.e., \(\delta = O(1/n)\)). 


\begin{definition}[$(\epsilon,\delta)$-DP \cite{dwork2008differential}]\label{def:cdp}
A randomization mechanism $\mech{R}$ satisfies $(\epsilon,\delta)$-differential privacy iff 
$\mech{R}(T)$ and $\mech{R}(T')$ are $(\epsilon,\delta)$-indistinguishable for any neighboring datasets $T,T' \in \dom{X}^n$. That is,  $\max(D_{\epsilon}(\mech{R}(T) \| \mech{R}(T')), D_{\epsilon}(\mech{R}(T') \| \mech{R}(T))) \leq \delta$.
\end{definition}

In the local DP model, each user applies a randomization mechanism $\mech{R}$ to their own data $x_i$, with the objective of ensuring that the Hockey-stick divergence between $\mech{R}(x)$ and $\mech{R}(x')$ is $0$ for any $x,x'\in \dom{X}$ (see Definition \ref{def:ldp}).
\begin{definition}[local $\epsilon$-DP \cite{kasiviswanathan2011can}]\label{def:ldp}
A mechanism $\mech{R}$ satisfies local $\epsilon$-DP iff $D_{\epsilon}(\mech{R}(x)||\mech{R}(x'))=0$ for any $x,x'\in \dom{X}$.
\end{definition}

\subsection{The Classical Shuffle Model}\label{subsec:amplifier}
Following the conventions of the randomize-then-shuffle model \cite{cheu2019distributed,balle2019privacy}, we define a single-message shuffle protocol $\mech{P}$ as a list of algorithms $\mech{P} = (\mech{R}, \mech{A})$, where $\mech{R}: \dom{X} \to \dom{Y}$ is local randomizer on client side, and $\mech{A}: \dom{Y}^n \to \dom{Z}$ is the analyzer on the server side. We refer to $\dom{Y}$ as the protocol's \emph{message space} and $\dom{Z}$ as the \emph{output space}.
The overall protocol implements a mechanism $\mech{P} : \dom{X}^n \to \dom{Z}$ as follows.
User $i$ holds a data record $x_i$ and a local randomizer $\mech{R}$, then computes a message $y_i = \mech{R}(x_i)$.
The messages $y_1,\ldots,y_n$ are shuffled and submitted to the analyzer. We denote the random shuffling step as $\mech{S}(y_1,\ldots,y_n)$, where $\mech{S} : \dom{Y}^n \to \dom{Y}^n$ is a \emph{shuffler} that applies a uniform-random permutation to its inputs.
In summary, the output of $\mech{P}(x_1, \ldots, x_n)$ is represented by $\mech{A} \circ \mech{S} \circ \mech{R}(X) = \mech{A}(\mech{S}(\mech{R}(x_1), \ldots, \mech{R}(x_n)))$.


In particular, when all users adopt an identical $\epsilon$-LDP mechanism $\mech{R}$, recent works \cite{feldman2021hiding,feldman2022stronger} have derived that $n$ shuffled $\epsilon$-LDP messages satisfy $(O((1-e^{-\epsilon})\sqrt{e^{\epsilon} \log(1/\delta)/n}), \delta)$-DP. We denote the amplified privacy level as:
$$\epsilon_c=\texttt{Amplify}(\epsilon,\delta,n),$$ the tight value of which can also be numerically computed \cite{koskela2021tight,wang2023unified}. Our experiments will use numerical bounds in \cite{koskela2021tight}.


\begin{figure*}[h]
\begin{center}
\centerline{\includegraphics[width=150mm]{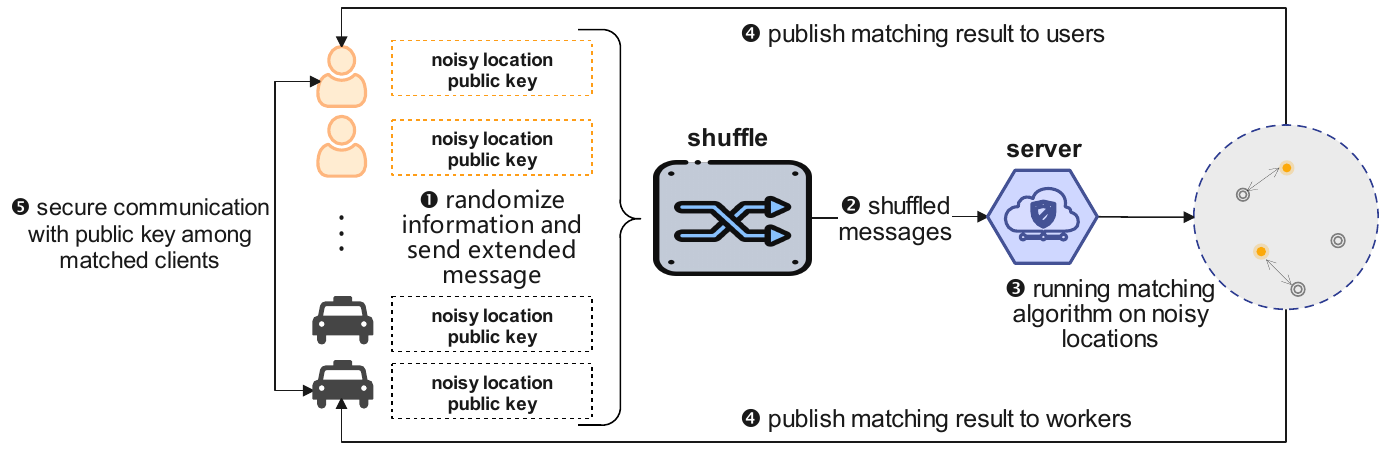}}
\vspace*{-1.0em}
\caption{An illustration of taxi-hailing in the PIC model. Besides (sanitized) location information, each user also encapsulates a one-time random public key into the message to the shuffler.}
\label{fig:overview}
\end{center}
\vspace*{-2.0em}
\end{figure*}

\subsection{Public Key Encryption}
We use an IND-CPA secure public key encryption scheme in our PIC protocol. Formally, a public key encryption scheme $\Pi$ is a tuple of three algorithms $(\textsf{Gen}, \textsf{Enc},\textsf{Dec})$:
\begin{itemize}[left=0em]
\setlength{\labelwidth}{0.1em}
\setlength{\labelsep}{0.2em}
\setlength{\itemsep}{0pt}
\setlength{\parskip}{0pt}
\setlength{\parsep}{0pt}
\item \textsf{Gen}: takes as input the security parameter $\lambda$ and outputs a pair of keys $(pk,sk)$, where $pk$ denotes the public key and $sk$ the private key.
\item \textsf{Enc}: takes as input a public key $pk$ and a message $m$ from the plaintext space. It outputs a  ciphertext $c\rightarrow \textsf{Enc}_{pk}(m)$.
\item \textsf{Dec}: takes as input a private key $sk$ and a ciphertext $c$, and outputs a message $m$ or a special symbol $\bot$ denoting failure.
\end{itemize}

It is required that for every  $(pk,sk)$ and plaintext message $m$, it holds that $\textsf{Dec}_{sk}(\textsf{Enc}_{pk}(m)) = m$. The IND-CPA security ensures the scheme leaks no useful information under a chosen plaintext attack.

\section{Problem Settings}\label{sec:setting}
We now introduce Private Individual Computation, a new paradigm for privacy-preserving computation that offers several benefits: (1) It provides a formal privacy guarantee (in the DP sense) for a wide range of computation tasks. (2) Since the computation is performed on user-sanitized data, it avoids the need for heavy cryptographic protocols, allowing for scalability to handle large user bases. (3) The privacy amplification effect results in significantly better utility compared to local DP. In this section, we will first present a few motivating applications and then formally define Private Individual Computation as an ideal functionality.


\subsection{Motivating Applications}
We describe three prevalent exemplar computation tasks: spatial crowdsourcing, location-based social systems, and federated learning with incentives:

\noindent\textbf{Spatial crowdsourcing.} A spatial crowdsourcing system typically consists of three roles: users, workers, and the orchestrating server. It proceeds through the following major steps:
\begin{itemize}[left=0.5em]
\setlength{\labelwidth}{0.1em}
\setlength{\labelsep}{0.2em}
\setlength{\itemsep}{0pt}
\setlength{\parskip}{0pt}
\setlength{\parsep}{0pt}
\item[I.] {\emph{Task submission:} each user $i\in G_a$ submits a task (e.g., taxi calling requests, sensing requests) $x_i=(i, l_i, v_i)$ containing the location $l_i$ and possibly other information $v_i$;}
\item[II.] {\emph{Worker reporting:} every enrolled worker $j\in G_b$ reports $x_j$ containing the location $l_j$ and other information $v_j$;}
\item[III.] {\emph{Task assignment:} the server receives $\{x_i\}_{i\in G_a}$ and $\{x_j\}_{k\in G_b}$, and outputs a matching $M:G_a\times G_b\mapsto \{0,1\}$ between $G_a$ and $G_b$ based on some criterion (e.g., minimizing total traveling costs, or maximizing matches).}
\item[IV.] {\emph{Task performing:} users and workers retrieve matching results and collaboratively complete the task.} 
\end{itemize}
One example of taxi-haling is illustrated in Figure \ref{fig:overview}.

\noindent\textbf{Location-based social systems.} In a location-based social system, a set of users and a server interact as the following:
\begin{itemize}[left=0.5em]
\setlength{\labelwidth}{0.1em}
\setlength{\labelsep}{0.2em}
\setlength{\itemsep}{0pt}
\setlength{\parskip}{0pt}
\setlength{\parsep}{0pt}
\item[I.] {\emph{Querying:}} each user $i \in G_a$ submits a request $x_i = (i, l_i, v_i)$, which might include location $l_i$ and preferences $v_i$. 
\item[II.] {\emph{Generating recommendation:}} the server receives $\{x_i\}_{i \in G_a}$ and generates a list of recommendations for each user based on specific criteria (e.g., proximity and preferences).
\item[III.] {\emph{Retrieving:}} each user retrieves the recommendations.
\end{itemize}


\noindent\textbf{Federated learning with incentives}.  Federated learning involves a set of users and a server:
\begin{itemize}[left=0.5em]
\setlength{\labelwidth}{0.1em}
\setlength{\labelsep}{0.2em}
\setlength{\itemsep}{0pt}
\setlength{\parskip}{0pt}
\setlength{\parsep}{0pt}
\item[I.] {\emph{Submitting gradient:}} Each user  $i\in G_a$ in each epoch computes an intermediate gradient information $x_i \in [-1,1]^d$ with its local model and data, then submits to the server;
\item[II.] {\emph{Computing incentive:}} The server computes the average gradient $\overline{x} = \frac{1}{n}\sum_{i \in [n]} x_i$. To incentive participation, the server may reward users with monetary tokens (e.g., via cryptocurrency) according to a profit allocation algorithm $V: [-1,1]^{d \times n} \times [-1,1]^d \mapsto [0,1]$.
\item[II.] {\emph{Receiving incentive:}} Each user retrieves the token and claims its monetary incentive.
\end{itemize}


A fundamental distinction between the above applications and those currently studied in the shuffle model is that each participant now expects an output that differs individually, rather than a single collectively aggregated output. Another distinction is that a participant might need to securely communicate with other participants after receiving the individualized computation results, such as the matched user and worker in spatial crowdsourcing will communicate with each other to accomplish the task, and the matched users in location-based social systems would like to securely contact each other afterward. Informally, in such applications, it is necessary to safeguard data privacy so that, aside from the party who generates the data, no one can be certain about that party's data (up to the leakage allowed by differential privacy). To amplify privacy, we also need to maintain anonymity so that for any given message, an adversary (such as the server, an observer, or an unmatched user) should only know that this message comes from a user within a particular group, but nothing more. Following the convention in the shuffle model, the shuffler is trusted to provide anonymity. The shuffler knows the random permutation used in the shuffling process and will not leak it to other parties, although the shuffler is prevented from observing plaintext messages through encryption.

\subsection{The Ideal Functionality}
Following the ideal-real world paradigm, we capture private individual computation formally as an ideal functionality $\mathcal{F}_{\scriptscriptstyle \textsf{PIC}}$, which is shown in Figure \ref{fig:func:k-half-shuffle}.

Essentially, the ideal functionality represents a fully trusted party that interacts with $m$ groups of users ($G_1,\cdots,G_m$) and one server $S$. \added{The adversary controls a collection of corrupted parties $\mathcal{C} \subset G_1 \cup G_2 \cdots G_n \cup \{S\}$. The adversary has full knowledge of the internal states and incoming messages of these corrupted parties.} The ideal functionality sanitizes the input from each user, shuffles the inputs randomly, and then applies a function to compute the output for each user. In the end, each user receives their individual function output, while the server receives a list of the shuffled sanitized user data and a list of the function outputs. It captures the functional requirements of individual computation in the real world: a server performs a computing task using the joint inputs from a set of users (which are sanitized and shuffled), and each user receives an individualized output. It also captures the security requirements: each party receives precisely the specified output, and nothing more. \added{Note that for efficiency, we opt for a centralized version of PIC where a server performs the actual computation over shuffled data. However, a decentralized PIC can also be realized by a group of servers jointly computing the task $f$ through MPC.} 

\noindent{\textbf{Remark 1}} A careful reader may notice that the ideal functionality does not explicitly capture differential privacy. This omission is intentional for the sake of security analysis. In the security analysis, we decouple the privacy requirements into two sets of proofs: the first set demonstrates that our concrete protocol, when executed by real-world parties, realizes the ideal functionality. This means no additional information about parties' inputs is revealed, except for the output given to each party. The second set of proofs establishes that the output given to each party conforms with differential privacy.

\begin{figure}
\centering
    \begin{mybox}{Functionality $\mathcal{F}_{\scriptscriptstyle \textsf{PIC}}$} 
        \textbf{Parameters}: $m \in \mathbb{N}$; $m$ groups of parties $G_1, G_2, \cdots, G_m$, where $n_i = |G_i|$ denotes the number of parties in group $G_i$; the data randomization mechanisms $\mech{R}_i$ for group $G_i$; a server ${S}$.
        
        \textbf{Functionality}: Upon receiving $n = \sum_{i\in [m]} n_i$ inputs $\{x_{i,j}\}_{i \in [m],j \in [|G_i|]}$ from all users, and the description of a function $f$ to be computed over parties' inputs from the server, do the following:    
        \begin{itemize}
            \item Compute $x'_{i,j} \leftarrow R_i(x_{i,j})$ for all $i \in [m]$ and $j \in [|G_i|]$. 
            \item Sample $m$ random permutations $\pi_1, \pi_2, \cdots, \pi_m$, where $\pi_i:[n_i] \rightarrow [n_i]$, perform shuffle over inputs and obtain $L=\left(\{x_{1,\pi_1(j)}'\}_{j\in [n_1]},\cdots ,\{x_{m,\pi_m(j)}'\}_{j\in [n_m]}\right)$, and compute $\left(\{y_{1,\pi_1(j)}\}_{j\in [n_1]},\cdots,\{y_{m,\pi_m(j)}\}_{j\in [n_m]}\right)\leftarrow f(L)$. 
            \item Send $y_{i,\pi_i(j)}$ to party $u_{i,j}$ for all $i \in [m]$ and $j \in n_i$. Additionally, send $(L, f(L))$ to $S$. 
        \end{itemize} 
	\end{mybox}
\vspace*{-1em}
\caption{The functionality $\mathcal{F}_{\scriptscriptstyle \textsf{PIC}}$}\label{fig:func:k-half-shuffle}
\vspace*{-1em}
\end{figure}

\section{A Concrete Protocol}\label{sec:protocol}

\subsection{The Protocol}\label{subsec:protocol}

\replaced{We now present a concrete protocol for PIC in the $\mathcal{F}_{\scriptscriptstyle \textsf{Shuffle}}$-hybrid model. The $\mathcal{F}$-hybrid model in MPC is a conceptual framework that simplifies the design and analysis of secure protocols. In many MPC protocols, parties rely on existing sub-protocols that have already been proven secure for specific tasks. To help protocol designers focus on other aspects of the protocol, these sub-protocols can be abstracted as ideal functionalities, denoted by $\mathcal{F}$, which are assumed to be computed securely by a trusted third party. This results in a hybrid protocol, combining concrete cryptographic operations from the real world with calls to ideal functionalities, thus avoiding the complexity of specifying and analyzing the sub-protocols in detail. In our case, we define the PIC protocol in the $\mathcal{F}_{\scriptscriptstyle \textsf{Shuffle}}$-hybrid model, where $\mathcal{F}_{\scriptscriptstyle \textsf{Shuffle}}$ replaces the secure shuffle sub-protocol.}{We now present a concrete protocol for PIC. Because there are already secure protocols for shuffling, to simplify the description, we present the protocol in the $\mathcal{F}_{\scriptscriptstyle \textsf{Shuffle}}$-hybrid model, in which parties can communicate as usual, and in addition have access to an ideal functionality $\mathcal{F}_{\scriptscriptstyle \textsf{Shuffle}}$ that does the shuffling.} The ideal functionality $\mathcal{F}_{\scriptscriptstyle \textsf{Shuffle}}$ (Figure \ref{fig:shuffle}) is parameterized with a list of parties that are corrupted by the adversary and collude with the server. For those parties, the adversary should know the correspondence between their messages before and after shuffling, hence $\mathcal{F}_{\scriptscriptstyle \textsf{Shuffle}}$ leaks this part of the permutation to the adversary. 

\begin{figure}[htbp]
\centering
    \begin{mybox}{Functionality $\mathcal{F}_{\scriptscriptstyle \textsf{Shuffle}}$} 
        \textbf{Parameters}: $n \in \mathbb{N}$; $n$ parties $P_1, P_2, \cdots, P_n$; a server ${S}$; the corrupted party set $\mathcal{C}$; the leakage $\mathcal{L}(\pi) = \{(i, \pi(i))\}_{i \in [\mathcal{C}]}$ for the permutation $\pi$ being used. 
        
        \textbf{Functionality}: Upon receiving $n$ inputs $\{x_i\}_{i \in [n]}$ from $P_1, P_2, \cdots, P_n$, respectively. 
        \begin{itemize}
            \item Sample a random permutation $\pi \in \mathbf{S}_n$. 
            \item Define $\{y_i\}_{i \in [n]}$ such that $y_i = x_{\pi(i)}$. 
            \item Send $\{y_i\}_{i \in [n]}$ to the server $S$. Additionally if $S\in \mathcal{C}$, send $\mathcal{L}(\pi)$ to \emph{the adversary} $S$.  
        \end{itemize}
	\end{mybox}
\vspace*{-1em}
\caption{The functionality $\mathcal{F}_{\scriptscriptstyle \textsf{Shuffle}}$}\label{fig:shuffle} 
\end{figure} 

The PIC protocol is outlined below:

\begin{enumerate}
\item The server publishes the global parameters, including (1) the specification of a public key encryption scheme $\Pi=(\textsf{Gen}, \textsf{Enc}$, $\textsf{Dec})$, (2) a security parameter $\lambda$, (3) its own public key $pk_c$, generated by invoking $\textsf{Gen}(\lambda)$, (4) for each user groups $G_i$ $(i\in[m])$, a data randomization mechanisms $\mech{R}_i$.

\item We denote the $j$-th user in group $G_i$ as $u_{i,j}$. Each user generates a key pair $(pk_{i,j},sk_{i,j})\leftarrow \textsf{Gen}(\lambda)$. Each user then randomizes their private data $x_{i,j}$ with mechanism $\mech{R}_i$ and obtains $x_{i,j}'\leftarrow \mech{R}_i(x_{i,j})$. Then the sanitized input is concatenated with their own public key, and encrypted with the server's public key $x_{i,j}''\leftarrow\textsf{Enc}_{pk_c}( pk_{i,j}||x'_{i,j})$.

\item Each user in group $G_i$ invokes $\mathcal{F}_{\scriptscriptstyle \textsf{Shuffle}}$ with $x_{i,j}''$, and $\mathcal{F}_{\scriptscriptstyle \textsf{Shuffle}}$ outputs the shuffled messages $\{x_{i,\pi_i(j)}''\}_{j\in [n_i]}\leftarrow\mathcal{S}(\{x_{i,j}''\}_{j\in [n_i]})$ to the server, where $\pi^{-1}_i$ is the (secret) random permutation used during $\mathcal{F}_{\scriptscriptstyle \textsf{Shuffle}}$ for group $G_i$.

\item The server decrypts each set of shuffled messages and obtains a list $L$ for all $m$ groups as: 
\begin{align*}
L=&\left(\{pk_{1,\pi_1(j)}||x_{1,\pi_1(j)}'\}_{j\in [n_1]},\cdots\right.\\
&\ \ \ \left.\{pk_{m,\pi_m(j)}||x_{m,\pi_m(j)}'\}_{j\in [n_m]}\right)
\end{align*}
It then computes the function $f$ over $L$ to produce output for each anonymous user:
$$\left(\{y_{1,\pi_1(j)}\}_{j\in [n_1]},\cdots,\{y_{m,\pi_m(j)}\}_{j\in [n_m]}\right)\leftarrow f(L).$$

\item The server publishes the computation results to a public bulletin board as a list of pairs: $\{(pk_{i,\pi_i(j)},\textsf{Enc}_{pk_{i,\pi_i(j)}}(y_{i,\pi_i(j)})\}_{j\in [n_i]}$, for each group $i\in [m]$.

\item Every user downloads the list, finds in the list the entry with their own public key, and decrypts the payload to get the computation result.
\end{enumerate}

\noindent{\textbf{Remark 2}} In the final step, we adopt the simplest strategy for users to retrieve their results without the server knowing which entry belongs to whom. If bandwidth is a concern, it can be replaced by a more sophisticated (and computationally more expensive) Private Information Retrieval protocol (e.g. \cite{hen23}).

\noindent{\textbf{Remark 3}} To eventually accomplish the spatial crowdsourcing (e.g., taxi-hailing services) or location-based social system tasks, the computation result $y_{i,\pi_i(j)}$ will contain the matched users of the user $\pi_i(j)\in G_i$. That is, $y_{i,\pi_i(j)}$ will encapsulate a list of public keys and noisy location information about the matched users of the user $\pi_i(j)\in G_i$. After receiving the individualized computation results in the final step, each user $\pi_i(j)$ can then securely contact the matched users using the public keys in $y_{i,\pi_i(j)}$ (possibly with the help of a public communication channel, such as a public bulletin). \added{Consider a scenario where user $A$ wants to send a message $p$ (e.g., the precise location for coordinating taxi-hailing services) to a matched user $B$ via a public bulletin board. The process follows this simple protocol: (1) User $A$ posts the tuple $(pk_B, c = \text{Enc}_{pk_B}(p), \text{Sign}_{sk_A}(c))$ on the bulletin board. Here, $\text{Sign}_{sk_A}(\cdot)$ represents a digital signature function using $A$’s private key. (2) User $B$ retrieves the message by using $pk_B$ as an address from the bulletin board, verifies the signature $\text{Sign}_{sk_A}(c)$ using $A$’s public key $pk_A$ to confirm authenticity, and then decrypts $c$ using $B$’s private key to recover the $p$.}

\subsection{Security Analysis}\label{sec:privacy}
This section presents the formal security analysis of the protocol in the previous section. In the following, we will consider a semi-honest adversary who can statically corrupt parties in the protocol. That is, the adversary will faithfully follow the protocol specifications but try to learn more information than allowed through protocol interaction. Also, before running the protocol, the adversary specifies the corrupted parties. The adversary controls the corrupted parties and knows their internal states. We use $\mathcal{C}$ to denote the collection of corrupted parties and $\mathcal{C} \subset G_1 \cup G_2 \cdots G_n \cup \{S\}$. 

We first show that our protocol securely realizes the ideal functionality $\mathcal{F}_{\scriptscriptstyle \textsf{PIC}}$ in the $\mathcal{F}_{\scriptscriptstyle \textsf{Shuffle}}$-hybrid model. This means the protocol leaks no more information than what is allowed by $\mathcal{F}_{\scriptscriptstyle \textsf{Shuffle}}$ and $\mathcal{F}_{\scriptscriptstyle \textsf{PIC}}$. More precisely, each user gets their output from $\mathcal{F}_{\scriptscriptstyle \textsf{PIC}}$, the server gets $(L,f(L))$, and in the case of colluding with some users in $G_i$, the partial permutation $\mathcal{L}(\pi_i)$. Formally we have the following theorem: 

\begin{theorem}[Security]
The Private Individual Computation (PIC) protocol in \S\ref{sec:protocol} securely computes $\mathcal{F}_{\scriptscriptstyle \textsf{PIC}}$ in the $\mathcal{F}_{\scriptscriptstyle \textsf{Shuffle}}$-hybrid model in the presence of any PPT adversary with static corruption.  
\end{theorem}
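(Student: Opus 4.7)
The plan is to follow the standard simulation paradigm for semi-honest MPC: for every PPT adversary $\mathcal{A}$ statically corrupting a set $\mathcal{C}\subset G_1\cup\cdots\cup G_m\cup\{S\}$, exhibit a PPT simulator $\textsf{Sim}$ that, given only the inputs and outputs of the parties in $\mathcal{C}$ as delivered by $\mathcal{F}_{\scriptscriptstyle \textsf{PIC}}$ (plus the leakage $\mathcal{L}(\pi_i)$ whenever $S\in\mathcal{C}$), produces a view computationally indistinguishable from the real-world view of $\mathcal{A}$ in the $\mathcal{F}_{\scriptscriptstyle \textsf{Shuffle}}$-hybrid execution. I would split the argument into three corruption patterns and then combine them: (i) only users are corrupted; (ii) only the server is corrupted; (iii) the server and an arbitrary subset of users are corrupted.

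For case (i), the view of each corrupted user $u_{i,j}\in\mathcal{C}$ consists of its own input $x_{i,j}$, its local randomness (the key pair and randomizer coins), nothing from $\mathcal{F}_{\scriptscriptstyle \textsf{Shuffle}}$ (whose output is delivered only to $S$), the bulletin board, and its retrieved output $y_{i,j}$. The simulator runs the user's code honestly on $x_{i,j}$, obtains $y_{i,j}$ from $\mathcal{F}_{\scriptscriptstyle \textsf{PIC}}$, and constructs the bulletin board by placing $(pk_{i,j},\textsf{Enc}_{pk_{i,j}}(y_{i,j}))$ at a uniformly random position in group $G_i$'s sublist, then filling every remaining slot with $(pk^\ast,\textsf{Enc}_{pk^\ast}(0))$ for freshly generated key pairs $pk^\ast$. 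Indistinguishability reduces to the IND-CPA security of $\Pi$ via a standard hybrid across the honest entries, since the honest secret keys $sk_{i,j}$ are never exposed to $\mathcal{A}$.

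For case (ii), the server's view contains $sk_c$, the shuffled list output by $\mathcal{F}_{\scriptscriptstyle \textsf{Shuffle}}$, the decrypted list $L$, and the computed $f(L)$. The simulator obtains $(L,f(L))$ directly from $\mathcal{F}_{\scriptscriptstyle \textsf{PIC}}$ and re-encrypts each component of $L$ under $pk_c$; since $\mathcal{F}_{\scriptscriptstyle \textsf{Shuffle}}$ is idealized, the distribution over the shuffled ciphertext list is perfectly simulated. Case (iii) is the hybrid: the simulator receives $(L,f(L))$ and the permutation leakage $\mathcal{L}(\pi_i)$ for every group containing a corrupted user, which is exactly the information needed to align each corrupted user's pre-shuffle ciphertext with its post-shuffle position; for honest users it proceeds as in case (ii) (re-encrypting the corresponding rows of $L$ under $pk_c$) while for corrupted users it uses their real sanitized inputs and one-time public keys. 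On the bulletin board, entries indexed by corrupted users are formed from the real $y_{i,j}$'s given by $\mathcal{F}_{\scriptscriptstyle \textsf{PIC}}$, and entries for honest users are real encryptions under $pk_c$-decryptable ciphertexts produced by the simulator using the values in $f(L)$.

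The main obstacle is case (iii), specifically showing that $\mathcal{A}$ cannot exploit the bulletin board to correlate honest users' outputs with their pre-shuffle ciphertexts beyond what $\mathcal{L}(\pi_i)$ already discloses. I would handle it by a hybrid argument that walks over the honest users one at a time: each step replaces a real encryption of the honest user's one-time public key and sanitized data (inside the pre-shuffle ciphertext under $pk_c$) together with the corresponding bulletin-board entry encrypted under $pk_{i,j}$ by simulated ciphertexts of dummy messages. Because honest $sk_{i,j}$ are never revealed to $\mathcal{A}$ and $sk_c$ is irrelevant in case (i), each step is indistinguishable by a direct reduction to IND-CPA of $\Pi$; for case (iii) the reduction must use the fact that the simulator, knowing $L$ and $f(L)$, can embed the challenge ciphertext into the correct slot and answer $\mathcal{A}$'s remaining queries consistently. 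Combining the three cases and invoking the sequential composition theorem for the $\mathcal{F}_{\scriptscriptstyle \textsf{Shuffle}}$-hybrid model yields the theorem.
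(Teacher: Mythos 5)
Your overall scaffolding is right: a simulation-based proof in the $\mathcal{F}_{\scriptscriptstyle \textsf{Shuffle}}$-hybrid model, case-split on which parties are corrupted, and an IND-CPA reduction where the honest parties' ciphertexts are unreadable by the adversary. Your case (i) matches the paper's first case ($S\notin\mathcal{C}$), where the simulator replaces honest parties' ciphertexts with encryptions of dummy plaintexts and a hybrid over these replacements reduces to IND-CPA of $\Pi$.

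The gap is in your treatment of case (iii) (and implicitly the line between (ii) and (iii)). You propose to handle it by the same hybrid that ``replaces a real encryption of the honest user's one-time public key and sanitized data (inside the pre-shuffle ciphertext under $pk_c$) \ldots by simulated ciphertexts of dummy messages,'' with a reduction to IND-CPA of $\Pi$. This step fails when $S\in\mathcal{C}$: the corrupted server holds $sk_c$, decrypts every pre-shuffle ciphertext, and therefore immediately distinguishes a dummy encryption under $pk_c$ from the real one. No IND-CPA reduction against $pk_c$ is possible once $sk_c$ is in the adversary's view. The paper's Case~2 sidesteps this entirely: the simulator already knows the decrypted list $L$ and $f(L)$ from $\mathcal{F}_{\scriptscriptstyle \textsf{PIC}}$, so it reconstructs the pre- and post-shuffle ciphertexts by genuinely encrypting the corresponding entries of $L$ under $pk_c$ (and the bulletin-board entries by encrypting the corresponding entries of $f(L)$ under the one-time public keys). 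It then samples permutations $\widetilde{\pi}_i$ uniformly at random subject to the constraints imposed by the leakage $\mathcal{L}(\pi_i)$ on corrupted parties. The only unknown in the real view is the honest portion of $\pi_i$, which is information-theoretically hidden by $\mathcal{F}_{\scriptscriptstyle \textsf{Shuffle}}$; hence the simulation in this case is \emph{perfect}, not merely computational, and no IND-CPA game is invoked at all. Your own case (ii) already uses exactly this ``re-encrypt from $L$'' idea, so the fix is to carry it over to case (iii) and drop the dummy-message hybrid for any ciphertext under $pk_c$; you would then only ever use IND-CPA when $S$ is honest, which is the paper's Case~1.

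A smaller point: your three-way split is redundant --- case (ii) is the $\mathcal{C}\cap\bigl(\bigcup_i G_i\bigr)=\emptyset$ special case of case (iii), and the paper folds them into a single ``$S\in\mathcal{C}$'' case. Once the argument for case (iii) is corrected as above, merging is the cleaner presentation.
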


The proof is simulation-based. It shows that for any PPT adversary $\mathcal{A}$ in the real world, there exists a PPT simulator $\mathcal{S}$ in the ideal world that can generate a simulated view given the corrupted parties' inputs and outputs. Security means that the simulated view is indistinguishable from the view of real-world execution. The detailed proof can be found in Appendix \ref{app:proof}\editf{ of the full version \cite{wang2025PIC}}. 

The above theorem states that the adversary learns strictly no more than the allowed output and leakage by engaging in the protocol execution. Next, we will show how much differential privacy we can get in the presence of such an adversary with such knowledge. We consider an honest user $u_{\scriptscriptstyle i^*,j^*}$, where $i^*\in [m], j^*\in [n_{i^*}]$. At the same time, the set of corrupted users in $ G_{i^*}$ is denoted as $C_{i^*}\subset G_{i^*}$. Differential privacy in our case means that on two neighboring inputs $X=\big(X_1,\cdots,X_{i^*}\cdots,X_m\big)$ and $X'=\big(X_1,\cdots,X'_{i^*}\cdots,X_m\big)$, the output and leakage obtained by the adversary, denoted as $\mathcal{A}(X)$ and $\mathcal{A}(X')$, are close in distribution. Formally, we have the following theorem:

\begin{theorem}[Differential Privacy]\label{the:privacy}
The Private Individual Computation protocol satisfies $(\epsilon_c,\delta)$-DP, i.e.  
\[
\max(D_{\epsilon_c}(\mech{A}(X)||\mech{A}(X')), D_{\epsilon_c}(\mech{A}(X')||\mech{A}(X)))\leq \delta.
\]
In particular, when all users in $G_{i^*}$ use an identical $\epsilon$-LDP mechanism as the data randomizer $\mech{R}_{i^*}$, and for $n'_{i^*}=|G_{i^*} - C_{i^*}|\geq 16 e^\epsilon\log(2/\delta)$ we have:
\begin{equation}\label{eq:cloneamplify}
\epsilon_c=\log\Bigg(1+\frac{e^\epsilon-1}{e^\epsilon+1}\Big(\sqrt{\frac{64 e^{\epsilon} \log 4/\delta}{n'_{i^*}}}+\frac{8 e^\epsilon}{n'_{i^*}}\Big)\Bigg).
\end{equation}
\end{theorem}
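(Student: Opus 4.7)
The plan is to reduce the DP analysis to the standard shuffle-model amplification bound applied only to the honest users of group $G_{i^*}$. The key leverage is that, by the preceding security theorem, the adversary's view is computationally indistinguishable from one simulated from the corrupted parties' inputs and outputs, the server's list $(L, f(L))$, and the partial permutation leakage $\mathcal{L}(\pi_i)$ for each group. Because $f(L)$ and each individual output $y_{i,j}$ are deterministic functions of $L$, by the post-processing property of DP it suffices to bound the Hockey-stick divergence between $L(X)$ and $L(X')$ conditioned on $\{\mathcal{L}(\pi_i)\}_{i}$ and the inputs of corrupted parties.

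Next, since $X$ and $X'$ agree on every group other than $G_{i^*}$ and the randomizations $\mech{R}_i$ are independent across groups, the sublists of $L$ for $i \ne i^*$ are identically distributed on the two neighbors and contribute no divergence. Within $G_{i^*}$, conditioning on $\mathcal{L}(\pi_{i^*})$ and on the corrupted users' inputs fixes both the positions and the contents of the corrupted slots in the shuffled list. Because $\pi_{i^*}$ was drawn uniformly from $\mathbf{S}_{n_{i^*}}$, its restriction to the $n'_{i^*}$ honest indices is a uniformly random bijection onto the complementary slots. Hence, after conditioning, what the adversary effectively sees from $G_{i^*}$ is a uniform shuffle of $n'_{i^*}$ independent outputs of $\mech{R}_{i^*}$ on the honest users' data, and the neighbors $X$, $X'$ differ in exactly one of these entries (that of $u_{i^*,j^*}$).

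This is precisely the setting of privacy amplification by shuffling. Applying the closed-form tight bound from Feldman et al.~\cite{feldman2022stronger} to $n'_{i^*}$ identical $\epsilon$-LDP randomizers, under the stated regime $n'_{i^*} \geq 16 e^\epsilon \log(2/\delta)$, yields the $\epsilon_c$ in equation~(\ref{eq:cloneamplify}) and completes the ``in particular'' part. The first, unqualified part of the theorem follows from the same reduction combined with whichever amplification guarantee the chosen randomizers admit, yielding $(\epsilon_c, \delta)$-DP.

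The main obstacle I anticipate is the permutation-conditioning step: one must justify that conditioning $\pi_{i^*}$ on the images of the corrupted indices leaves the induced permutation on the honest indices exactly uniform, and that no additional information about the honest sanitized messages leaks through $f(L)$ or through the outputs delivered to corrupted users beyond what is already contained in the shuffled sublist itself. The former is a clean combinatorial fact about uniform permutations restricted to a fixed partial assignment, and the latter is handled by post-processing; together they license invoking the amplification lemma as a black box in the final step, so that the proof collapses to a careful bookkeeping of information flow rather than any new probabilistic calculation.
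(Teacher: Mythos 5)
Your reduction follows the same overall contour as the paper's proof in Appendix~\ref{app:privacyproof}: bound the adversary's view by the server's decrypted list via post-processing, peel off the unaffected groups and the corrupted slots in $G_{i^*}$ (using the combinatorial fact that a uniform permutation conditioned on a partial assignment is uniform on the complement, which is exactly what the paper's Lemma~\ref{lemma:y2yi} operationalizes by exhibiting an explicit post-processing map), and then apply the Feldman et al.\ amplification bound to the $n'_{i^*}$ honest users. The two-case split on whether $S\in\mathcal{C}$ and the observation that the honest-server case is dominated by the corrupted-server case also match.

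There is, however, one genuine gap in the step where you assert that ``what the adversary effectively sees from $G_{i^*}$ is a uniform shuffle of $n'_{i^*}$ independent outputs of $\mech{R}_{i^*}$.'' That is not literally what the server sees: after decryption each shuffled entry is $pk_{i^*,j}\,\|\,x'_{i^*,j}$, i.e.\ the randomizer output concatenated with a fresh one-time public key (and in the abstracted form, possibly further derived fields). You cannot invoke the amplification lemma for $\mech{R}_{i^*}$ as a black box until you have argued that prepending these extra fields neither weakens the per-user $\epsilon$-LDP guarantee nor breaks the ``all honest users run the \emph{same} randomizer'' hypothesis required by amplification-by-shuffling. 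The paper handles this explicitly: it abstracts the per-client procedure as a mechanism $\overline{\mech{R}}$ (Algorithm~\ref{alg:local}), proves all honest clients in a group run an identical $\overline{\mech{R}}$ (Lemma~\ref{lemma:1ldp}), and then proves via a data-processing argument that
\[
D\bigl(\mech{S}\circ\overline{\mech{R}}(\hat{X}_{i^*})\,\|\,\mech{S}\circ\overline{\mech{R}}(\hat{X}'_{i^*})\bigr)
\leq
D\bigl(\mech{S}\circ\mech{R}(\hat{X}_{i^*})\,\|\,\mech{S}\circ\mech{R}(\hat{X}'_{i^*})\bigr)
\]
(Lemma~\ref{lemma:shuffle1ldp}), using that the keys are sampled i.i.d.\ and independently of the data, so they can be regenerated as a post-processing of the shuffled $\mech{R}$-outputs. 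Your proposal silently assumes this step; it is not hard, but it is exactly the point Remark~4 in the body is warning about, so it should appear in the proof rather than be absorbed into ``effectively sees.'' Once that lemma is in place, the rest of your argument goes through and yields Equation~\eqref{eq:cloneamplify} from~\cite[Theorem~3.1]{feldman2021hiding} under $n'_{i^*}\geq 16e^\epsilon\log(2/\delta)$.
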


The analysis can be divided into two cases: in the first case, $S\in \mathcal{C}$, i.e. S is corrupted. In this case, since the honest user locally randomizes their input, the input enjoys at least $\epsilon$-DP. Then the shuffling will amplify the privacy guarantee. Since the corrupted server receives the partial permutation as the leakage, the amplification depends on the number of uncorrupted users in the same group ($n'_{i^*}$) as the honest user. The amplified $\epsilon_c$ can then be derived following \cite{feldman2021hiding,feldman2022stronger}. In the second case where $S\not\in \mathcal{C}$, the knowledge of the adversary is $y_{i,j}$ for each corrupted user $u_{i,j}\in \mathcal{C}$. The tricky part is that how much information $y_{i,j}$ leaks depends on the function $f$ being evaluated by the server. Hence we consider the worst case where $f$ output $y_{i,j}=L$. Continuing along the same line of thought, we conclude that the level of differential privacy assurance is no less than in the first case. \added{Additionally, since the one-time random secret/public keys are independently and identically distributed, they do not compromise the local privacy guarantees of the sanitized data $x'_{i,j}$ or the privacy amplification effect provided by benign users. As a result, the privacy amplification guarantee in the PIC model can be reduced to that of the classical (single-message) shuffle model. In this context, \cite{feldman2021hiding} provides a similar formula to Equation \ref{eq:cloneamplify} for shuffling $\epsilon$-LDP messages.} The full proof can be found in Appendix \ref{app:privacyproof}\editf{ of the full version \cite{wang2025PIC}}.

\noindent\textbf{Remark 4} 
After decryption, the server obtains $pk_{i,j}||x'_{i,j}$, where $pk_{i,j}$ is a public key not sanitized by the local randomizer. Despite the presence of the public key, $ pk_{i,j}||x'_{i,j}$ and $x'_{i,j}$ are equivalent in terms of privacy amplification. This is because (1) the public key is random and generated independently of $x_{i,j}$, so prepending it to $x'_{i,j}$ does not affect the local DP guarantee—it is the same as $ x'_{i,j} $ itself; and (2) all public keys follow an identical probability distribution across all users in the group $G_i$, ensuring that the privacy amplification effect via shuffling is not degraded\added{ (see formal statement in Lemma \ref{lemma:shuffle1ldp}\editf{ of the full version \cite{wang2025PIC}})}.

\subsection{Discussion on Post-computation Communication}\label{sec::post-computation}
In certain scenarios, such as spatial crowdsourcing and location-based social systems, there may be additional user-to-user communications following the execution of the PIC protocol. For instance, consider a taxi-hailing application where passengers are in a group $G_1$ and taxi drivers in a group $G_2$. After receiving the matching result at the end of the protocol, the passengers must send their locations to the matched drivers, who need to know where to pick them up. On the other hand, the drivers also need to share their identities and locations with the matched passengers. Inevitably, a party has to sacrifice their privacy to the matched parties. 

The post-computation communication may also have privacy implications for other users not in the matched pair. Privacy amplification via shuffling against an adversary relies on the number of parties that remain anonymous. Recall that in Equation \ref{eq:cloneamplify}, the amplified $\epsilon_c$ relies on the number $n'_{i^*}=|G_{i^*} - C_{i^*}|$ of uncorrupted users in a particular group $G_{i^*}$. From the perspective of the adversary, if the post-computation communication compromises the anonymity of an additional set $U$ of the users in $G_{i^*}$, then $n'_{i^*}$ becomes $|G_{i^*} - C_{i^*}- U|$. Accordingly, the privacy amplification effect for users in $G_{i^*}$ that are still anonymous is weakened.\added{ In the specific case of one-to-one matching, we have $|U| \le |C_{i^*}|$ (since some corrupted user may fail to find a match). Therefore, it follows that  $n'_{i^*} \ge |G_{i^*}| - 2|C_{i^*}|$.
}

We emphasize that preventing the loss or weakening of privacy via technical means is not feasible, because the information is necessary for the proper functioning of the application. However, managerial countermeasures, such as ensuring sufficiently large user groups and limiting post-computation exposure according to the need-to-know principle, can mitigate potential privacy risks arising from post-computation communication.

\section{Optimal Randomizers}\label{sec:ldpmechanism}
{
\subsection{Inadequacy of Existing Randomizers}
In PIC, each user first sanitizes their data using an $\epsilon$-LDP randomizer. The design of the randomizer significantly impacts the utility of the tasks. While the PIC model can be seen as an extension of the shuffle model, it has unique characteristics that render the existing LDP randomizers commonly used in the shuffle model inadequate.

The main discrepancy between the shuffle model and the PIC model is that the former emphasizes statistical utility, whereas the latter focuses on the utility of each report. The shuffle model aims to estimate certain statistics from the noisy data collected from users, so it cares about how close the estimation is to the true value of the desired statistic. In the literature, utility is often measured by the expected square error (i.e., the variance) bound of the estimation:
\[
\max_{\scriptscriptstyle T \in \mathbb{X}^n}\mathbb{E}[\|\tilde{f}(T) - f(T)\|_2^2] = \max_{\scriptscriptstyle T \in \mathbb{X}^n}\textsf{Var}[\tilde{f}(T)]
\]
where $f$ is a statistical function, and $\tilde{f}$ is its estimation output by the shuffle protocol. On the other hand, in the PIC model, the tasks are often non-statistical. For example, in location-based matching, the required computation is to take two users' locations and compute the distance between them. Hence, the above utility measure is no longer suitable. It is more natural to measure utility by the single report error:
\[
\max_{x_j \in \dom{X}} \mathbb{E}[\|\mech{R}(x_j) - x_j\|_2^2],
\]
where $\mech{R}$ is the LDP-randomizer employed by the users.\added{ Specifically, the error of addition/minus among $k$ reports (e.g., distance between two reports) can also be upper bounded by $k$ times of the single report error.} Additionally, in the PIC model, user data is typically multi-dimensional (e.g., location, gradient), making sanitization significantly more difficult compared to scalar data. Another notable characteristic of the PIC model is that the local differential privacy budget $\epsilon$ is relatively large, often scaling linearly with $\log(n'_{i^*})$, as implied by Theorem \ref{the:privacy}. These factors together create issues when existing LDP randomizers are applied directly in the PIC model.

To understand the problem, we first examine a class of LDP randomizers \cite{nguyen2016collecting, wang2019collecting, liu2020fedsel, jiang2022signds} that operate by sampling a few dimensions from $[d]$. Each user submits an incomplete report that contains only the sampled dimensions (with added noise) from their local data. On the positive side, this strategy reduces the amount of noise added to the sampled dimensions. On the negative side, the unsampled dimensions are missing. In statistical estimation tasks, the estimation is made using all reports, each covering some dimensions. Therefore, the incompleteness of a single report is less important, and better utility can be achieved. However, in the PIC model, where the focus shifts to the error of individual reports, this strategy may lead to worse results.

There have been LDP randomizers that submit complete reports. One obvious strategy is to explicitly split the local budget into $d$ parts and then apply a one-dimensional LDP mechanism independently to each dimension, or implicitly distribute the budget among dimensions, as seen in the Laplace \cite{dwork2006differential}, PlanarLaplace \cite{andres2013geo}, PrivUnit \cite{bhowmick2018protection}, and PrivUnitG mechanisms \cite{asi2022optimal}. However, these approaches are sub-optimal in the high budget regime. Specifically, even if we use the optimal one-dimension randomizer \cite{balle2019privacy}, splitting the budget across each dimension and then applying any randomizer for each dimension will result in a mean squared error (MSE) of at least $\frac{d}{(e^{\epsilon/d}-1)^{2/3}}$. The Laplace/PlanarLaplace mechanisms introduce an MSE rate of $\frac{d}{\epsilon^2}$, while the PrivUnit/PrivUnitG mechanisms incur an MSE rate of $\frac{d}{\min\{\epsilon,\epsilon^2\}}$. In contrast, later we will show that the MSE rate can be improved to $(e^{\epsilon}-1)^{-2/(d+2)}$ (see Section \ref{subsec:bounds} and Appendix \ref{app:errorupper}).

Another strategy for submitting complete reports involves using additional randomization techniques such as random projection, data sketches, public randomness, or quantization. Randomizers employing this strategy \cite{smith2017interaction,feldman2021lossless,shah2022optimal,chen2022breaking,isik2024exact,asi2024fast} avoid the issue of incomplete reports but introduce additional noise because of the extra randomization. While the additional noise is not significant in the low budget regime (i.e., $\epsilon = O(1)$) that is typical in the LDP model, it becomes dominant in the PIC model where the local budget can be as large as $\tilde{O}(\log n'_{i^*})$. The resulting additional error will never diminish even when $\epsilon \to +\infty$.

\subsection{Randomizer Design}\label{subsec:mechanism}

We now introduce an asymptotically optimal randomizer, tailored for the PIC model. The randomizer uses an LDP mechanism, which we termed as \emph{Minkowski Response}. For ease of presentation, here we will focus primarily on the $\ell_2$ case (and the $\ell_{+\infty}$ case in Appendix \ref{app:errorupper}), but the mechanism can be generalized to other Minkowski distances.

Without loss of generality, we assume the user data domain to be an $\ell_{2}$-bounded hyperball  $\dom{X} = \{x\ |\ x\in \mathbb{R}^d\ and\ \|x\|_2\leq 1\}$. Most, if not all, real-world data domains can be normalized to $\dom{X}$ (e.g. gradient vector, set-valued data, and location data). We also denote an $\ell_2$-bounded hyperball with radius $r$ centered at any $x\in \mathbb{R}^d$ as follows:
\[\mathbb{B}_{r}(x)=\{x'\ |\ x'\in \mathbb{R}^d\ and\ \|x'-x\|_2\leq r\},\]
and it is shorted as $\mathbb{B}_{r}$ when $x=\vec{0}$.

Minkowski Response works by first defining a distance $r$ based on the local privacy budget as the following:
$$r=\big((e^{\epsilon}-1)^{1/(d+2)}-1\big)^{-1}.$$
Then given the input domain $\dom{X}$, the output domain $\dom{Y}_{r}$ is defined by expanding $\dom{X}$ by $r$:
\[\dom{Y}_{r}=\{y\ |\  y\in \mathbb{R}^d\ and\ \exists x\in \dom{X}\ that\ y\in \mathbb{B}_{r}(x)\}.\] 

\begin{figure}[tb]
\vspace*{-2em}
\begin{center}
\centerline{\includegraphics[width=85mm]{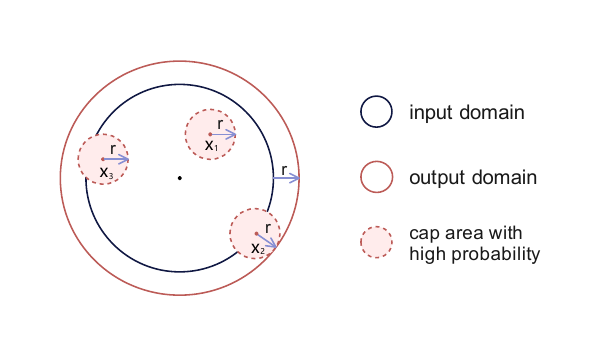}}
\vspace*{-3.0em}
\caption{The probability design of Minkowski response mechanism with a radius $r$.  Illustrated are three inputs $x_1, x_2$ and $x_3$, along with their respective cap areas.}
\label{fig:minkowski}
\end{center}
\vspace*{-2.5em}
\end{figure}

For any input $x\in \dom{X}$, Minkowski Response outputs an output $y\in \dom{Y}_{r}$ with relatively high probability in the cap area $\mathbb{B}_{r}(x)$ and relatively low probability in remaining domain $\dom{Y}_{r}\backslash \mathbb{B}_{r}(x)$ (\added{insipired by extremal probability design \cite{kairouz2016extremal} and geometric-based noise schemes \cite{fitzsimons2024private,lebeda2024better}, }see Figure \ref{fig:minkowski}). Formally:
\begin{equation}\label{eq:mr}
    y= \left\{
    \begin{array}{@{}ll@{}} 
    \text{uniform}(\mathbb{B}_{r}(x)), &\text{with prob. $\frac{\mathrm{V}(\mathbb{B}_{r})\cdot (e^\epsilon-1)}{\mathrm{V}(\dom{Y}_{r})+\mathrm{V}(\mathbb{B}_{r})\cdot (e^\epsilon-1)}$;}\\
    \text{uniform}(\dom{Y}_{r}), &\text{with prob. $\frac{\mathrm{V}(\dom{Y}_{r})}{\mathrm{V}(\dom{Y}_{r})+\mathrm{V}(\mathbb{B}_{r})\cdot (e^\epsilon-1)}$,}\\
    \end{array} \right. 
\end{equation}
where $V(*)$ denote the volume of the corresponding domain.

Lastly, $y$ is debiased to $\tilde{x}$ so that $\mathbb{E}[\tilde{x}]={x}$ as follows:
\begin{equation}\label{eq:debias}
\tilde{x}=y\cdot\frac{\mathrm{V}(\dom{Y}_{r})+\mathrm{V}(\mathbb{B}_{r})\cdot (e^\epsilon-1)}{V(\mathbb{B}_{r})\cdot (e^\epsilon-1)}.
\end{equation}

 It is obvious that the output of Minkowski Response is informative in every dimension. Therefore it avoids problems brought up by incomplete reports. Also, intuitively it has a better utility because the mechanism is more likely to output a value near the true value (i.e., in $\mathbb{B}_{r}(x)$), than from other parts of the output domain (i.e. $\dom{Y}_{r}\backslash \mathbb{B}_{r}(x)$). When the budget $\epsilon$ gets large, the error rate of Minkowski Response decays faster than in previous LDP mechanisms. More specifically, the decay rate of Minkowski Response is $(e^{\epsilon}-1)^{-2/(d+2)}$ (Equation \ref{eq:ldpmse} in Appendix \ref{app:errorupper}), while that of the previous mechanisms is $d/(e^{\epsilon/d}-1)^{2/3}$ or $d/\epsilon^2$. Therefore, the utility advantage of Minkowski Response becomes more significant when $\epsilon$ gets larger. When $n'_{i^*}\to +\infty$ (and thus $\epsilon\to +\infty$), $r$ becomes 0, and the error goes to zero (i.e. no additional error).

\subsection{Analysis of Minkowski Response}\label{subsec:bounds}
The local privacy guarantee of the randomizer is presented in Theorem \ref{the:ldp}.


\begin{theorem}[Local Privacy Guarantee]\label{the:ldp}
Given input domain $\dom{X}=\mathbb{B}_{1}$, the Minkowski response mechanism defined in Equation \ref{eq:mr} satisfies $\epsilon$-LDP.    
\end{theorem}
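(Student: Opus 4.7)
The plan is to compute the output density of the Minkowski response at an arbitrary point $y$ given input $x$, then bound the ratio of densities for any two inputs $x,x' \in \dom{X}$.

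First, I would observe that since $\dom{X} = \mathbb{B}_1$, by the definition of $\dom{Y}_r$ we have $\mathbb{B}_r(x) \subseteq \dom{Y}_r$ for every $x \in \dom{X}$. Hence the support of the output for any input is contained in the common set $\dom{Y}_r$. Let $C := \mathrm{V}(\dom{Y}_r) + \mathrm{V}(\mathbb{B}_r)(e^\epsilon - 1)$, so that the two branch probabilities in Equation~\ref{eq:mr} are $p_1 = \mathrm{V}(\mathbb{B}_r)(e^\epsilon-1)/C$ and $p_2 = \mathrm{V}(\dom{Y}_r)/C$, which sum to $1$.

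Next, I would compute the density of the output $y$ given input $x$ by combining the two uniform branches. For $y \in \mathbb{B}_r(x)$ the density is
\[
p_1 \cdot \frac{1}{\mathrm{V}(\mathbb{B}_r)} + p_2 \cdot \frac{1}{\mathrm{V}(\dom{Y}_r)} = \frac{(e^\epsilon - 1) + 1}{C} = \frac{e^\epsilon}{C},
\]
while for $y \in \dom{Y}_r \setminus \mathbb{B}_r(x)$ only the second branch contributes, giving density $1/C$. Thus the density takes only two values, $e^\epsilon/C$ and $1/C$, depending on whether $y$ lies in the ``cap'' around $x$ or not.

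Finally, I would conclude $\epsilon$-LDP. For any $x, x' \in \dom{X}$ and any measurable $S \subseteq \dom{Y}_r$, the ratio of the probability densities at every point of $\dom{Y}_r$ is at most $(e^\epsilon/C)/(1/C) = e^\epsilon$, and hence $\Pr[\mech{R}(x) \in S] \leq e^\epsilon \Pr[\mech{R}(x') \in S]$. Since both inputs share the same output support $\dom{Y}_r$, this yields $D_\epsilon(\mech{R}(x) \| \mech{R}(x')) = 0$, i.e. $\epsilon$-LDP. The only mildly subtle step is verifying that $\dom{Y}_r$ is indeed the common support so that no zero-vs-positive density mismatch arises; this is ensured by the definition $\dom{Y}_r = \bigcup_{x\in\dom{X}} \mathbb{B}_r(x)$, which makes the argument essentially a two-point ratio check.
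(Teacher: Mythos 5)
Your proof is correct and follows essentially the same route as the paper: you compute that the output density equals $e^\epsilon/C$ inside the cap $\mathbb{B}_r(x)$ and $1/C$ elsewhere in $\dom{Y}_r$, note that $\dom{Y}_r$ is the common support for all inputs, and bound the pointwise density ratio by $e^\epsilon$. The only difference is that you write out the algebra showing the cap density is the sum of the two branch contributions, whereas the paper simply states the two density values.
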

\begin{proof}
It is observed that the output probability distribution in Equation \ref{eq:mr} is valid for any input $x\in \dom{X}$, the probability density in the cap area $\dom{B}_{r}(x)$ is $\frac{e^\epsilon}{\mathrm{V}(\dom{Y}_{r})+\mathrm{V}(\mathbb{B}_{r})\cdot (e^\epsilon-1)}$, and the density in the non-cap area $\dom{Y}_r\backslash\dom{B}_{r}(x)$ is $\frac{1}{\mathrm{V}(\dom{Y}_{r})+\mathrm{V}(\mathbb{B}_{r})\cdot (e^\epsilon-1)}$. Therefore, for any $x,x'\in \dom{X}$, we have $\frac{\mathbb{P}[\mech{R}(x)=y]}{\mathbb{P}[\mech{R}(x')=y]}\leq e^\epsilon$ for all possible $y\in \dom{Y}_r$, establishing the local $\epsilon$-DP guarantee of the Minkowski response mechanism $\mech{R}$.     
\end{proof}

Next, we analyze the utility of the Minkowski Response in the PIC model. As previously mentioned, in the PIC model, the single report error is a more appropriate measure of utility compared to the statistical errors used in the conventional shuffle model. In Theorem \ref{the:errorlower}, we examine the single report error in the PIC model and establish its lower bound. \added{Essentially, shuffling and its privacy amplification effects complicate error lower bounding in PIC model given privacy constraints, when compared to local DP settings (e.g., in \cite{bhowmick2018protection,duchi2018minimax,duchi2019lower}). Fortunately, the global differential privacy restricts the probability ratio of any event observed in the shuffled messages when given two neighboring input datasets \cite{cheu2019distributed}. Therefore, inspired by the tight lower bounding procedure for one-dimensional data in \cite{balle2019privacy}, we firstly discrete the mutli-dimensional input/output domain and decompose the single report error into two parts: the one due to the probability of reporting other values than a certain input (i.e., one minus the true positive rate), and the other due to probabilities of reporting a certain value when given other input values (i.e., the false positive rate). We then establish a connection between the global DP parameters, and errors due to \emph{one minus the true positive rate} \& \emph{the false positive rate}. Finally, we show that at least one of these two errors must be as large as $1/n^{\frac{2}{d+2}}$.} The detailed proof is provided in Appendix \ref{app:errorlower}.

\begin{theorem}[Error Lower Bounds]\label{the:errorlower}
Given \added{fixed} \replaced{$d\in \mathbb{N}^+$}{$d\in \mathbb{N}$}, $\epsilon_c>0$, $\delta\in (0,0.5]$, $\dom{X}=\mathbb{B}_{1}(\{0\}^d)$, then for any randomizer $\mech{R}:\dom{X}\mapsto \mathbb{R}^{d'}$ such that $\mech{S}\circ\mech{R}(X)$ and $\mech{S}\circ\mech{R}(X')$ are $(\epsilon_c,\delta)$-indistinguishable for all possible neighboring datasets $X,X'\in \dom{X}^n$, and for any estimator $f:\mathbb{R}^{d'}\mapsto \mathbb{R}^{d}$, we have: $$\max_{x\in \dom{X}}\mathbb{E}\big[\|f\circ\mech{R}(x)-x\|_2^2\big] \geq \tilde{\Omega}\big({1}/{n^{\frac{2}{d+2}}}\big).$$
\end{theorem}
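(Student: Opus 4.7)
The plan is to prove the lower bound via a multi-way hypothesis testing argument built on a packing of the input ball, lifting the one-dimensional technique of \cite{balle2019privacy} to $d$ dimensions. First, I would construct a packing $\{x_1, \ldots, x_M\} \subset \dom{X}$ of size $M = \Theta((1/r)^d)$ with pairwise separation at least $r$, and partition the estimator's output range $\mathbb{R}^d$ into decision cells $C_1, \ldots, C_M$ around each center, where the packing radius $r$ is a free parameter to be optimized at the end.

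For each input $x_k$, I would decompose the single-report squared error using the true-positive rate $p_{k,k} := \Pr[f(\mech{R}(x_k)) \in C_k]$ and the false-positive rates $p_{k,\ell} := \Pr[f(\mech{R}(x_k)) \in C_\ell]$ for $\ell \neq k$. Any output landing outside the correct cell contributes at least $\Omega(r^2)$ to the squared error, giving $\mathbb{E}[\|f \circ \mech{R}(x_k) - x_k\|_2^2] \geq \Omega(r^2)(1 - p_{k,k})$; conversely, when the false-positive rate $p_{\ell,k}$ is large for distant $x_\ell$ mapping into $C_k$, the estimator incurs squared error of order $\Omega(1)$ on input $x_\ell$. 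The worst-case error across the packing therefore lower-bounds the maximum of these two quantities over $k$.

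To connect the TP/FP rates to the global $(\epsilon_c, \delta)$-DP constraint, I would fix the inputs of $n-1$ users at arbitrary packing centers and vary the remaining designated user's input between neighboring values $x_k$ and $x_{k'}$. Applying $(\epsilon_c, \delta)$-indistinguishability to the count statistic ``number of shuffled outputs whose estimator image falls in cell $C_\ell$'' and exploiting the permutation invariance of the shuffle, I aim to show that $|p_{k,\ell} - p_{k',\ell}|$ is bounded by roughly $\tilde{O}((e^{\epsilon_c}-1)/\sqrt{n}) + \delta$, via a binomial concentration argument dual to the amplification lemma of \cite{feldman2021hiding}. Combined with the normalization $\sum_\ell p_{k,\ell} = 1$ across $M$ cells, this forces either $1 - p_{k,k} = \Omega(1 - 1/M)$ for typical $k$ or a nontrivial false-positive mass on some other cells, yielding a lower bound in terms of $r$ and $M$.

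Setting $M = (1/r)^d$ and balancing the two error contributions, I expect $r = \tilde{\Theta}(n^{-1/(d+2)})$ to equate them, giving the claimed $\tilde{\Omega}(1/n^{2/(d+2)})$ bound. The main obstacle will be this third step: converting the multi-user shuffled DP constraint into a sharp per-cell bound on the single-user marginals $p_{k,\ell}$ with the correct $n$-dependence. This is effectively a converse to privacy amplification via shuffling; while amplification in the forward direction is well understood, inferring local properties of $\mech{R}$ from global shuffled DP requires careful test-event construction and tight binomial tail bounds, which I plan to adapt from the event-based one-dimensional analysis of \cite{balle2019privacy} while controlling polylogarithmic factors when lifting to $d$ dimensions.
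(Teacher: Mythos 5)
Your overall skeleton --- a packing/grid of $\dom{X}$, a true-positive/false-positive decomposition of the per-report error, a DP constraint linking these rates, and a final balance at $r \asymp n^{-1/(d+2)}$ --- matches the paper's Steps 1--3 (the paper works on the $[0,1]^d$ cube with an $L^d$-grid after first clamping the estimator into the domain). The divergence, and the genuine gap, is exactly the step you flag as the main obstacle. The bound you hope to prove, $|p_{k,\ell} - p_{k',\ell}| \lesssim (e^{\epsilon_c}-1)/\sqrt{n} + \delta$, is not a consequence of shuffled $(\epsilon_c,\delta)$-DP, and it is in fact false for randomizers that do satisfy it. Take the extremal-style randomizer with $p_{k,k}=1/2$ and $p_{k,\ell}=\frac{1}{2(M-1)}$ for $\ell\neq k$: this is $\log(M-1)$-LDP, so shuffling amplifies it to $\tilde O(\sqrt{M/n})$-DP, and the shuffled constraint holds with constant $\epsilon_c$ once $n=\tilde\Omega(M)$; yet $|p_{k,k}-p_{k',k}|\approx 1/2$ independently of $n$. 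Moreover, if such an additive bound did hold, the normalization $\sum_\ell p_{k,\ell}=1$ would force $1-p_{k,k}$ to be bounded below by a constant for every $k$ regardless of $r$, pushing the lower bound toward a constant and contradicting the $\tilde O(n^{-2/(d+2)})$ upper bound achieved by the paper's own Minkowski mechanism.

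What the paper establishes instead is a \emph{dichotomy}, not a small additive difference. It applies the $(\epsilon_c,\delta)$ hockey-stick inequality to the single ``empty-cell'' event $E_{\mathbf{l}} = \{\mech{S}\circ\mech{R}(T)\cap G(\mathbf{l})=\emptyset\}$, whose probability factors cleanly: $(1-p_0)^n$ for the all-$x''$ dataset versus $(1-p_1)(1-p_0)^{n-1}$ when one element is changed to $x$. The constraint $(1-p_0)^n \le e^{\epsilon_c}(1-p_1)(1-p_0)^{n-1}+\delta$ then yields: either $1-p_{\mathbf{l},\mathbf{l}} > \tfrac{1}{2}e^{-\epsilon_c}$, or $\min_{\mathbf{l''}}p_{\mathbf{l''},\mathbf{l}} \ge (1/2-\delta)/n$. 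This is exactly \cite[Lemma 4.5]{balle2019privacy}, which the paper cites and reuses; it does not imply that $p_{k,\ell}$ and $p_{k',\ell}$ are $\tilde O(1/\sqrt n)$-close. If you replace your Step 3 with this dichotomy, feed the two branches into your error decomposition (the TP branch gives a per-cell contribution $\Omega(e^{-\epsilon_c} r^2)$, the FP branch gives an aggregated contribution $\Omega(M/n)$ weighted by squared cell distances), and then balance, the rest of your plan goes through and recovers $\tilde\Omega(n^{-2/(d+2)})$.
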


The above theorem suggests that in the PIC model\added{ with fixed privacy requirements of $(\epsilon,\delta)$ independent of group size $n$}, for any randomizer, the single report error is at least $\tilde{\Omega}\big({1}/{n^{\frac{2}{d+2}}}\big)$. Clearly, a randomizer offers better utility if its error is closer to this bound. For the randomizer using Minkowski Response, as presented in section \ref{subsec:mechanism}, we can demonstrate that its single report error has an upper bound. This is summarized in Theorem \ref{the:errorupper}, with the proof provided in Appendix \ref{app:errorupper}.

\begin{theorem}[Error Upper Bounds]\label{the:errorupper}
For any $d\in \mathbb{N}$, $\epsilon_c>0$, $\delta\in (0,0.5]$, $\dom{X}=\mathbb{B}_{1}(\{0\}^d)$, if $\epsilon_c\leq O(1)$ and $n> \max\{16\log(1/\delta),\frac{2^{d+7}\log(1/\delta)}{(e^{\epsilon_c}-1)^2}\}$, then there exist a randomizer $\mech{R}:\dom{X}\mapsto \mathbb{R}^d$ such that $\mech{S}\circ\mech{R}(X)$ and $\mech{S}\circ\mech{R}(X')$ are $(\epsilon_c,\delta)$-indistinguishable for all possible neighboring datasets $X,X'\in \dom{X}^n$, and:
\[\max_{x\in \dom{X}} \mathbb{E}\big[\|\mech{R}(x)-x\|_2^2\big] \leq O\Bigg(\Big(\frac{\log(1/\delta)}{n \epsilon_c^2 }\Big)^\frac{2}{d+2}\Bigg).\]
\end{theorem}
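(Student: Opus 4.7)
The plan is to take the randomizer $\mech{R}$ to be the Minkowski Response mechanism from Section \ref{subsec:mechanism}, instantiated with a carefully chosen local budget $\epsilon$, followed by the debiasing step in Equation \ref{eq:debias}. The proof then reduces to two essentially independent tasks: (i) verifying that shuffling $n$ reports from this mechanism yields $(\epsilon_c,\delta)$-indistinguishability, and (ii) controlling the single-report expected squared error in terms of the parameter $r$.

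For task (i), I would first invert the amplification bound from Theorem \ref{the:privacy}. Since $n\ge 16 e^{\epsilon}\log(2/\delta)$ is required, and since Equation \ref{eq:cloneamplify} behaves (for $\epsilon_c = O(1)$) like $\epsilon_c \asymp \sqrt{e^{\epsilon}\log(1/\delta)/n}$, I would solve for the largest admissible local budget and set $e^{\epsilon}-1 = \Theta\bigl(n\epsilon_c^2/\log(1/\delta)\bigr)$. The hypothesis $n > \max\{16\log(1/\delta),\, 2^{d+7}\log(1/\delta)/(e^{\epsilon_c}-1)^2\}$ is precisely what I need both to satisfy the side condition of Theorem \ref{the:privacy} and to keep $r$ small (in particular $r<1$), so that the ball volumes behave cleanly.

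For task (ii), I would compute $\mathbb{E}\bigl[\|\tilde{x}-x\|_2^2\bigr]$ directly from Equation \ref{eq:mr} and Equation \ref{eq:debias}. Writing $p_1$ for the probability of the ``cap'' branch, unbiasedness gives $\tilde{x} = y/p_1$, hence
\[
\mathbb{E}\bigl[\|\tilde{x}-x\|_2^2\bigr] \;=\; \frac{1}{p_1^2}\Bigl(p_1\,\mathbb{E}[\|y-p_1 x\|_2^2\mid\text{cap}]+p_2\,\mathbb{E}[\|y-p_1 x\|_2^2\mid\text{whole}]\Bigr).
\]
Using $V(\mathbb{B}_r)/V(\dom{Y}_r)=\bigl(r/(1+r)\bigr)^d$ together with the choice of $r$ that makes $(1+1/r)^{d+2}=e^{\epsilon}-1$, I obtain $p_1/p_2 = (1+1/r)^2$, so $p_1\to 1$ and $p_2\asymp r^2$ as $r\to 0$. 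Both conditional second-moment terms are $O((1+r)^2)=O(1)$, so the whole expression collapses to $O(r^2)$, i.e.\ $O\bigl((e^{\epsilon}-1)^{-2/(d+2)}\bigr)$. Substituting the value of $e^{\epsilon}$ chosen in task (i) yields exactly the claimed bound $O\bigl((\log(1/\delta)/(n\epsilon_c^2))^{2/(d+2)}\bigr)$.

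The main obstacle is the second-moment bookkeeping in task (ii): one must be careful that the "whole domain" branch contributes a variance of order $p_2/p_1^2$ rather than something larger, and that cancellations when subtracting $p_1 x$ from $y$ in each branch really do leave a $O(r^2)$ contribution from the cap branch and an $O(r^2)$ contribution (after dividing by $p_1^2\approx 1$) from the full-domain branch. Matching the algebraic identity $(r/(1+r))^{d+2}=1/(e^{\epsilon}-1)$ to the volume ratio is the key computation that makes both contributions scale identically, and it is what ultimately drives the $(e^{\epsilon}-1)^{-2/(d+2)}$ rate that, after amplification, becomes the theorem's bound.
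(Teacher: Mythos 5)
Your proposal is correct and follows essentially the same path as the paper's proof in Appendix~\ref{app:errorupper}: use the Minkowski Response with $r$ chosen so that $(1+1/r)^{d+2}=e^{\epsilon}-1$, bound the single-report MSE by $O(r^2)=O\bigl((e^{\epsilon}-1)^{-2/(d+2)}\bigr)$, then pick $e^{\epsilon}=\Theta\bigl(n\epsilon_c^2/\log(1/\delta)\bigr)$ via the amplification bound and use the $n$-lower-bound hypothesis to ensure $r\le 1$ and the side condition of Theorem~\ref{the:privacy}. Your per-branch decomposition $\mathbb{E}\bigl[\|y-p_1x\|_2^2\bigr]=p_1\mathbb{E}[\cdot\mid\text{cap}]+p_2\mathbb{E}[\cdot\mid\text{whole}]$ is the same quantity the paper writes as $\textsf{Var}[y\mid x]=\mathbb{E}[\|y\|_2^2]-\beta^2\|x\|_2^2$ (with $\beta=p_1$); just note that the intermediate claim ``both conditional second-moment terms are $O(1)$ so the expression collapses to $O(r^2)$'' is not, by itself, a valid inference---as you correctly observe at the end, you must use the sharper $O(r^2)$ estimate on the cap branch (via $\mathbb{E}[y-x\mid\text{cap}]=0$) and the $p_2\asymp r^2$ prefactor on the full-domain branch.
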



We observe that in most applications, $(\epsilon_c, \delta)$ are given as fixed system parameters. If we consider $(\epsilon_c, \delta)$ as constants, then the error upper bound of the Minkowski Response randomizer in Theorem \ref{the:errorupper} is $\tilde{O}\left(\frac{1}{n^{2/(d+2)}}\right)$, which matches the error lower bound of the PIC model in Theorem \ref{the:errorlower}. This implies that the utility of the Minkowski Response is asymptotically optimal. In contrast, using existing LDP randomizers in the PIC model results in a larger single report error of $\tilde{O}\left(\frac{d}{n^{2/(3d)}}\right)$ or $\tilde{O}\left(\frac{d}{\log^2 n}\right)$ (note that $d > 1$ and $n$ is often not small). Although asymptotic notations describe behavior as $n \rightarrow \infty$, in practice, the utility advantage of the Minkowski Response becomes noticeable without $n$ being very large: in our experiments, the Minkowski Response randomizer outperforms existing LDP randomizers in the PIC model once $n$ reaches the order of $10^2$. If we compare Minkowski Response in the PIC model to using LDP directly (without shuffling), the utility advantage is even greater: any randomizers in the LDP model must endure a single report error of $\Omega\left(\frac{d}{\epsilon_c^2}\right)$ when $\epsilon_c \leq O(1)$ \cite{bhowmick2018protection,duchi2019lower}.

\section{Experimental Evaluation}\label{sec:exp}
{
We evaluate the efficacy of our PIC protocol and Minkowski randomizer. We compare the utility of our proposal against state-of-the-art works in the context of three representative individual computation tasks: spatial crowdsourcing, location-based social systems, and federated learning with incentives.

\subsection{Spatial Crowdsourcing}\label{sec:exp:crowdsouring}
\noindent\textbf{Datasets} We use two real-world datasets: GMission dataset \cite{chen2014gmission} for scientific simulation, and EverySender dataset \cite{tong2016online} for campus-based micro-task completion. Details about the two datasets are summarized in Table \ref{tab:scdatasets}, including the number of users/workers, location domain range, and serving radius of workers about these two datasets (more information can be found in Appendix \ref{app:expSC}\editf{ of the full version \cite{wang2025PIC}}). We normalize location data to the domain $[-1,1]\times[-1,1]$ and scale the serving radius to $1.0\cdot\frac{1-(-1)}{5-0}=0.4$ correspondingly.

\begin{table}[h]
\vspace*{-1.0em}
\caption{Summary statistics of spatial crowdsourcing datasets.}
\setlength{\tabcolsep}{0.3em}
\renewcommand{\arraystretch}{1.1}
\label{tab:scdatasets}
\small
\centering
\begin{tabular}{c|c|c|c|c}
\hline
\textbf{Dataset} & \textbf{users} & \textbf{workers} & \textbf{location domain} & \textbf{serving radius} \\ \hline
GMission         & 713                    & 532                      & $[0,5.0]\times[0,5.0]$   & $1.0$                 \\ \hline
EverySender      & 4036                   & 817                      & $[0,5.0]\times[0,5.0]$   & $1.0$                 \\ \hline
\end{tabular}
\end{table}

\noindent\textbf{LDP randomizers} We use the Minkowski response for location randomization. As a comparison, in the local model of DP (e.g., in \cite{wang2017location,to2018privacy,wang2022privacy}), we compare with existing mechanisms including Laplace  \cite{dwork2008differential}, Staircase \cite{geng2015staircase}, PlanarLaplace with geo-indistinguishability \cite{andres2013geo}, SquareWave \cite{li2020estimating}, PrivUnit \cite{bhowmick2018protection} and its Gaussian variant PrivUnitG in \cite{asi2022optimal}\added{ (see Appendix \ref{app:mechdetail} for implementation details)}. 

\noindent\textbf{Server-side algorithms} Two commonly used server-side algorithms are evaluated as concrete tasks: minimum weighted full matching \cite{jonker1988shortest} and maximum matching \cite{hopcroft1973n}. The two algorithms have different optimization objectives, thus later we will show the results for task-specific utility for each of them in addition to single report errors. The minimum weighted full matching aims to minimize the overall traveling costs between users and workers. The maximum matching aims to maximize the number of successfully matched user/worker pairs, where users outside the workers' serving radius ($0.4$) are deemed unreachable.

\noindent\textbf{Experimental results} We first present the single report errors, quantified by the expected $\ell_2$ distance between the reported location and the true location. In Fig. \ref{fig:SCdistance} (a) and (b), we compare the single report errors within the LDP model. The Minkowski randomizer's error is on par with other state-of-the-art mechanisms when $\epsilon \leq 1.0$ and significantly lower when $\epsilon \geq 2.0$. Fig. \ref{fig:SCdistance} (c) and (d) illustrate the single report errors in the PIC model. For both datasets, the Minkowski randomizer performs better, with its utility advantage increasing as the global privacy budget $\epsilon_c$ grows. Additionally, it is evident that the error is generally higher for all randomizers in the LDP model compared to the PIC model. This discrepancy is due to the lack of privacy amplification in the LDP model, highlighting the benefits of employing the PIC model.

\begin{figure}[tb]
\begin{center}
\centerline{\includegraphics[width=83mm]{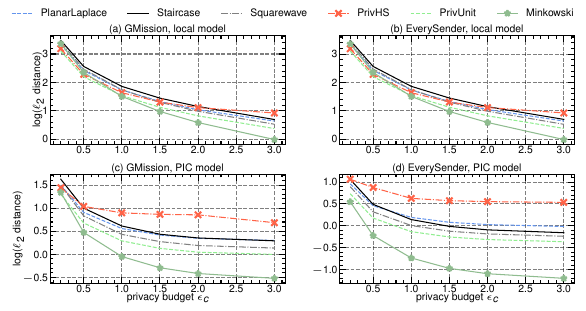}}
\vspace*{-1.5em}
\caption{Expected $\ell_2$ distances of reported locations to true locations on GMission and EverySender dataset.}
\label{fig:SCdistance}
\end{center}
\vspace*{-3.0em}
\end{figure}

Next, we compare the utility when using the minimum weight matching algorithm, as shown in Fig. \ref{fig:SCcost}. Task-specific utility is evaluated by the total travel costs, which is the sum of the actual Euclidean distances between all matched users/workers:
\[
\sum\nolimits_{(i,j)\in G_a\times G_b} \llbracket M(i,j)>0 \rrbracket \cdot \|l_i - l_j\|_2,
\]
where $\llbracket * \rrbracket$ denotes the Iverson bracket. It is observed that although some randomizers, like PrivUnit, exhibit good single-report errors, their task-specific utility is not as favorable. Conversely, the Minkowski randomizer shows consistent performance, outperforming the others in this comparison.

\begin{figure}[!htb]
\begin{center}
\centerline{\includegraphics[width=83mm]{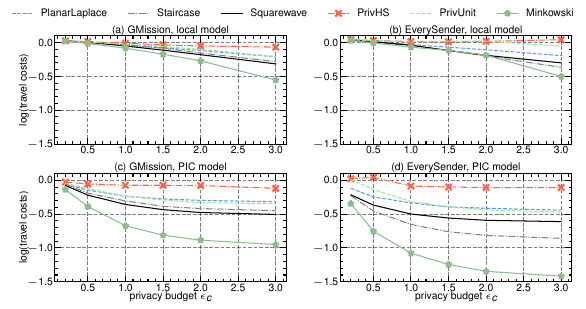}}
\vspace*{-1.5em}
\caption{Travel costs of minimum weighted matching on GMission and EverySender dataset.}
\label{fig:SCcost}
\end{center}
\vspace*{-2.0em}
\end{figure}

Finally, we present the utility comparison using the maximum matching algorithm, as shown in Fig. \ref{fig:SCsuccess}. In this case, utility is assessed by the successful matching ratio:
\[
\frac{\sum_{(i,j)\in G_a \times G_b} \llbracket M(i,j)>0 \rrbracket \cdot \llbracket \|l_i - l_j\|_2 \leq \tau \rrbracket}{\min\{|G_a|, |G_b|\}}.
\]
For both datasets, the matching ratio over clear data is 100\%. The figure demonstrates that the PIC model enhances the matching ratio for all randomizers due to privacy amplification effects. The Minkowski randomizer in the PIC model significantly outperforms the others and the matching ratio approaches an acceptable level for practical use with a reasonable degree of privacy protection.

\begin{figure}
\begin{center}
\centerline{\includegraphics[width=83mm]{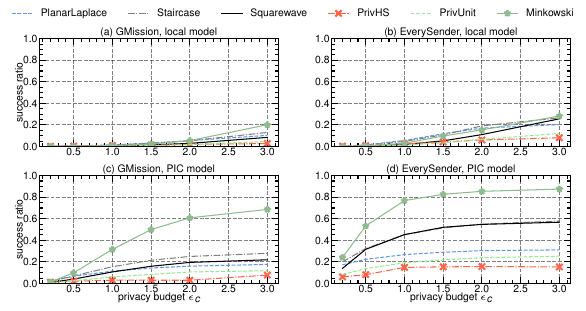}}
\vspace*{-1.5em}
\caption{Success ratios of maximum matching in spatial crowdsourcing on GMission and EverySender dataset.}
\label{fig:SCsuccess}
\end{center}
\vspace*{-2.0em}
\end{figure}

\subsection{Location-based Social Systems}
\begin{table}[ht]
\vspace*{-1.5em}
\caption{Details of location social network datasets.}
\setlength{\tabcolsep}{0.05em}
\renewcommand{\arraystretch}{1.1}
\label{tab:lbsdatasets}
\small
\begin{tabular}{c|c|c}
\hline
\textbf{Dataset} & \textbf{check-ins} & \textbf{location domain}\\ \hline
Gowalla{ \scriptsize(San Francisco)}         & 138368                                  & $[37.54, 37.79]$$\times$$[-122.51, -122.38]$                \\ \hline
Foursquare{ \scriptsize(New York)}                 & 227428                      & $[40.55, 40.99]$$\times$$[-74.27, -73.68]$                \\ \hline
\end{tabular}
\end{table}
\noindent\textbf{Datasets} We use two real-world datasets: Gowalla dataset \cite{cho2011friendship} and Foursquare dataset \cite{yang2014modeling}.  Gowalla and Foursquare are location-based social network websites where users share their locations by checking-in. Details about the two datasets are summarized in Table \ref{tab:lbsdatasets}. As before, the location data is normalized to  $[-1,1]\times[-1,1]$.

\noindent\textbf{LDP randomizers} The LDP randomizers used are the same as those in Section \ref{sec:exp:crowdsouring}.

\noindent\textbf{Server-side algorithm} The server performs the radius-based nearest neighbor (NN) search for the users, which is a common task in location-based social networks \cite{cho2011friendship}. We set the search radius $\tau$ to 0.2, so that each user has several hundreds or thousands of neighbors, varying due to check-in densities. Note that in this application scenario, the returned neighbors may be deanonymized in the post-computation phase. Hence, the actual privacy guarantee in the PIC model depends on the number of users who remain anonymous (see discussion in Section \ref{sec::post-computation}). Considering this, and each user normally has about $n\cdot\frac{\pi\cdot\tau^2}{2^2}\approx n\cdot 3.2\%$ neighbors, we use privacy amplification population $\lfloor n\cdot 90\%\rfloor$ instead of the group size $n$ when calculating the local budget given the global $\epsilon_c$.

\begin{figure}[!htb]
\begin{center}
\centerline{\includegraphics[width=83mm]{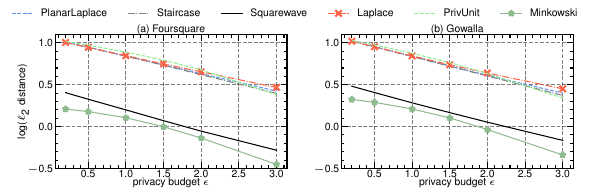}}
\vspace*{-1.5em}
\caption{Expected $\ell_2$ distances of reported locations in location-based social systems with the local model of DP.}
\label{fig:LBSdistancelocal}
\end{center}
\vspace*{-2.0em}
\end{figure}

\begin{figure}[!htb]
\begin{center}
\centerline{\includegraphics[width=83mm]{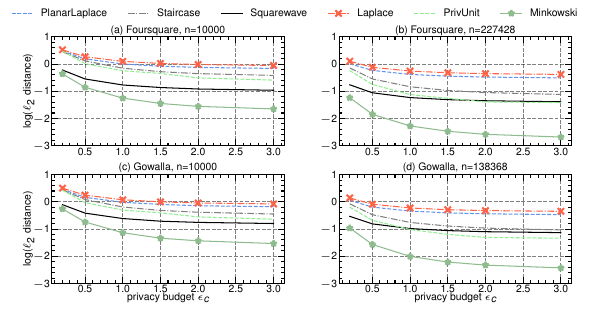}}
\vspace*{-1.5em}
\caption{Expected $\ell_2$ distances of reported locations in location-based social systems with PIC model.}
\label{fig:LBSdistanceshuffle}
\end{center}
\vspace*{-2.0em}
\end{figure}

\noindent\textbf{Experimental results} We first present the single report errors in the LDP model (Fig. \ref{fig:LBSdistancelocal}) and the PIC model (Fig. \ref{fig:LBSdistanceshuffle}). In our experiments, each check-in report is treated as if it were submitted by a separate user, with all users in the same group. We tested two scenarios: one with a random subset of 10,000 check-ins and the other using all check-ins. On both datasets, PIC demonstrates better utility than LDP, and the Minkowski randomizer consistently performs the best across all settings. Additionally, it is evident that the number of anonymous users is a crucial parameter, significantly impacting utility.

\begin{figure}[!htb]
\begin{center}
\centerline{\includegraphics[width=83mm]{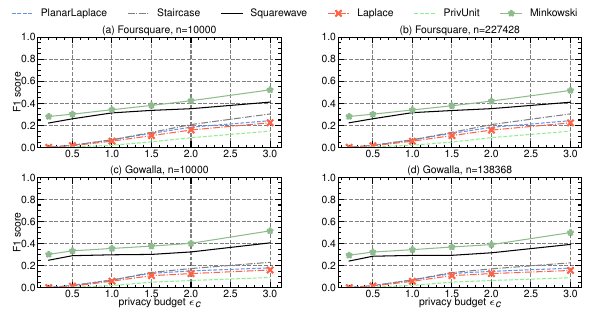}}
\vspace*{-1.5em}
\caption{F1 scores of nearest neighbor queries (LDP model).}
\label{fig:LBSf1local}
\end{center}
\vspace*{-2.0em}
\end{figure}

\begin{figure}[!htb]
\begin{center}
\centerline{\includegraphics[width=83mm]{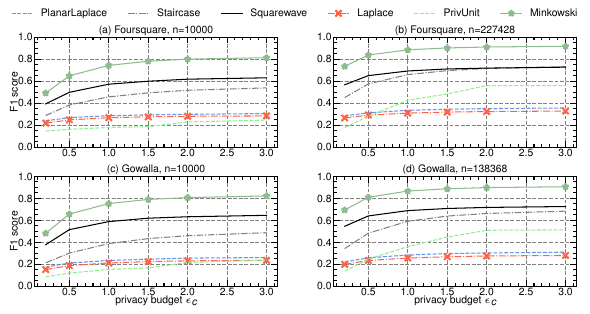}}
\vspace*{-1.5em}
\caption{F1 scores of nearest neighbor queries (PIC model).}
\label{fig:LBSf1shuffle}
\end{center}
\vspace*{-2.0em}
\end{figure}

The task-specific utility metric we use for nearest neighbor queries is the $F_1$ score. Let $N_i$ denote the set of true neighbors of user $i$ within an $\ell_2$-distance $\tau$, and let $\widehat{N}_i$ denote the retrieved neighbor set computed using the sanitized reports. The precision of nearest neighbor queries is defined as 
$
{\sum_{i\in [n]} |N_i \cap \widehat{N}_i|}/{\sum_{i\in [n]} |\widehat{N}_i|}
$,
and the recall is defined as 
$
{\sum_{i\in [n]} |N_i \cap \widehat{N}_i|}/{\sum_{i\in [n]} |N_i|}
$.
The $F_1$ score is as:
\[
F_1 \text{ score} = 2 \cdot \frac{\text{precision} \cdot \text{recall}}{\text{precision} + \text{recall}}.
\]

In the LDP model (Fig. \ref{fig:LBSf1local}), the $F_1$ score is low, even with a large $\epsilon$. While the group size affects the number of neighbors each user has, it does not impact the $F_1$ score. Conversely, in the PIC model (Fig. \ref{fig:LBSf1shuffle}), the $F_1$ score is significantly higher. With a large group, the Minkowski randomizer can achieve an $F_1$ score of 0.9 with a stringent global budget $\epsilon_c=1$ and nearly 1 if the budget is loosened to $\epsilon_c=3$.

\subsection{Federated Learning with Incentives}
In this application, the users collaboratively train a model by Federated learning, and the server decides each user's incentive by how much their local gradient contributes to the global model. To deliver monetary incentive rewards, we assume the use of an untraceable cryptocurrency (e.g., ZCash \cite{hopwood2016zcash}). Each user can  generate an additional public/privacy key pair according to the specification of the cryptocurrency, derive a wallet address from the public key, and append the wallet address to the end of their report. Based on the Shapley values, the server determines the monetary incentives and distributes them to the users using appended wallet addresses.

\noindent\textbf{Datasets} We utilize the MNIST handwritten digit dataset to train a simple neural network using federated learning. The MNIST dataset comprises 60,000 images, with 50,000 designated as training samples. Each user receives one training sample and trains a neural network model, as detailed in Table \ref{tab:cnn}, which contains $d = 4292$ trainable parameters. In each round, $s = 10,000$ users are randomly selected. These selected users locally compute the gradient vector using Stochastic Gradient Descent (SGD), then subsample 0.15\% of the gradient vector dimensions (with unsampled dimensions set to zero), and clip the gradient values at a threshold of $c = 0.00015$. Each user submits a sanitized version of the subsampled, clipped gradient vector as their report.

\begin{table}[]
\caption{Model architecture of the neural network.}
\label{tab:cnn}
\centering
\fontsize{8}{9}\selectfont
\begin{tabular}{|c|c|}
\hline
\textbf{Layer} & \textbf{Parameters} \\ \hline
Convolution & 8 filters of 4$\times$4, stride 2 \\ \hline
Max-pooling & 2$\times$2 \\ \hline
Convolution & 6 filters of 5$\times$5, stride 2 \\ \hline
Max-pooling & 2$\times$2 \\ \hline
Softmax & 10 units \\ \hline
\end{tabular}
\vspace*{-0.5em}
\end{table}

\noindent\textbf{LDP randomizers} We use all LDP randomizers as before except for PlanarLaplace, which is unsuitable for multi-dimensional gradient vectors. The randomizers sanitize the non-zero values in the subsampled gradient vectors while leaving the zero values unaffected.

\noindent \textbf{Server-side algorithm} The server performs the following steps: first, it takes a sanitized gradient vector and uses randomizer-specific algorithms to estimate the true values of the sampled dimensions. These vectors with estimated values are denoted as $\hat{g}_i$. The server then aggregates the $\hat{g}_i$ vectors into a global gradient vector using a simple sum-and-average method. This global gradient vector is published so that users can update their local models. Additionally, the server computes the Shapley value for each user, which measures the marginal contribution of each $\hat{g}_i$ (see Appendix \ref{app:expFL} for details\editf{ in the full version \cite{wang2025PIC}}).

\noindent \textbf{Experimental results} We train each model for 80 rounds using Federated Learning. The utility comparison results are summarized in Tables \ref{tab:flincentive1} and \ref{tab:flincentive3}, each reflecting a different global privacy budget. In these tables, we present the single report utility (Gradient $\ell_2$ error) and the task-specific utility (Shapley $\ell_2$ error). The Gradient $\ell_2$ error is calculated using the estimated gradients $\hat{g}_i$ and the true gradients $g_i$ (clipped and with all unsampled dimensions set to 0): $\frac{ \sum_{i\in S}\|\hat{g}_i-{g}_i\|_2}{|S|}$, where $S$ is the set of randomly selected users. The Shapley $\ell_2$ error is computed as follows:$
\frac{\sum\nolimits_{i\in S}\|\widehat{\texttt{Shapley}}_i - {\texttt{Shapley}}_i \|_2}{|S|}
$, where \(\widehat{\texttt{Shapley}}_i\) is derived from \(\hat{g}_i\) and \({\texttt{Shapley}}_i\) is obtained from the unsanitized \(g_i\). For reference, we also include the final models' accuracy in the tables.

\begin{table}[hbt]
\caption{Utility comparison of federated learning with incentives: global privacy budget $(1, 0.01/50000)$.}
\label{tab:flincentive1}
\centering
\fontsize{8}{9}\selectfont
\begin{tabular}{|c|c|c|c|c|}
\hline
\textbf{Setting} & \textbf{\begin{tabular}[c]{@{}c@{}}Randomization\\ Mechanism\end{tabular}} & \textbf{\begin{tabular}[c]{@{}c@{}}Test\\ Accuracy\end{tabular}} & \textbf{\begin{tabular}[c]{@{}c@{}}Gradient\\ $\ell_2$ Error\end{tabular}} & \textbf{\begin{tabular}[c]{@{}c@{}}Shapley\\ $\ell_2$ Error 
  \end{tabular}} \\ \hline
\multirow{5}{*}{\begin{tabular}[c]{@{}c@{}}local\\ model\end{tabular}} 
 & Staircase \cite{geng2015staircase} & $18.29\%$ & $15.30$ & $0.0586$  \\ \cline{2-5} 
 & Squarewave \cite{li2020estimating} & $13.91\%$  & $13.23$ & $0.0590$ \\ \cline{2-5} 
 & PrivHS \cite{duchi2018minimax} & $28.17\%$ & $2.44$  &  $0.0587$ \\ \cline{2-5}
 & Laplace \cite{dwork2008differential} & $21.22\%$ & $2.53$  &  $0.0575$ \\ \cline{2-5}
 & PrinUnit \cite{bhowmick2018protection} & $23.28\%$ & $2.37$  &  $0.0592$ \\ \cline{2-5}
 & Minkowski & $18.10\%$ & $14.45$  &  $0.0571$ \\ \hline
\multirow{5}{*}{\begin{tabular}[c]{@{}c@{}}PIC\\ model\end{tabular}}
 & Staircase \cite{geng2015staircase} & $32.51\%$ & $0.781$ & $0.0563$  \\ \cline{2-5} 
 & Squarewave \cite{li2020estimating} & $33.75\%$  & $0.604$ & $0.0585$ \\ \cline{2-5} 
 & PrivHS \cite{duchi2018minimax} & $65.47\%$ & $0.267$  &  $0.0428$ \\ \cline{2-5}
 & Laplace \cite{dwork2008differential} & $72.6\%$ & $0.154$  &  $0.0508$ \\ \cline{2-5}
 & PrinUnit \cite{bhowmick2018protection} & $74.02\%$ & $0.157$  &  $0.0438$ \\ \cline{2-5}
 & \textbf{Minkowski} & $\mathbf{78.68\%}$ & $\mathbf{0.093}$  &  $\mathbf{0.0363}$ \\ \hline
\end{tabular}
\end{table}
}

\begin{table}[hbt]
\caption{Utility comparison of federated learning with incentives: global privacy budget $(3, 0.01/50000)$.}
\label{tab:flincentive3}
\centering
\fontsize{8}{9}\selectfont
\begin{tabular}{|c|c|c|c|c|}
\hline
\textbf{Setting} & \textbf{\begin{tabular}[c]{@{}c@{}}Randomization\\ Mechanism\end{tabular}} & \textbf{\begin{tabular}[c]{@{}c@{}}Test\\ Accuracy\end{tabular}} & \textbf{\begin{tabular}[c]{@{}c@{}}Gradient\\ $\ell_2$ Error\end{tabular}} & \textbf{\begin{tabular}[c]{@{}c@{}}Shapley\\ $\ell_2$ Error\end{tabular}} \\ \hline
\multirow{5}{*}{\begin{tabular}[c]{@{}c@{}}local\\ model\end{tabular}} 
 & Staircase \cite{geng2015staircase} & $19.68\%$ & $5.09$ & $0.0519$  \\ \cline{2-5} 
 & Squarewave \cite{li2020estimating} & $18.24\%$  & $4.34$ & $0.0574$ \\ \cline{2-5} 
 & PrivHS \cite{duchi2018minimax} & $41.52\%$ & $0.850$  &  $0.0561$ \\ \cline{2-5}
 & Laplace \cite{dwork2008differential} & $28.64\%$ & $0.845$  &  $0.0575$ \\ \cline{2-5}
 & PrinUnit \cite{bhowmick2018protection} & $39.02\%$ & $0.803$  &  $0.0537$ \\ \cline{2-5}
 & Minkowski & $20.73\%$ & $3.731$  &  $0.0545$ \\ \hline
\multirow{5}{*}{\begin{tabular}[c]{@{}c@{}}PIC\\ model\end{tabular}}
 & Staircase \cite{geng2015staircase} & $44.58\%$ & $0.545$ & $0.0563$  \\ \cline{2-5} 
 & Squarewave \cite{li2020estimating} & $53.02\%$  & $0.407$ & $0.0552$ \\ \cline{2-5} 
 & PrivHS \cite{duchi2018minimax} & $74.49\%$ & $0.190$  &  $0.0383$ \\ \cline{2-5}
 & Laplace \cite{dwork2008differential} & $74.86\%$ & $0.104$  &  $0.0335$ \\ \cline{2-5}
 & PrinUnit \cite{bhowmick2018protection} & $77.42\%$ & $0.098$  &  $0.0253$ \\ \cline{2-5}
 & \textbf{Minkowski}  & $\mathbf{83.43\%}$ & $\mathbf{0.055}$  &  $\mathbf{0.0219}$ \\ \hline
\end{tabular}
\end{table}

We observe in the tables that the utility in the LDP model is significantly worse than in the PIC model, as expected. This is consistent across all metrics and the final model accuracy. In the local model, the performance of Staircase, Squarewave, and Minkowski mechanisms is poorer compared to the others, but for different reasons. Staircase and Squarewave are designed for single-dimensional data, requiring the privacy budget to be split across dimensions when sanitizing vectors, which leads to reduced utility. For Minkowski, the issue lies in the 80-round federated learning process, where the privacy budget for each round is relatively small (e.g., between 0.2 and 0.75). As previously mentioned, the Minkowski is intended to achieve better utility with a large local privacy budget. In small budget scenarios, it offers no advantage and may even perform worse. In the PIC model, the performance of Staircase and Squarewave remains poor. However, Minkowski now performs the best among all randomizers. This improvement is due to the privacy amplification in PIC, which increases the local budget for each round to between 4.0 and 6.0.

\section{Discussions}\label{sec:dis}
\added{
\textbf{Decentralized download-then-compute settings.} While our protocol initially allows a server to perform the computation $f$, it is also possible for users to first download $L$ and then compute $f$ locally. This download-then-compute paradigm can be seen as a special decentralized case of our PIC model (i.e. the server simply computes $f(x)=x$). It would also rely on the one-time random key to perform post-computation communication. The PIC protocol offers several advantages over the download-then-compute paradigm: (1) PIC accommodates more tasks, e.g. those dependent on private server information (validation dataset in federated learning with incentives). (2) Download-then-compute requires public 
$f$, while PIC allows proprietary server-side algorithms. (3) Better privacy. E.g. in taxi-hailing, download-then-compute allows each user to compute all matching pairs and location distribution, while PIC publishes encrypted matching results (without leaking unnecessary information).
}

\added{\textbf{Necessity of grouping users.} While many applications involve a single type of participant (e.g., social systems and federated learning), other tasks require interaction between multiple participant types, where the server must use the type information of participants to execute the algorithm $f$ (e.g., in taxi-hailing services with types of drivers and users). From a privacy-preservation standpoint, it is also possible to apply varying levels of privacy protection across different groups.} 



\added{\textbf{Worst-case leakage and privacy under active attacks.} We assume that an attacker can corrupt and control a subset of parties, but the data of the uncorrupted users remains $(\epsilon_c, \delta)$-private against the attacker. The specific value of $\epsilon_c$ depends on the number of uncorrupted users within the same group (see Theorem \ref{the:privacy}, Eq. \ref{eq:cloneamplify}). In the worst-case scenario, where all but one user are corrupted, the privacy guarantee is reduced to $\epsilon$-LDP, as provided by the local randomizer, i.e., there is no privacy amplification through shuffling.
}


\newadded{\textbf{Multiple executions and reuse keys.} When multiple/sequential PIC executions are needed (e.g., in federated learning), a fresh one-time user key for each execution is required, ensuring the adversary cannot correlate user activities. According to Eq. \ref{eq:cloneamplify} and sequential composition of DP \cite{dwork2008differential}, the overall privacy consumption for $k$ executions is $\tilde{O}(\sqrt{k \cdot e^{\epsilon} / {n_i^*}})$. Otherwise, reusing keys in $k$ executions will consume more privacy (to a level of shuffling $n_i^*$ messages each is $k \cdot \epsilon$-LDP, i.e., $\tilde{O}(\sqrt{e^{k \cdot \epsilon} / {n_i^*}})$).}


\added{
\textbf{Multi-message settings.} The PIC model allows each client to send one single message, to align with most non-private applications. However, restricting each client to sending one message in the shuffle model has intrinsic privacy amplification and utility limitations (as seen in statistical analyses within the shuffle model \cite{ghazi2021power}). Extending PIC model to multi-message settings is an interesting future direction, though it is quite challenging (e.g., in taxi-hailing services, the messages from a driver must appear independent, then each message may match a different customer, leading to an oversale).
}

\section{Conclusion}\label{sec:conclusion}
Privacy-preserving computation with differential privacy holds great potential for leveraging personal information. While the shuffle model offers a rigorous DP guarantee with enhanced utility, its application is confined to statistical tasks. In this paper, we introduce a novel paradigm called Private Individual Computation (PIC), which extends the shuffle model to scenarios where each user requires personalized outputs from the computation. We demonstrate that PIC can be realized using an efficient protocol that relies on minimal cryptographic operations while maintaining the advantages of privacy amplification through shuffling. To further enhance utility, we developed a local randomizer specifically designed for PIC. We provide formal proofs of the protocol's security and privacy, as well as the asymptotic optimality of the randomizer. Extensive experiments validate the superiority of the PIC protocol and the randomizer, showcasing their performance across three major application scenarios and various real-world datasets.

\noindent\deleted{\textbf{Future directions.} The relaxed notion of local metric DP, or geo-indistinguishability \mbox{\cite{andres2013geo}}, also shows shuffle privacy amplification effects \mbox{\cite{wang2023unified}}, making it a promising area for future research to further enhance utility in the PIC model.}

\section*{Acknowledgments}

We thank the anonymous reviewers for their insightful suggestions and comments. This work is supported by National Key Research and Development (R\&D) Program (Young Scientist Scheme No.2022YFB3102400), National Natural Science Foundation of China (No.62372120, No.62302118, No.62372125, No.62261160651, No.62102108), the Guangdong Natural Science Funds for Distinguished Young Scholar under Grant 2023B1515020041, Natural Science Foundation of Guangdong Province of China (No.2022A1515010061), and Guangzhou Basic and Applied Basic Research Foundation (No.2025A03J3182). Di Wang is supported in part by the funding BAS/1/1689-01-01, URF/1/4663-01-01,  REI/1/5232-01-01,  REI/1/5332-01-01,  and URF/1/5508-01-01  from KAUST, and funding from KAUST - Center of Excellence for Generative AI, under award number 5940. 




\section*{Ethics considerations}
This work does not present any ethical issues. In fact, it directly mitigate privacy concerns associated with computational tasks involving sensitive user data, such as location data, preference data, and gradient information. All experiments are conducted using public datasets, ensuring no exposure of personal information. Moreover, we strictly adhere to ethical guidelines throughout the research process, ensuring that no other ethical concerns arise.

\section*{Open science}
In alignment with the principles of open science, our code and experimental data (for the three exemplar applications of the PIC model in Section \ref{sec:exp}) are publicly accessible at \url{https://zenodo.org/records/14710367} (please refer to README.md for more details). 


\bibliographystyle{plain}
\bibliography{refs}

\begin{thebibliography}{10}

\bibitem{albab2022batched}
Kinan~Dak Albab, Rawane Issa, Mayank Varia, and Kalman Graffi.
\newblock Batched differentially private information retrieval.
\newblock In {\em 31st USENIX Security Symposium (USENIX Security 22)}, pages 3327--3344, 2022.

\bibitem{andres2013geo}
Miguel~E Andr{\'e}s, Nicol{\'a}s~E Bordenabe, Konstantinos Chatzikokolakis, and Catuscia Palamidessi.
\newblock Geo-indistinguishability: Differential privacy for location-based systems.
\newblock In {\em CCS}. ACM, 2013.

\bibitem{applebaum2021perfect}
Benny Applebaum, Zvika Brakerski, and Rotem Tsabary.
\newblock Perfect secure computation in two rounds.
\newblock {\em SIAM journal on computing}, 50(1):68--97, 2021.

\bibitem{asi2024fast}
Hilal Asi, Vitaly Feldman, Jelani Nelson, Huy Nguyen, and Kunal Talwar.
\newblock Fast optimal locally private mean estimation via random projections.
\newblock {\em Advances in Neural Information Processing Systems}, 36, 2024.

\bibitem{asiprivate2024}
Hilal Asi, Vitaly Feldman, Jelani Nelson, Huy Nguyen, Kunal Talwar, and Samson Zhou.
\newblock Private vector mean estimation in the shuffle model: Optimal rates require many messages.
\newblock In {\em Forty-first International Conference on Machine Learning}, 2024.

\bibitem{asi2022optimal}
Hilal Asi, Vitaly Feldman, and Kunal Talwar.
\newblock Optimal algorithms for mean estimation under local differential privacy.
\newblock In {\em ICML}. PMLR, 2022.

\bibitem{balle2019privacy}
Borja Balle, James Bell, Adri{\`a} Gasc{\'o}n, and Kobbi Nissim.
\newblock The privacy blanket of the shuffle model.
\newblock In {\em CRYPTO}. Springer, 2019.

\bibitem{balle2020private}
Borja Balle, James Bell, Adria Gasc{\'o}n, and Kobbi Nissim.
\newblock Private summation in the multi-message shuffle model.
\newblock In {\em CCS}. ACM, 2020.

\bibitem{beaver1990round}
Donald Beaver, Silvio Micali, and Phillip Rogaway.
\newblock The round complexity of secure protocols.
\newblock In {\em STOC}. ACM, 1990.

\bibitem{beimel2020round}
Amos Beimel, Iftach Haitner, Kobbi Nissim, and Uri Stemmer.
\newblock On the round complexity of the shuffle model.
\newblock In {\em Theory of Cryptography Conference}, pages 683--712. Springer, 2020.

\bibitem{ben2019completeness}
Michael Ben-Or, Shafi Goldwasser, and Avi Wigderson.
\newblock Completeness theorems for non-cryptographic fault-tolerant distributed computation.
\newblock In {\em Providing sound foundations for cryptography: on the work of Shafi Goldwasser and Silvio Micali}, pages 351--371. 2019.

\bibitem{bhowmick2018protection}
Abhishek Bhowmick, John Duchi, Julien Freudiger, Gaurav Kapoor, and Ryan Rogers.
\newblock Protection against reconstruction and its applications in private federated learning.
\newblock {\em arXiv preprint arXiv:1812.00984}, 2018.

\bibitem{bittau2017prochlo}
Andrea Bittau, {\'U}lfar Erlingsson, Petros Maniatis, Ilya Mironov, Ananth Raghunathan, David Lie, Mitch Rudominer, Ushasree Kode, Julien Tinnes, and Bernhard Seefeld.
\newblock Prochlo: Strong privacy for analytics in the crowd.
\newblock In {\em SOSP}, 2017.

\bibitem{Bogdanov08}
Dan Bogdanov, Sven Laur, and Jan Willemson.
\newblock Sharemind: {A} framework for fast privacy-preserving computations.
\newblock In {\em {ESORICS}}, pages 192--206, 2008.

\bibitem{Boyle23}
Elette Boyle, Geoffroy Couteau, and Pierre Meyer.
\newblock Sublinear-communication secure multiparty computation does not require {FHE}.
\newblock In {\em {EUROCRYPT}}, pages 159--189, 2023.

\bibitem{Burra21}
Sai~Sheshank Burra, Enrique Larraia, Jesper~Buus Nielsen, Peter~Sebastian Nordholt, Claudio Orlandi, Emmanuela Orsini, Peter Scholl, and Nigel~P. Smart.
\newblock High-performance multi-party computation for binary circuits based on oblivious transfer.
\newblock {\em J. Cryptol.}, 34(3):34, 2021.

\bibitem{chen2022breaking}
Wei-Ning Chen, Peter Kairouz, and Ayfer {\"O}zg{\"u}r.
\newblock Breaking the communication-privacy-accuracy trilemma.
\newblock {\em IEEE Transactions on Information Theory}, 69(2):1261--1281, 2022.

\bibitem{chen2014gmission}
Zhao Chen, Rui Fu, Ziyuan Zhao, Zheng Liu, Leihao Xia, Lei Chen, Peng Cheng, Caleb~Chen Cao, Yongxin Tong, and Chen~Jason Zhang.
\newblock gmission: A general spatial crowdsourcing platform.
\newblock {\em VLDB}, 2014.

\bibitem{Cheon17}
Jung~Hee Cheon, Andrey Kim, Miran Kim, and Yong~Soo Song.
\newblock Homomorphic encryption for arithmetic of approximate numbers.
\newblock In {\em {ASIACRYPT}}, pages 409--437, 2017.

\bibitem{cheu2019distributed}
Albert Cheu, Adam Smith, Jonathan Ullman, David Zeber, and Maxim Zhilyaev.
\newblock Distributed differential privacy via shuffling.
\newblock In {\em Eurocrypt}. Springer, 2019.

\bibitem{cho2011friendship}
Eunjoon Cho, Seth~A Myers, and Jure Leskovec.
\newblock Friendship and mobility: user movement in location-based social networks.
\newblock In {\em SIGKDD}. ACM, 2011.

\bibitem{cormode2012differentially}
Graham Cormode, Cecilia Procopiuc, Divesh Srivastava, Entong Shen, and Ting Yu.
\newblock Differentially private spatial decompositions.
\newblock In {\em ICDE}. IEEE, 2012.

\bibitem{Cramer18}
Ronald Cramer, Ivan Damg{\aa}rd, Daniel Escudero, Peter Scholl, and Chaoping Xing.
\newblock Spd{\(\mathbb{z}\)}\({}_{\mbox{2\({}^{\mbox{k}}\)}}\): Efficient {MPC} mod 2\({}^{\mbox{k}}\) for dishonest majority.
\newblock In {\em {CRYPTO}}, pages 769--798, 2018.

\bibitem{Dalskov21}
Anders P.~K. Dalskov, Daniel Escudero, and Marcel Keller.
\newblock Fantastic four: Honest-majority four-party secure computation with malicious security.
\newblock In {\em {USENIX} Security}, pages 2183--2200, 2021.

\bibitem{Dalskov22}
Anders P.~K. Dalskov, Daniel Escudero, and Ariel Nof.
\newblock Fast fully secure multi-party computation over any ring with two-thirds honest majority.
\newblock In {\em {CCS}}, pages 653--666, 2022.

\bibitem{Damgard12}
Ivan Damg{\aa}rd, Valerio Pastro, Nigel~P. Smart, and Sarah Zakarias.
\newblock Multiparty computation from somewhat homomorphic encryption.
\newblock In {\em {CRYPTO}}, pages 643--662, 2012.

\bibitem{ding2017collecting}
Bolin Ding, Janardhan Kulkarni, and Sergey Yekhanin.
\newblock Collecting telemetry data privately.
\newblock {\em NeurIPS}, 2017.

\bibitem{duchi2019lower}
John Duchi and Ryan Rogers.
\newblock Lower bounds for locally private estimation via communication complexity.
\newblock In {\em COLT}. PMLR, 2019.

\bibitem{duchi2018minimax}
John~C Duchi, Michael~I Jordan, and Martin~J Wainwright.
\newblock Minimax optimal procedures for locally private estimation.
\newblock {\em Journal of the American Statistical Association}, 113(521):182--201, 2018.

\bibitem{dwork2006differential}
Cynthia Dwork.
\newblock Differential privacy.
\newblock In {\em ICALP. Springer}. 2006.

\bibitem{dwork2008differential}
Cynthia Dwork.
\newblock Differential privacy: A survey of results.
\newblock {\em International Conference on Theory and Applications of Models of Computation}, pages 1--19, 2008.

\bibitem{erlingsson2019amplification}
{\'U}lfar Erlingsson, Vitaly Feldman, Ilya Mironov, Ananth Raghunathan, Kunal Talwar, and Abhradeep Thakurta.
\newblock Amplification by shuffling: From local to central differential privacy via anonymity.
\newblock In {\em SODA}. SIAM, 2019.

\bibitem{erlingsson2014rappor}
{\'U}lfar Erlingsson, Vasyl Pihur, and Aleksandra Korolova.
\newblock Rappor: Randomized aggregatable privacy-preserving ordinal response.
\newblock In {\em CCS}. ACM, 2014.

\bibitem{feldman2021hiding}
Vitaly Feldman, Audra McMillan, and Kunal Talwar.
\newblock Hiding among the clones: A simple and nearly optimal analysis of privacy amplification by shuffling.
\newblock In {\em FOCS}. IEEE, 2021.

\bibitem{feldman2022stronger}
Vitaly Feldman, Audra McMillan, and Kunal Talwar.
\newblock Stronger privacy amplification by shuffling for r{\'e}nyi and approximate differential privacy.
\newblock In {\em SODA}. SIAM, 2023.

\bibitem{feldman2021lossless}
Vitaly Feldman and Kunal Talwar.
\newblock Lossless compression of efficient private local randomizers.
\newblock In {\em ICML}. PMLR, 2021.

\bibitem{fitzsimons2024private}
Jack Fitzsimons, James Honaker, Michael Shoemate, and Vikrant Singhal.
\newblock Private means and the curious incident of the free lunch.
\newblock {\em arXiv preprint arXiv:2408.10438}, 2024.

\bibitem{gascon2024computationally}
Adria Gasc{\'o}n, Yuval Ishai, Mahimna Kelkar, Baiyu Li, Yiping Ma, and Mariana Raykova.
\newblock Computationally secure aggregation and private information retrieval in the shuffle model.
\newblock {\em Cryptology ePrint Archive}, 2024.

\bibitem{geng2015staircase}
Quan Geng, Peter Kairouz, Sewoong Oh, and Pramod Viswanath.
\newblock The staircase mechanism in differential privacy.
\newblock {\em IEEE Journal of Selected Topics in Signal Processing}, 9(7):1176--1184, 2015.

\bibitem{ghazi2021power}
Badih Ghazi, Noah Golowich, Ravi Kumar, Rasmus Pagh, and Ameya Velingker.
\newblock On the power of multiple anonymous messages: Frequency estimation and selection in the shuffle model of differential privacy.
\newblock In {\em Eurocrypto}. Springer, 2021.

\bibitem{ghazi2020private}
Badih Ghazi, Ravi Kumar, Pasin Manurangsi, and Rasmus Pagh.
\newblock Private counting from anonymous messages: Near-optimal accuracy with vanishing communication overhead.
\newblock In {\em ICML}. PMLR, 2020.

\bibitem{ghazi2021differentially}
Badih Ghazi, Ravi Kumar, Pasin Manurangsi, Rasmus Pagh, and Amer Sinha.
\newblock Differentially private aggregation in the shuffle model: Almost central accuracy in almost a single message.
\newblock In {\em ICML}. PMLR, 2021.

\bibitem{girgis2021shuffled}
Antonious Girgis, Deepesh Data, Suhas Diggavi, Peter Kairouz, and Ananda~Theertha Suresh.
\newblock Shuffled model of differential privacy in federated learning.
\newblock In {\em AISTATS}. PMLR, 2021.

\bibitem{goldreich2019play}
Oded Goldreich, Silvio Micali, and Avi Wigderson.
\newblock How to play any mental game, or a completeness theorem for protocols with honest majority.
\newblock In {\em Providing Sound Foundations for Cryptography: On the Work of Shafi Goldwasser and Silvio Micali}, pages 307--328. 2019.

\bibitem{goldschlag1999onion}
David Goldschlag, Michael Reed, and Paul Syverson.
\newblock Onion routing.
\newblock {\em Communications of the ACM}, 42(2):39--41, 1999.

\bibitem{goryczka2015comprehensive}
Slawomir Goryczka and Li~Xiong.
\newblock A comprehensive comparison of multiparty secure additions with differential privacy.
\newblock {\em IEEE transactions on dependable and secure computing}, 14(5):463--477, 2015.

\bibitem{hen23}
Alexandra Henzinger, Matthew~M. Hong, Henry Corrigan{-}Gibbs, Sarah Meiklejohn, and Vinod Vaikuntanathan.
\newblock One server for the price of two: Simple and fast single-server private information retrieval.
\newblock In {\em {USENIX} Security}, pages 3889--3905, 2023.

\bibitem{hopcroft1973n}
John~E Hopcroft and Richard~M Karp.
\newblock An $n^{5/2}$ algorithm for maximum matchings in bipartite graphs.
\newblock {\em SIAM Journal on computing}, 2(4):225--231, 1973.

\bibitem{hopwood2016zcash}
Daira Hopwood, Sean Bowe, Taylor Hornby, Nathan Wilcox, et~al.
\newblock Zcash protocol specification.
\newblock {\em GitHub: San Francisco, CA, USA}, 4(220):32, 2016.

\bibitem{ishai2024information}
Yuval Ishai, Mahimna Kelkar, Daniel Lee, and Yiping Ma.
\newblock Information-theoretic single-server pir in the shuffle model.
\newblock {\em Cryptology ePrint Archive}, 2024.

\bibitem{ishai2006cryptography}
Yuval Ishai, Eyal Kushilevitz, Rafail Ostrovsky, and Amit Sahai.
\newblock Cryptography from anonymity.
\newblock In {\em FOCS}. IEEE, 2006.

\bibitem{isik2024exact}
Berivan Isik, Wei-Ning Chen, Ayfer Ozgur, Tsachy Weissman, and Albert No.
\newblock Exact optimality of communication-privacy-utility tradeoffs in distributed mean estimation.
\newblock {\em Advances in Neural Information Processing Systems}, 36, 2024.

\bibitem{jiang2022signds}
Xue Jiang, Xuebing Zhou, and Jens Grossklags.
\newblock Signds-fl: Local differentially private federated learning with sign-based dimension selection.
\newblock {\em ACM Transactions on Intelligent Systems and Technology (TIST)}, 13(5):1--22, 2022.

\bibitem{jonker1988shortest}
Roy Jonker and Ton Volgenant.
\newblock A shortest augmenting path algorithm for dense and sparse linear assignment problems.
\newblock In {\em DGOR/NSOR: Papers of the 16th Annual Meeting of DGOR in Cooperation with NSOR/Vortr{\"a}ge der 16. Jahrestagung der DGOR zusammen mit der NSOR}, pages 622--622. Springer, 1988.

\bibitem{kairouz2016extremal}
Peter Kairouz, Sewoong Oh, and Pramod Viswanath.
\newblock Extremal mechanisms for local differential privacy.
\newblock {\em The Journal of Machine Learning Research}, 2016.

\bibitem{kamvar2006large}
Maryam Kamvar and Shumeet Baluja.
\newblock A large scale study of wireless search behavior: Google mobile search.
\newblock In {\em CHI}, 2006.

\bibitem{kasiviswanathan2011can}
Shiva~Prasad Kasiviswanathan, Homin~K Lee, Kobbi Nissim, Sofya Raskhodnikova, and Adam Smith.
\newblock What can we learn privately?
\newblock {\em SIAM Journal on Computing}, 2011.

\bibitem{Keller20}
Marcel Keller.
\newblock {MP-SPDZ:} {A} versatile framework for multi-party computation.
\newblock In {\em {CCS}}, pages 1575--1590, 2020.

\bibitem{korte2011combinatorial}
Bernhard~H Korte, Jens Vygen, B~Korte, and J~Vygen.
\newblock {\em Combinatorial optimization}, volume~1.
\newblock Springer, 2011.

\bibitem{koskela2021tight}
Antti Koskela, Mikko~A Heikkil{\"a}, and Antti Honkela.
\newblock Numerical accounting in the shuffle model of differential privacy.
\newblock {\em Transactions on Machine Learning Research}, 2022.

\bibitem{lebeda2024better}
Christian~Janos Lebeda.
\newblock Better gaussian mechanism using correlated noise.
\newblock {\em arXiv preprint arXiv:2408.06853}, 2024.

\bibitem{li2020estimating}
Zitao Li, Tianhao Wang, Milan Lopuha{\"a}-Zwakenberg, Ninghui Li, and Boris {\v{S}}koric.
\newblock Estimating numerical distributions under local differential privacy.
\newblock In {\em SIGMOD}. ACM, 2020.

\bibitem{liu2020fedsel}
Ruixuan Liu, Yang Cao, Masatoshi Yoshikawa, and Hong Chen.
\newblock Fedsel: Federated sgd under local differential privacy with top-k dimension selection.
\newblock In {\em Database Systems for Advanced Applications: 25th International Conference, DASFAA 2020, Jeju, South Korea, September 24--27, 2020, Proceedings, Part I 25}, pages 485--501. Springer, 2020.

\bibitem{luo2022feature}
Xinjian Luo, Yangfan Jiang, and Xiaokui Xiao.
\newblock Feature inference attack on shapley values.
\newblock In {\em CCS}. ACM, 2022.

\bibitem{mcsherry2009differentially}
Frank McSherry and Ilya Mironov.
\newblock Differentially private recommender systems: Building privacy into the netflix prize contenders.
\newblock In {\em SIGKDD}. ACM, 2009.

\bibitem{mehta2013online}
Aranyak Mehta et~al.
\newblock Online matching and ad allocation.
\newblock {\em Foundations and Trends{\textregistered} in Theoretical Computer Science}, 8(4):265--368, 2013.

\bibitem{mironov2009computational}
Ilya Mironov, Omkant Pandey, Omer Reingold, and Salil Vadhan.
\newblock Computational differential privacy.
\newblock In {\em CRYPTO}. Springer, 2009.

\bibitem{murdoch2005low}
Steven~J Murdoch and George Danezis.
\newblock Low-cost traffic analysis of tor.
\newblock In {\em S\&P}. IEEE, 2005.

\bibitem{nguyen2016collecting}
Th{\^o}ng~T Nguy{\^e}n, Xiaokui Xiao, Yin Yang, Siu~Cheung Hui, Hyejin Shin, and Junbum Shin.
\newblock Collecting and analyzing data from smart device users with local differential privacy.
\newblock {\em arXiv preprint arXiv:1606.05053}, 2016.

\bibitem{overlier2006locating}
Lasse Overlier and Paul Syverson.
\newblock Locating hidden servers.
\newblock In {\em S\&P}. IEEE, 2006.

\bibitem{Paillier99}
Pascal Paillier.
\newblock Public-key cryptosystems based on composite degree residuosity classes.
\newblock In {\em {EUROCRYPT}}, pages 223--238, 1999.

\bibitem{ren2018lopub}
Xuebin Ren, Chia-Mu Yu, Weiren Yu, Shusen Yang, Xinyu Yang, Julie~A McCann, and S~Yu Philip.
\newblock Lopub: high-dimensional crowdsourced data publication with local differential privacy.
\newblock {\em IEEE Transactions on Information Forensics and Security}, 13(9):2151--2166, 2018.

\bibitem{rescorla2018transport}
Eric Rescorla.
\newblock The transport layer security (tls) protocol version 1.3.
\newblock Technical report, 2018.

\bibitem{RosulekR21}
Mike Rosulek and Lawrence Roy.
\newblock Three halves make a whole? beating the half-gates lower bound for garbled circuits.
\newblock In {\em {CRYPTO}}, pages 94--124, 2021.

\bibitem{roth1988shapley}
Alvin~E Roth.
\newblock {\em The Shapley value: essays in honor of Lloyd S. Shapley}.
\newblock Cambridge University Press, 1988.

\bibitem{sason2016f}
Igal Sason and Sergio Verd{\'u}.
\newblock f-divergence inequalities.
\newblock {\em IEEE Transactions on Information Theory}, 62(11):5973--6006, 2016.

\bibitem{shah2022optimal}
Abhin Shah, Wei-Ning Chen, Johannes Balle, Peter Kairouz, and Lucas Theis.
\newblock Optimal compression of locally differentially private mechanisms.
\newblock In {\em International Conference on Artificial Intelligence and Statistics}, pages 7680--7723. PMLR, 2022.

\bibitem{shah2015double}
Nihar~Bhadresh Shah and Dengyong Zhou.
\newblock Double or nothing: Multiplicative incentive mechanisms for crowdsourcing.
\newblock {\em NeurIPS}, 2015.

\bibitem{Smart23}
Nigel~P. Smart.
\newblock Practical and efficient fhe-based {MPC}.
\newblock In {\em {IMACC}}, pages 263--283, 2023.

\bibitem{smith2017interaction}
Adam Smith, Abhradeep Thakurta, and Jalaj Upadhyay.
\newblock Is interaction necessary for distributed private learning?
\newblock In {\em 2017 IEEE Symposium on Security and Privacy (SP)}, pages 58--77. IEEE, 2017.

\bibitem{tang2017privacy}
Jun Tang, Aleksandra Korolova, Xiaolong Bai, Xueqiang Wang, and Xiaofeng Wang.
\newblock Privacy loss in apple's implementation of differential privacy on macos 10.12.
\newblock {\em arXiv preprint arXiv:1709.02753}, 2017.

\bibitem{to2016differentially}
Hien To, Gabriel Ghinita, Liyue Fan, and Cyrus Shahabi.
\newblock Differentially private location protection for worker datasets in spatial crowdsourcing.
\newblock {\em IEEE Transactions on Mobile Computing}, 16(4):934--949, 2016.

\bibitem{to2014framework}
Hien To, Gabriel Ghinita, and Cyrus Shahabi.
\newblock A framework for protecting worker location privacy in spatial crowdsourcing.
\newblock {\em VLDB}, 2014.

\bibitem{to2018privacy}
Hien To, Cyrus Shahabi, and Li~Xiong.
\newblock Privacy-preserving online task assignment in spatial crowdsourcing with untrusted server.
\newblock In {\em ICDE}. IEEE, 2018.

\bibitem{toledo2016lower}
Raphael~R Toledo, George Danezis, and Ian Goldberg.
\newblock Lower-cost $\varepsilon$-private information retrieval.
\newblock {\em Proceedings on Privacy Enhancing Technologies}, 4:184--201, 2016.

\bibitem{tong2016online}
Yongxin Tong, Jieying She, Bolin Ding, Libin Wang, and Lei Chen.
\newblock Online mobile micro-task allocation in spatial crowdsourcing.
\newblock In {\em ICDE}. IEEE, 2016.

\bibitem{tong2020spatial}
Yongxin Tong, Zimu Zhou, Yuxiang Zeng, Lei Chen, and Cyrus Shahabi.
\newblock Spatial crowdsourcing: a survey.
\newblock {\em The VLDB Journal}, 29(1):217--250, 2020.

\bibitem{wang2022privacy}
Hengzhi Wang, En~Wang, Yongjian Yang, Jie Wu, and Falko Dressler.
\newblock Privacy-preserving online task assignment in spatial crowdsourcing: A graph-based approach.
\newblock In {\em INFOCOM}. IEEE, 2022.

\bibitem{wang2017location}
Leye Wang, Dingqi Yang, Xiao Han, Tianben Wang, Daqing Zhang, and Xiaojuan Ma.
\newblock Location privacy-preserving task allocation for mobile crowdsensing with differential geo-obfuscation.
\newblock In {\em The Web Conference}, 2017.

\bibitem{wang2019collecting}
Ning Wang, Xiaokui Xiao, Yin Yang, Jun Zhao, Siu~Cheung Hui, Hyejin Shin, Junbum Shin, and Ge~Yu.
\newblock Collecting and analyzing multidimensional data with local differential privacy.
\newblock In {\em 2019 IEEE 35th International Conference on Data Engineering (ICDE)}, pages 638--649. IEEE, 2019.

\bibitem{wang2025PIC}
Shaowei Wang, Changyu Dong, Xiangfu Song, Jin Li, Zhili Zhou, Di~Wang, and Han Wu.
\newblock Beyond statistical estimation: Differentially private individual computation via shuffling.
\newblock In {\em {USENIX} Security}, 2025.
\newblock \url{https://arxiv.org/abs/2406.18145}.

\bibitem{wang2023unified}
Shaowei Wang, Yun Peng, Jin Li, Zikai Wen, Zhipeng Li, Shiyu Yu, Di~Wang, and Wei Yang.
\newblock Privacy amplification via shuffling: Unified, simplified, and tightened.
\newblock {\em Proceedings of the VLDB Endowment}, 17(8):1870--1883, 2024.

\bibitem{xiong2020comprehensive}
Xingxing Xiong, Shubo Liu, Dan Li, Zhaohui Cai, and Xiaoguang Niu.
\newblock A comprehensive survey on local differential privacy.
\newblock {\em Security and Communication Networks}, 2020:1--29, 2020.

\bibitem{yang2014modeling}
Dingqi Yang, Daqing Zhang, Vincent~W Zheng, and Zhiyong Yu.
\newblock Modeling user activity preference by leveraging user spatial temporal characteristics in lbsns.
\newblock {\em IEEE Transactions on Systems, Man, and Cybernetics: Systems}, 45(1):129--142, 2014.

\bibitem{yao82}
Andrew~Chi{-}Chih Yao.
\newblock Protocols for secure computations (extended abstract).
\newblock In {\em 23rd Annual Symposium on Foundations of Computer Science}, pages 160--164, 1982.

\bibitem{yao1986generate}
Andrew Chi-Chih Yao.
\newblock How to generate and exchange secrets.
\newblock In {\em 27th Annual Symposium on Foundations of Computer Science (sfcs 1986)}, pages 162--167. IEEE, 1986.

\bibitem{zhan2021survey}
Yufeng Zhan, Jie Zhang, Zicong Hong, Leijie Wu, Peng Li, and Song Guo.
\newblock A survey of incentive mechanism design for federated learning.
\newblock {\em IEEE Transactions on Emerging Topics in Computing}, 10(2):1035--1044, 2021.

\end{thebibliography}




\balance
\appendix

\section{Error Upper Bounds of Minkowski Response Mechanism}\label{app:errorupper}
We now study the error bound of the Minkowski response in the PIC model. We start with analyze the mean squared error formula of the mechanism given fixed local budget $\epsilon$ and cap area radius parameter $r$. Then, we apply the global privacy budget $(\epsilon_c,\delta)$ and the privacy amplification bound in Theorem \ref{the:privacy}, to deduce a feasible local budget $\epsilon$ and optimized radius parameter afterward. To deal with both $\ell_2$-norm and $\ell_{+\infty}$-norm bounded domain, we introduce a more general notation $\mathbb{B}_{p,r}(x)$ to represent the $\ell_2$-norm hyperball with radius $r$ centered at any $x\in \mathbb{R}^d$:
$$\mathbb{B}_{p,r}(x)=\{x'\ |\ x'\in \mathbb{R}^d\ and\ \|x'-x\|_p\leq r\},$$ and a general notation  $\dom{Y}_{p,q,r}$ to present the following $\ell_q$ expanded domain:
\[\dom{Y}_{p,q,r}=\{x\ |\  x\in \mathbb{R}^d\ and\ \exists x'\in \mathbb{B}_{p,1}\ that\ x\in \mathbb{B}_{q,r}(x')\}.\]


\textbf{For hyper-ball domain $\mathbb{B}_{2,1}$. } When the domain is $\ell_2$-norm bounded hyperball $\mathbb{B}_{2,1}$, we use $q=2$ for the cap area as well. In this context, we let $\beta=\frac{r^d(e^\epsilon-1)}{(1+r)^d+r^d(e^\epsilon-1)}$ and obtain the MSE bound given fixed local budget $\epsilon$ and radius $r$ as follows:
{\small
\begin{alignat*}{4}
&&&\max_{x\in \mathbb{B}_{p,1}} \mathbb{E}[\|\tilde{x}-x\|_2^2]=\max_{x\in \mathbb{B}_{p,1}}  \frac{1}{\beta^2}\cdot \textsf{Var}[y|x]\\
&&=&\max_{x\in \mathbb{B}_{p,1}} \frac{1}{\beta^2}(\beta\cdot  \mathbb{E}[\|\mathbb{B}_{q,r}(x)\|_2^2]+(1-\beta)\cdot\mathbb{E}[\|\dom{Y}_{p,q,r}(x)\|_2^2]-\beta^2\|x\|_2^2)\\
&&=&\max_{x\in \mathbb{B}_{p,1}} \frac{1}{\beta^2}(\beta (\|x\|_2^2+r^2)+(1-\beta)(1+r)^2-\beta^2\|x\|_2^2)\\
&&\leq&\frac{1}{\beta^2}(\beta (1+r^2)+(1-\beta)(1+r)^2-\beta^2)\\
&&\leq&\frac{1}{\beta^2}(\beta r^2+(1-\beta)(1+(1+r)^2))
\end{alignat*}
\normalsize
}
where $\mathbb{E}[\|\mathbb{B}\|_2^2]$ denote the expected squared distance between a (uniform-distributed) space $\mathbb{B}\subseteq \mathbb{R}^d$ and the origin point $\{0\}^d$. If the local privacy budget $\epsilon$ is relatively large (e.g., $\epsilon\geq\log((c+1)^{\frac{d+2}{2}})$ for some constant $c\geq 1$), and we specify $r=(e^{\epsilon}-1)^{2/(d+2)}-1)^{-1}$, we then have $r\leq 1/c$, $\beta\in [1/2,1]$ and:
\begin{alignat}{2}
&\max_{x\in \mathbb{B}_{p,1}} \mathbb{E}[\|\tilde{x}-x\|_2^2]\nonumber\\
\leq&4(r^2+5\frac{(1+1/r)^d}{(e^{\epsilon}-1)+(1+1/r)^d})\nonumber\\
\leq&4(r^2+5\frac{(1+1/r)^d}{e^{\epsilon}-1})\nonumber\\
\leq&4(e^{\epsilon}-1)^{-2/(d+2)}+5\frac{(e^{\epsilon}-1)^{-2/(d+2)}}{e^{\epsilon}-1})\nonumber\\
\leq&24(e^{\epsilon}-1)^{-2/(d+2)}. \label{eq:ldpmse}
\end{alignat}


Consider the case $n\geq \max\big\{16 \log(1/\delta), \frac{256(1+c)^{d+2}\log(1/\delta)}{(e^{\epsilon_c}-1)^2}\big\}$ holds for some constant $c\geq 1$, we specify local budget $\epsilon$ such that $e^{\epsilon}=\frac{n (e^{\epsilon_c}-1)^2}{256\log(1/\delta)}$ holds according to Theorem \ref{the:privacy}, and specify the radius $r$ to:
$$\Big(\big(\frac{n (e^{\epsilon_c}-1)^2}{256\log(1/\delta)}\big)^{1/(d+2)}-1\Big)^{-1}.$$
Observe that in this setting, we have $\beta\in [1/2,1]$ and $r\leq 1/c$. Then, the MSE is upper bounded as:
{\small
\begin{alignat*}{4}
&&&\max_{x\in \mathbb{B}_{p,1}} \mathbb{E}[\|\tilde{x}-x\|_2^2]\\
&&\leq&\frac{1}{\beta^2}(\beta r^2+(1-\beta)(1+(1+r)^2))\\
&&\leq&4(r^2+5\frac{(1+1/r)^d}{(e^{\epsilon}-1)+(1+1/r)^d})\\
&&\leq&4(r^2+5\frac{(1+1/r)^d}{e^{\epsilon}})\\
&&\leq&4\Big(\big(\frac{256\log(1/\delta)}{n(e^{\epsilon_c}-1)^2}\big)^{\frac{2}{d+2}}\cdot \frac{(c+1)^2}{c^2}+\frac{5((e^{\epsilon_c}-1)^2 n/(256\log(1/\delta)))^{\frac{d}{d+2}}}{(e^{\epsilon_c}-1)^2 n/(256\log(1/\delta))}\Big)\\
&&\leq&36\Big(\frac{256\log(1/\delta)}{n(e^{\epsilon_c}-1)^2}\Big)^{\frac{2}{d+2}}.
\end{alignat*}
\normalsize
}
Therefore, we establish Theorem \ref{the:errorupper}. With sufficiently large size $n$ of the amplification population, the derived error bound matches the lower bound in Theorem \ref{the:errorlower}. 

\textbf{For hyper-cube domain $\mathbb{B}_{\infty,1}$. } Another data domain that is commonly encountered in practical settings is the $\ell_{+\infty}$-norm bounded hypercube. We use $q=+\infty$ as well for the cap area, then we have volumes $V(\mathbb{B}_{2,r})=(2r)^d$, $V(\dom{Y}_{\infty,\infty,r})=(2+2r)^d$, and let $\beta=\frac{r^d(e^\epsilon-1)}{(1+r)^d+r^d(e^\epsilon-1)}$. For fixed local budget $\epsilon$ and radius $r$, the mean squared error bound is:
\small
\begin{alignat*}{4}
&&&\max_{x\in \mathbb{B}_{p,1}} \mathbb{E}[\|\tilde{x}-x\|_2^2]    = \frac{1}{\beta^2}\cdot \text{Var}[y]\\
&&\leq& \frac{d}{\beta^2}(\beta (1+r^2/3)+(1-\beta)(1+r)^2/3)-d\\
&&\leq& \frac{d}{3\beta^2}(\beta r^2+(1-\beta)((1+r)^2+3(1-\beta)))
\end{alignat*}
\normalsize

Consider the case $n\geq \max\{16 \log(1/\delta), \frac{32(1+c)^{d+2}\log(1/\delta)}{(e^{\epsilon_c}-1)^2}\}$ holds for some constant $c\geq 1$, we specify local budget $\epsilon$ such that $e^\epsilon=\frac{n (e^{\epsilon_c}-1)^2}{256\log(1/\delta)}$ holds according to Equation \ref{eq:cloneamplify}, and specify the radius $r$ as: $$\Big(\big(\frac{n (e^{\epsilon_c}-1)^2}{256\log(1/\delta)}\big)^{1/(d+2)}-1\Big)^{-1}.$$
In this setting, we have $\beta\in [1/2,1]$ and $r\leq 1/c$, we thus obtain:
\small
\begin{alignat*}{4}
&&&\max_{x\in \mathbb{B}_{p,1}} \mathbb{E}[\|\tilde{x}-x\|_2^2]\\
&&\leq& \frac{d}{3\beta^2}(r^2+7(1-\beta))\\
&&\leq& \frac{d}{3\beta^2}(r^2+7\frac{(1+1/r)^d}{(e^{\epsilon}-1)+(1+1/r)^d})\\
&&\leq& \frac{d}{3\beta^2}(r^2+7\frac{(1+1/r)^d}{e^{\epsilon}})\\
&&\leq&\frac{4d}{3}\Big(\big(\frac{256\log(1/\delta)}{n(e^{\epsilon_c}-1)^2}\big)^{\frac{2}{d+2}}\cdot \frac{(c+1)^2}{c^2}+7\big(\frac{256\log(1/\delta)}{n(e^{\epsilon_c}-1)^2}\big)^{\frac{2}{d+2}}\Big)\\
&&\leq&{15 d}\cdot \Big(\frac{256\log(1/\delta)}{n(e^{\epsilon_c}-1)^2}\Big)^{\frac{2}{d+2}}.
\end{alignat*}
\normalsize

Alternatively, one may firstly transform the $\ell_{+\infty}$-norm vector into a $\ell_{2}$-norm bounded one, and utilizing the mechanism for hyper-ball. Similar utility can be guaranteed for both ways.

\section{Details on Experimental Implementation}\label{app:mechdetail}
In the experiments related to both spatial crowdsourcing and location-based social systems, user location data is confined within a two-dimensional cube domain $[-1,1]\times [-1,1]$. As a result:

\begin{itemize}
    \item For the Laplace mechanism, the privacy sensitivity parameter related to replacement is defined as $\Delta=4$. 
    \item In the PlanarLaplace mechanism, given that the maximum $\ell_2$-distance is $2\sqrt{2}$, we set the geo-indistinguishability parameter to $\epsilon/(2\sqrt{2})$ to ensure a fair comparison. 
    \item In mechanisms like Staircase and Squarewave, which originally operate in one-dimensional domain, the local budget is evenly distributed across two dimensions. This is crucial for generating meaningful location reports pertinent to these tasks. 
    \item For the PrivUnit mechanism, which uses an $\ell_2$-bounded unit vector as input, we convert the two-dimensional cube domain into a three-dimensional hyper-ball domain. After randomization, it's reverted back to its original two-dimensional form. To enhance performance, we further engage in a numerical search for the optimal hyper-parameter, following the approach in \cite{feldman2021lossless}.
    \item In the Minkowski response mechanism, we set $q=+\infty$ for the cap area to align with the input domain, and engage in a numerical search for the best-suited cap area radius $r$.
    \item We additionally introduce a classical mechanism by Duchi \textit{et al.} \cite{duchi2018minimax}, denoted as PrivHS, for comparison. 
\end{itemize}

In the experiments of federated learning with incentives, the (randomly) subsampled gradient vector has $6$ dimensions, thus the two-dimensional PlanarLaplace mechanism is unapplicable. For mechanisms like Staircase and Squarewave,  the local budget is evenly distributed across $6$ dimensions.

When privacy amplification via shuffling is applied in the PIC model, the parameter $\delta$ is fixed to $0.01/n_i$, where $n_i$ is the number of users in the same group $i$. In the spatial crowdsourcing applications, since one user is associated with at most one worker, the effective number $n'_{i}$ of amplification population is set to $n_i-1$; in the location-based social system applications, the effective number $n'$ of amplification population is set to $0.90\cdot n$ when neighboring radius $\tau=0.2$ and is set to $0.98\cdot n$ when neighboring radius $\tau=0.1$. 

We evaluate both the client-side and server-side running time of our protocol on a laptop computer embedded with Intel i$5$-8250U CPU @1.6GHz and $8$GB memory.

\clearpage
\nobalance

\section{Spatial Crowdsourcing}\label{app:expSC}

\begin{table*}[t]
\caption{Mean $\ell_2$-error (and running time) comparison of local $\epsilon$-LDP randomizers on location domain $[-1,1]\times[-1,1]$. All results are the expected value of $1000$ repeated experiments.}
\label{tab:errortime}
\small
\centering
\begin{tabular}{|c|l|l|l|l|l|l|l|}
\hline
\multicolumn{1}{|c|}{\textbf{randomizer}} & \multicolumn{1}{c|}{\textbf{$\epsilon=0.5$}} & \multicolumn{1}{c|}{\textbf{$\epsilon=1.0$}} & \multicolumn{1}{c|}{\textbf{$\epsilon=2.0$}} & \multicolumn{1}{c|}{\textbf{$\epsilon=3.0$}} & \multicolumn{1}{c|}{\textbf{$\epsilon=5.0$}}                         & \multicolumn{1}{c|}{\textbf{$\epsilon=8.0$}}                         & \multicolumn{1}{c|}{\textbf{$\epsilon=10.0$}}                        \\ \hline
PrivUnit \cite{bhowmick2018protection}            & {8.94} (49us)                                    & 4.68 (42us)                & 2.25 (61us)               & 1.44 (63us)               & 0.81 (113us)  & 0.32 (305us)  & 0.18 (868us)  \\ \hline
PrivUnitG\cite{asi2022optimal}           & \textbf{8.73} (76ms)                                    & 4.63 (75ms)               & 2.27 (77ms)               & 1.51 (76ms)               & 0.96 (74ms)                                       & 0.63 (75ms)                                       & 0.53 (81ms)                                                            \\ \hline
Laplace \cite{dwork2008differential}             & 12.97 (0.1us)                                  & 6.56 (0.1us)              & 3.27 (0.1us)              & 2.13 (0.1us)              & 1.30 (0.1us)                                      & 0.81 (0.1us)                                      & 0.64 (0.1us)                                                           \\ \hline
PlanarLaplace \cite{andres2013geo}       & 11.17 (12us)                                   & 5.63 (11us)               & 2.84 (11us)               & 1.88 (11us)               & 1.14 (11us)                                       & 0.71 (12us)                                       & 0.56 (12us)                                                            \\ \hline
Staircase\cite{geng2015staircase}           & 13.19 (49us)                                   & 6.40 (48us)               & 3.13 (46us)               & 2.01 (48us)               & 1.05 (44us)                                       & 0.46 (51us)                                       & 0.28 (50us)                                                            \\ \hline
SquareWave\cite{li2020estimating}          & 11.87 (1.9us)                                  & 5.72 (2.5us)              & 2.65 (1.9us)              & 1.68 (2.2us)              & 0.92 (1.7us)                                      & 0.53 (2.2us)                                      & 0.42 (2.1us)                                                           \\ \hline
MinkowskiResponse           & {10.42} (0.7us)                         & \textbf{4.50} (0.6us)     & \textbf{1.78} (0.8us)     & \textbf{0.98} (0.8us)     & \textbf{0.39} (0.6us)                             & \textbf{0.14} (0.7us)                             & \textbf{0.074} (0.8us)                                                 \\ \hline
\end{tabular}
\end{table*}

\begin{table*}[]
\caption{Running time and communication overheads of spatial crowdsourcing in the PIC model.}
\label{tab:timecomm}
\small
\centering
\begin{tabular}{llcllc}
\hline
\multicolumn{1}{|l|}{\textbf{User-side Procedures}} & \multicolumn{1}{l|}{\textbf{Time}} & \multicolumn{1}{c|||}{\textbf{Communication}} & \multicolumn{1}{l|}{\textbf{Server-side Procedures}} & \multicolumn{1}{l|}{\textbf{Time}} & \multicolumn{1}{c|}{\textbf{Communication}} \\ \hline
\hline
\multicolumn{1}{|l|}{location randomization} & \multicolumn{1}{l|}{0.7us} & \multicolumn{1}{c|||}{-} & \multicolumn{1}{l|}{decrypt one message} & \multicolumn{1}{l|}{2.1ms} & \multicolumn{1}{c|}{-} \\
\multicolumn{1}{|l|}{encrypt information} & \multicolumn{1}{l|}{2.9ms} & \multicolumn{1}{c|||}{-} & \multicolumn{1}{l|}{min-weight match (GMission)} & \multicolumn{1}{l|}{31ms} & \multicolumn{1}{c|}{1.2MB} \\
\multicolumn{1}{|l|}{send message to shuffler} & \multicolumn{1}{c|}{-} & \multicolumn{1}{c|||}{1.3KB} & \multicolumn{1}{l|}{maximum match (GMission)} & \multicolumn{1}{l|}{15ms} & \multicolumn{1}{c|}{1.2MB} \\
\multicolumn{1}{|l|}{retrieve matching result} & \multicolumn{1}{c|}{-} & \multicolumn{1}{c|||}{1.8KB} & \multicolumn{1}{l|}{min-weight match (EverySender)} & \multicolumn{1}{l|}{79ms} & \multicolumn{1}{c|}{4.3MB} \\
\multicolumn{1}{|l|}{key agreement with worker} & \multicolumn{1}{l|}{2.6ms} & \multicolumn{1}{c|||}{-} & \multicolumn{1}{l|}{maximum match (EverySender)} & \multicolumn{1}{l|}{63ms} & \multicolumn{1}{c|}{4.3MB}\\ \hline
\end{tabular}
\end{table*}

\textbf{Dataset descriptions. } The GMission dataset originates from a spatial crowdsourcing platform for scientific simulations. It contains information about every task, including its description, location, time of assignment, and deadline (in minutes). Furthermore, it provides data about each worker, comprising their location, maximum activity range (in kilometers), etc. EverySender, on the other hand, represents a campus-based spatial crowdsourcing platform, facilitating everyone to post micro tasks like package collection or to act as a worker. Similar to GMission, EverySender dataset also carries detailed information for every task and worker. We assume each worker's capacity as one and, for simplicity, we consider only the location information of the taskers/workers for matching purposes.

Before exploring spatial crowdsourcing tasks, we first evaluate the utility and efficiency of our proposed Minkowski response mechanism against existing mechanisms in the LDP model, as presented in Table \ref{tab:errortime}. This evaluation focuses on errors and running times associated with reporting locations under local $\epsilon$-DP constraints. We quantify the error using the expected $\ell_2$ distance between each true and noisy location pair. Notably, the Minkowski response showcases an $\ell_2$ error comparable to state-of-the-art (SOTA) mechanisms when $\epsilon\leq 1.0$. Its superiority is evident when $\epsilon\geq 2.0$, displaying a significantly reduced error. This supports the theoretical assertions made in Section \ref{sec:ldpmechanism}, underscoring the importance of the Minkowski response for the shuffle model, over other existing randomizers. Additionally, the Minkowski response is highly efficient, demanding less than $1$us on the user end. This speed is nearly equivalent to merely adding Laplace noises and is roughly $50$x faster than the previously SOTA PrivUnit.

We present the computation and communication overheads of the PIC model for spatial crowdsourcing in Table \ref{tab:timecomm}. The total running time for each user is only a few milliseconds (with the data randomization time being negligible), and the total communication overhead is under $4$KB. On the server side, the running time of the matching algorithm on noisy plaintext is consistent with non-private settings, scaling with the number of users/workers and taking dozens of milliseconds. The additional decryption/encryption runtime is capped at several seconds, mirroring the costs of HTTPS decryption/encryption. Specifically, it utilizes ECDHE for key agreement and AES encryption as with HTTPS on TLS 1.3 \cite{rescorla2018transport} for client-server communication. In essence, the running times on both the client and server sides of the PIC model closely resemble those in non-private settings that use HTTPS communication (for a fair comparison).

\section{Location-based Social Systems}
\textbf{Dataset descriptions. } Gowalla is a location-based social networking website where users share their locations by checking-in. The Gowalla consists a total of 6,442,890 check-ins of $138368$ users over the period of Feb. 2009 - Oct. 2010, in the San Francisco, CA. The Foursquare dataset contains check-ins in New York city collected for about 10 month (from 12 April 2012 to 16 February 2013). It contains 227,428 check-ins in New York city. Each check-in is associated with its time stamp, its GPS coordinates and its semantic meaning (represented by fine-grained venue-categories). 


To further illustrate the effects of neighboring radius $\tau$ on performances, we consider $(\tau=0.1)$-radius nearest neighbor queries, where each user has several tens or hundreds of neighbors. The experimental results on the LDP model are presented in Figure \ref{fig:LBSf1local01}, and the results on the PIC shuffle model are presented in Figures \ref{fig:LBSdistanceshuffle01} and \ref{fig:LBSf1shuffle01}. Considering post-computation user-to-user communications, and each user has about $n\cdot\frac{\pi\cdot 0.1^2}{2^2}\approx n\cdot 0.8\%$ neighbors, we use $\lfloor n\cdot 98\%\rfloor$ when deriving the local budget given the global budget $\epsilon_c$. The combination of the PIC model and the Minkowski response mechanism outperforms competitors in most scenarios. The Minkowski randomizer can achieve an $F_1$ score of 0.75 with a stringent global budget $\epsilon_c=1$ and over 0.80 with $\epsilon_c=3$. Compared to the $\tau=0.2$ scenarios, the $\tau=0.1$ cases retrieve nearer (and fewer) neighbors, but can be less stable under DP noises (i.e., has lower F1 scores).

\begin{figure}[!htb]
\begin{center}
\centerline{\includegraphics[width=78mm]{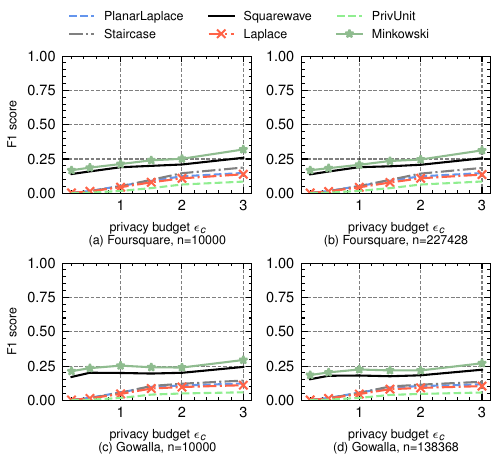}}
\vspace*{-0.8em}
\caption{F1 scores of nearest neighbor queries (LDP model) with radius $\tau=0.1$.}
\label{fig:LBSf1local01}
\end{center}
\vspace*{-1.2em}
\end{figure}

\begin{figure}[!htb]
\begin{center}
\centerline{\includegraphics[width=78mm]{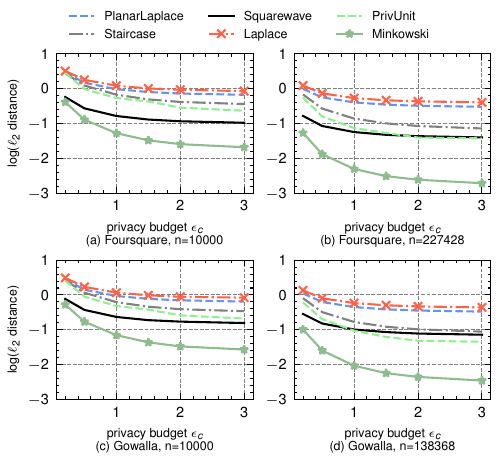}}
\vspace*{-0.8em}
\caption{Expected $\ell_2$ distances of reported locations in location-based social systems with PIC model with radius $\tau=0.1$.}
\label{fig:LBSdistanceshuffle01}
\end{center}
\vspace*{-1.2em}
\end{figure}

\begin{figure}[!htb]
\begin{center}
\centerline{\includegraphics[width=78mm]{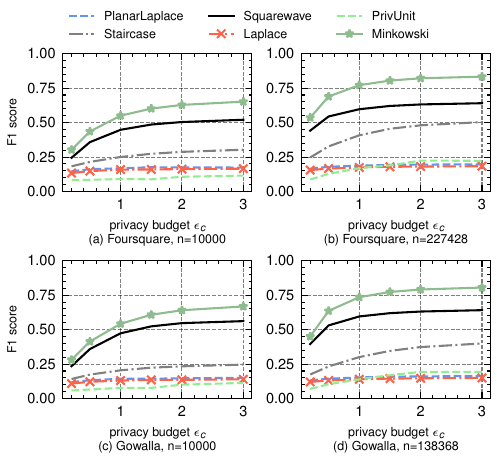}}
\vspace*{-0.8em}
\caption{F1 scores of nearest neighbor queries (PIC model) with radius $\tau=0.1$.}
\label{fig:LBSf1shuffle01}
\end{center}
\vspace*{-0.5em}
\end{figure}




\section{Federated Learning with Incentives}\label{app:expFL}
Among 60000 images in the MNIST dataset, 50000 of them are designated as training samples, 5000 of them works as the validation dataset and the remaining 5000 images works as the test dataset. During the 80 rounds of (noisy) gradient aggregation and model updating, we use Adam optimizer with learning rate 0.05 and decay rate 0.99.

\begin{table*}[!htbp]
\caption{Comparison of various approaches to multi-party PIC computation, with $C$ gates and $h$ depth of computation circuits, and $(\epsilon,\delta)$-DP where $\epsilon=O(1)$, where $\lambda$ is the security parameter, the $n$ is number of clients/inputs, $d$ is the dimension of inputs. The \emph{Privatization Error} is the expected mean squared error between an input and its reported values due to differential privacy; the \emph{Algo. on Plaintext} indicates whether the computation circuit is running on un-encrypted plaintext.}
\label{tab:mpec}
\centering
\fontsize{8}{9}\selectfont
\begin{tabular}{l|c|c|c|c|c}
\hline
\hline
\textbf{Paradigm} & \textbf{\begin{tabular}[c]{@{}c@{}}Client \\Comm. Rounds\end{tabular}} & \textbf{\begin{tabular}[c]{@{}c@{}}Client \\Comm. Bits\end{tabular}} & \textbf{\begin{tabular}[c]{@{}c@{}}Orchestrator\\ Comput. Costs\end{tabular}} & \textbf{\begin{tabular}[c]{@{}c@{}} Algo. on\\ Plaintext? \end{tabular}} & \textbf{\begin{tabular}[c]{@{}c@{}}Privatization\\ Error\end{tabular}} \\ \hline
plaintext & $1$ & $O(d)$ & $O(nd)$ & ${\color{green}\yesmark}$ & -  \\ \hline
plaintext with secure communication & $1$ & $O(\max\{d,\lambda\})$ & $O(n\cdot\max\{d,\lambda\})$ & ${\color{green}\yesmark}$ & -  \\ \hline
\hline
GMW/BGW protocols \cite{goldreich2019play,ben2019completeness} & $h$ & $O(Poly(n)\cdot C\cdot \lambda)$ & $O(Poly(n)\cdot C\cdot \lambda)$ & ${\color{orange}\nomark}$ & - \\ \hline
BMR garbled circuits \cite{yao1986generate,beaver1990round} & $3$ & $O(Poly(n)\cdot C\cdot \lambda)$ & $O(Poly(n)\cdot C\cdot \lambda)$ & ${\color{orange}\nomark}$ & - \\ \hline
Ishai's anonymous model \cite{ishai2006cryptography}+ABT\cite{applebaum2021perfect} & $3$ & $O(Poly(n)\cdot C\cdot \lambda)$ & $O(Poly(n)\cdot C\cdot \lambda)$ & ${\color{orange}\nomark}$ & - \\ \hline
Beimel's anonymous model \cite{beimel2020round} & $2$ & $O(Poly(n)\cdot C\cdot \lambda)$ & $O(Poly(n)\cdot C\cdot \lambda)$ & ${\color{orange}\nomark}$ & - \\ \hline
\hline
local DP model \cite{kasiviswanathan2011can} with secure communication & $1$ & $O(\max\{d,\lambda\})$ & $O(n\cdot\max\{d,\lambda\})$ & ${\color{green}\yesmark}$ & $O(d/\epsilon^2)$ \\ \hline
PIC model & $1$ & $O(\max\{d,\lambda\})$ & $O(n\cdot\max\{d,\lambda\})$ & ${\color{green}\yesmark}$ & $\tilde{O}(1/(n \epsilon^2)^{\frac{2}{d+2}})$ \\ \hline
\hline
\end{tabular}
\end{table*}

We use Shapley value to measure the contribution of each gradient report. We define the utility function of Shapley payoff as cosine similarity between aggregated private gradient and the true gradient $\texttt{grad}_{val}$ (which is the average gradient over the validation dataset):
\[U(S)=\frac{<\texttt{grad}_{val}, \sum_{i\in S} g_i>}{\|\texttt{grad}_{val}\|_2\cdot\|\sum_{i\in S} g_i\|_2},\]
so as to efficiently approximate the negative loss function over the validation dataset. We note that one can use other Shapley-value utility functions within our PIC model, we choose this one for efficiency.  
Then, the Shapley value of one single gradient update $g_i$ is computed as follows:
\[\texttt{Shapley}_i=\frac{1}{n}\sum_{k=1}^n {n\choose k-1}\sum_{\substack{S\subseteq [n]\backslash \{i\},\\|S|=k-1}} U(S\cup \{i\})-U(S).\]
In cases the server receives sanitized gradient information, the corresponding sanitized version of Shapley values is computed in the same way, except using sanitized gradient in the utility function.

The Shapley values themselves may severely leak sensitive information about user data \cite{luo2022feature}. Despite the vast non-linear computations involved in evaluating Shapley values with neural networks, no rigorous protection has been offered for federated learning with incentives in existing literature. In this study, we have demonstrated the feasibility of \emph{computing Shapley value without incurring additional privacy loss}.

\section{Comparison of Private Permutation-equivariant Multi-party Computation}\label{app:mpec}

We perform comparisons across a broader range of permutation-equivariant computation tasks. These tasks include bipartite matching and federated learning with incentives. We detail these comparisons in Table \ref{tab:mpec}, emphasizing general permutation-equivariant computation tasks characterized by $C$ arithmetic circuit gates and $h$ circuit depth. Specifically:

\begin{itemize}
\item The nearest neighboring task aims to find $k$ nearest neighbors for each party, the $C$ is of the order $n^3$, and the $h$ is of the order $\log n$.
\item In minimum weight bipartite matching using LAPJVsp algorithm \cite{jonker1988shortest}, both $C$ and $h$ are of the order $n^3$ where $n$ is the number of nodes in the bipartite graph.
\item In maximum bipartite matching using Hopcroft-Karp algorithm \cite{hopcroft1973n}, $C$ is of the order $E\sqrt{n}$, $h$ is of the order $\sqrt{n}$, where $E$ is the number of edges in the bipartite graph.
\item For federated learning with Shapley incentives, which uses accuracy on the validation dataset as the utility function, $C=n_{val}\cdot N\cdot M$ can become exceptionally large, where $N$ isthe number  of neurons in neural networks, $n_{val}$ is the number of samples in the validation dataset, and $M$ is the number of Monte Carlo evaluations. The $h$ is the depth of the neural network.
\end{itemize}


We note that the Ishai's MPC model \cite{ishai2006cryptography} and Beimel's MPC model \cite{beimel2020round} also use a role of message shuffler, but their shufflers must be fully trustable. In general, MPC-based approaches impose high computation and interaction overheads, especially when the computation algorithm gets sophisticated. As comparison, the proposed paradigm in this work permits running arbitrary algorithms on plaintext, and requires only \emph{semi-trustness/honest-but-curious} assumption on the shuffler (as the shuffler only sees ciphertexts, see Section \ref{sec:protocol}).

Our PIC model does not necessitate algorithmic computation on the user side. Additionally, it permits the orchestrating server to operate on plaintext, making it considerably more efficient in computation and communication compared to MPC-based methods. In fact, the cost of the model grounded on hybrid encryption (covered in Section \ref{sec:protocol}) virtually mirrors non-private settings that maintain secure communications, such as those employing HTTPS. Besides, our model is compatible with existing achievements and future advancements in server-side algorithms (e.g., noise-aware matching algorithms for spatial crowdsourcing and combinatorial optimization).

\section{Security Proof}\label{app:proof}
This part provides security proof for the PIC protocol in \S\ref{sec:protocol}.  

\subsection{Protocol Setting and Security Goals}\label{app:proof:threat-model}

\textbf{Protocol setting}.
Our permutation-invariant computation (PIC) protocol involves $m$ groups of clients and one single computing server $S$.
These $m$ groups of parties want to jointly compute some pre-defined computing task, formalized as a multi-input function $f$, with the help of $S$. At the end of the protocol, each client may receive a function output.

\textbf{Security definitions and goals}. 
We follow the simulation-based security model with semi-honest adversaries, who will faithfully follow the protocol specifications but try to learn more information than allowed through protocol interaction. 
Security goals are formally captured as an ideal functionality $\mathcal{F}$. 
$\mathcal{F}$ receives inputs from the parties, performs computation, and sends the computation result back to the parties. 
Roughly, the security goals include \textit{privacy} and \textit{correctness}. 
Privacy requires that the adversary can only learn information as allowed but nothing more, while correctness requires that the computation is done correctly. We note that correctness is easy to achieve for semi-honest protocols.  

{\textbf{Remark on privacy guarantee}}.   
It's known that security proof for a secure computation protocol demonstrates that no additional information about parties' inputs is revealed, \textit{except} the computation output and any allowed/inherent leakage, which is well-captured in the definition of an ideal functionality. 
However, we note the computation output (combined with captured leakage) may contain a significant amount of private information about parties' inputs. 
Our PIC protocol achieves differential privacy, which provides a trade-off between privacy and utility. 
For this part, we refer to \S\ref{sec:privacy} for a formal analysis of the DP guarantees offered by our protocols.

\subsection{Ideal Functionality}\label{app:proof:functionality}

\textbf{The ideal shuffle functionality $\mathcal{F}_{\scriptscriptstyle \textsf{Shuffle}}$}. The shuffle functionality $\mathcal{F}_{\scriptscriptstyle \textsf{Shuffle}}$ receives $n$ inputs from $n$ input providers and outputs $n$ randomly permuted outputs of the original inputs. 
Possible instantiations of $\mathcal{F}_{\scriptscriptstyle \textsf{Shuffle}}$ include trusted hardware and securely evaluating a permutation network using MPC. 

\textbf{The ideal Permutation-Invariant-Computation functionality $\mathcal{F}_{\scriptscriptstyle \textsf{PIC}}$}.
The ideal PIC functionality captures the core features of our permutation invariant computation.
It receives inputs from $m$ groups of parties, adds noise to each input, randomly shuffles inputs of each group, and performs computation over the noisy inputs. 
At the end of the protocol, $\mathcal{F}_{\scriptscriptstyle \textsf{PIC}}$ sends each party a computation result, and $\mathcal{F}_{\scriptscriptstyle \textsf{PIC}}$ additionally sends all randomized inputs and all function outputs to the server.

\subsection{Proof}\label{app:proof:the-proof}
We prove the security of our protocol in the static corruption setting, where the adversary specifies the corruption parties before running the protocol. 
Let $\mathcal{C}$ be the collection of corrupted parties and $\mathcal{C} \subset G_1 \cup G_2 \cdots G_n \cup \{S\}$, and $\mathcal{H} = G_1 \cup G_2 \cdots G_n \cup \{S\} \setminus \mathcal{C}$ be the remaining honest parties. 
The proof is simulation-based. It shows that for any PPT adversary $\mathcal{A}$ who corrupts a set of parties, there exists a PPT simulator $\mathcal{S}$ that can generate a simulated view indistinguishable from the view of real-world execution. 
Note that the simulator learns the corrupted parties' input and output (i.e., computation result and allowed leakage) for the view simulation in the semi-honest security model. 
In this proof, we assume some existing ideal functionalities (e.g., $\mathcal{F}_{\scriptscriptstyle \textsf{Shuffle}}$) in the hybrid model, and we directly use existing simulators for these functionalities when needed. 
Depending on whether $S \in \mathcal{C}$, we separately prove the security of our PIC protocol in two cases as follows: 

\textbf{{Case 1: $S \notin \mathcal{C}$}}. 
In this case, all corrupted parties are clients. We construct a simulator $\mathcal{S}$ for view simulation as follows:

\begin{enumerate}
\item The simulator $\mathcal{S}$ sets up global parameters of the system as the real protocol execution, including the specification of a public key encryption scheme $\Pi=(\textsf{Gen}, \textsf{Enc},\textsf{Dec})$, security parameter $\lambda$, the server's public key $pk_c$ from invoking $\textsf{Gen}(\lambda)$, each client groups $G_i$ $(i\in[m])$, and the data randomization mechanisms $\mech{R}_i$ for each group. $\mathcal{S}$ generates public keys for all parties.

\item For the $j$-th client $u_{i,j}$ in group $G_i$, if $u_{i,j}$ is corrupted by $\mathcal{A}$, $\mathcal{S}$ runs as the real protocol execution: It randomizes $u_{i,j}$'s input $x_{i,j}$ with mechanism $\mech{R}_i$ and obtains $x_{i,j}'=\mech{R}_i(x_{i,j})$. Then the sanitized input is concatenated with  $u_{i,j}$'s public key $pk_{i,j}$, and encrypted with the server's public key $x_{i,j}''=\textsf{Enc}_{pk_c}( pk_{i,j}||x'_{i,j})$. 
If $u_{i,j}$ is not corrupted, $\mathcal{S}$ generates a ciphertext $x_{i,j}'' = \textsf{Enc}_{pk_c}(\mathbf{0})$ from the ciphertext domain (of the public encryption scheme $\Pi$) as the simulated ciphertext from $u_{i,j}$, where $\mathbf{0}$ denotes a vector of 0s of equal length as $pk_{i,j}||x'_{i,j}$.

\item $\mathcal{S}$ invokes $\mathcal{S}_{\scriptscriptstyle \textsf{Shuffle}}$ to simulate the view involved in the shuffle protocol. 
In particular, $\mathcal{S}$ has ciphertexts $\{x_{i,j}''\}_{i \in [m],j \in [n_i]}$ as the input to $\mathcal{S}_{\scriptscriptstyle \textsf{Shuffle}}$. 
Additionally, $\mathcal{S}$ generates ciphertexts $\{\widetilde{x}_{i,j}''\}_{i \in [m],j \in [n_i]}$, where each $\widetilde{x}_{i,j}''$ is generated as follows: 
    For each corrupted party $u_{i,j} \in \mathcal{C}$, $\mathcal{S}$ learns $\pi_i(j)$ from the leakage profile $\mathcal{L}$. Then $\mathcal{S}$ generates $\widetilde{x}''_{i,\pi_i(j)}$ by randomizing ciphertext $\widetilde{x}''_{i,j}$. 
    For $\widetilde{x}_{i,j}''$ corresponding to an honest party $u_{i,j}$, $\mathcal{S}$ simply generates $\widetilde{x}_{i,j}'' = \textsf{Enc}_{pk_c}(\mathbf{0})$. 
$\mathcal{S}$ then invokes $\mathcal{S}_{\scriptscriptstyle \textsf{Shuffle}}(\{x_{i,j}''\}_{i \in [m],j \in [n_i]},$ $\{\widetilde{x}_{i,j}''\}_{i \in [m],j \in [n_i]}$, $\mathcal{L})$ to generate the view for the shuffle phase, and appends it as a part of the simulated view for $\mathcal{S}$.  


\item The last piece of simulation is to ensure consistency between the prior view and the computation result of $f$. 
To this end, $\mathcal{S}$ works as follows: 
    For each corrupted party $u_{i,j} \in \mathcal{C}$, $\mathcal{S}$ obtains $y_{i, \pi_i(j)}$ from $\mathcal{F}_{\scriptscriptstyle \textsf{PIC}}$ and $\pi_i(j)$ (from the leakage profile $\mathcal{L}$). It then generates $(pk_{i,\pi_i(j)},\textsf{Enc}_{pk_{i,\pi_i(j)}}(y_{i,\pi_i(j)})$, which is then arranged as the $j$-th encrypted output within group ${G}_i$. 
    If party $u_{i,j}$ is not corrupted, $\mathcal{S}$ generates $(pk_{i,j},c_{i,j})$, where $c_{i,j} = \textsf{Enc}(\mathbf{0})$ is ciphertext of $\mathbf{0}$ with the same length as $y_{i,j}$. 
\end{enumerate} 

Below, we show the simulated view is indistinguishable from real protocol execution via the following hybrid games. 

$\mathbf{G_1}$: This is the real protocol execution. 

$\mathbf{G_2}$: $\mathbf{G_2}$ is same as $\mathbf{G_1}$, except the following difference: 
    For all ciphertexts corresponding to the honest clients, $\mathbf{G_2}$ generates the ciphertexts as the encryption of $\mathbf{0}$ with proper length.
    Due to the IND-CPA security of the public encryption scheme $\Pi$,  $\mathbf{G_2}$ is computationally indistinguishable from $\mathbf{G_1}$. 
    
$\mathbf{G_3}$: $\mathbf{G_3}$ is same as $\mathbf{G_2}$, except the following difference: 
    $\mathbf{G_3}$ uses the leakage profile $\mathcal{L}$ to arrange the ciphertexts that should be outputted from the shuffling phase. 
    In particular, for each ciphertext corresponding to a corrupted party $j$ of group $i$ after the shuffle phase, $\mathbf{G}_3$ puts the input ciphertext to the coordinate $\pi_i(j)$ and randomizes the ciphertext. 
The view distribution is identical to $\mathbf{G}_2$. 

$\mathbf{G_4}$: $\mathbf{G_4}$ is same as $\mathbf{G_3}$, except the following difference: 
    $\mathbf{G_4}$ invokes the simulator $\mathcal{S}_{\scriptscriptstyle \textsf{Shuffle}}$ to simulate the view corresponding to the view of shuffling. 
    Since our protocol works in the hybrid model, $\mathbf{G_4}$ is identical to the view from $G_3$. 
    Also note that $\mathbf{G}_4$ works the same as the simulator $\mathcal{S}$. 

Overall, the view generated by $\mathcal{S}$ is computationally indistinguishable from the view of real protocol execution in the $\mathcal{F}_{\scriptscriptstyle \textsf{Shuffle}}$-hybrid model. 

\textbf{{Case 2: $S \in \mathcal{C}$}}. 
The server $S$ is also corrupted in this case. 
We construct a simulator $\mathcal{S}$ for view simulation as follows:

\begin{enumerate}
\item The simulator $\mathcal{S}$ sets up global parameters of the system as the server does in the real protocol execution, including the specification of a public key encryption scheme $\Pi=(\textsf{Gen}, \textsf{Enc},\textsf{Dec})$, security parameter $\lambda$, the server's public key $pk_c$ from invoking $\textsf{Gen}(\lambda)$, each client groups $G_i$ $(i\in[m])$, and the data randomization mechanisms $\mech{R}_i$ for each group. 
$\mathcal{S}$ generates public keys for all parties like the real protocol execution. 

\item \label{2:step:2} 
To simulate the view corresponding to the encrypted computation output, $\mathcal{S}$ receives $(L, f(L))$ from the output of $\mathcal{F}_{\scriptscriptstyle \textsf{PIC}}$.  
$\mathcal{S}$ pareses $f(L)$ as $(\{z_{1,j}\}_{j},\cdots,\{z_{m,j}\}_{j})$; note that $(\{z_{1,j}\}_{j},\cdots,\{z_{m,j}\}_{j})$ corresponds to $\{y_{1,\pi_1(j)}\}_{j\in [n_1]},\cdots,\{y_{m,\pi_m(j)}\}_{j\in [n_m]}$ in the real protocol execution; here $\mathcal{S}$ doesn't know the involved permutations $\{\pi_i\}_{i \in [m]}$, except the information from the leakage profile $\mathcal{L}$. 
$\mathcal{S}$ performs simulation as follows. 
\begin{itemize}
        \item 
        For a \textit{corrupted} party $u_{i,j}$, $\mathcal{S}$ generates a ciphertext $(pk_{i,j},$ $\textsf{Enc}_{pk_{i,j}}(z_{i,\pi^{-1}_i(j)}))$ that will be placed to the $\pi^{-1}_i(j)$-th encrypted output in group $G_i$. 
    
        \item 
        For all other \textit{non-corrupted} parties, $\mathcal{S}$ randomly selects permutations $\{\widetilde{\pi}_i\}_{i \in [m]}$ under the constraint of the leakage profile $\mathcal{L}$, which means that $\widetilde{\pi}_i(j) = \pi_i(j)$ for each corrupted party $u_{i,j}$, where $\{\pi_i\}_{i \in [m]}$ is the collection of secret permutations used for shuffling in the real protocol execution. 
        Using $\{\widetilde{\pi}_i\}_{i \in [m]}$ and honest parties' public keys, $\mathcal{S}$ can encrypt the computation results corresponding to the honest parties accordingly. 
        Specifically, $\mathcal{S}$ generates $(pk_{i,j},\textsf{Enc}_{pk_{i,j}}(z_{i,\widetilde{\pi}^{-1}_i(j)}))$ that will be placed as the $\widetilde{\pi}^{-1}_i(j)$-th encrypted output in group $G_i$. 

    \end{itemize}
    
\item \label{2:step:3}  
To simulate the view corresponding to the shuffling phase, $\mathcal{S}$ first obtains (shuffled) input lists $L$; these shuffled lists are revealed to the server in real protocol execution, and here $\mathcal{S}$ obtains the information from the ideal functionality $\mathcal{F}_{\scriptscriptstyle \textsf{PIC}}$. 
Then, for the $j$-th client $u_{i,j}$ in group $G_i$, to simulate the input ciphertexts before the shuffling, the simulator $\mathcal{S}$ does the following: 
\begin{itemize} 
    \item If $u_{i,j}$ is corrupted, $\mathcal{S}$ gets the sanitized, shuffled input $x_{i,\pi_i^{-1}(j)}'$ from $L$. Then $x_{i,\pi_i^{-1}(j)}'$ is concatenated with $u_{i,j}$'s public key $pk_{i,j}$, and encrypted with the server's public key $x_{i,j}''=\textsf{Enc}_{pk_c}(pk_{i,j}||x'_{i,\pi_i^{-1}(j)})$. 
    
    \item If $u_{i,j}$ is non-corrupted, $\mathcal{S}$ gets the sanitized, shuffled input $x_{i,\widetilde{\pi}_i^{-1}(j)}'$ from $L$. Then $x_{i,\widetilde{\pi}_i^{-1}(j)}'$ is concatenated with $u_{i,j}$'s public key $pk_{i,j}$, and encrypted with the server's public key $x_{i,j}''=\textsf{Enc}_{pk_c}(pk_{i,j}||x'_{i,\widetilde{\pi}_i^{-1}(j)})$.   
\end{itemize} 
Clearly, $\mathcal{S}$ uses $\{\widetilde{\pi}_i\}_{i \in [m]}$ to arrange these ciphertexts in the above simulation. 
The generated ciphertexts above serve as the inputting ciphertexts for the shuffling phase. 
To simulate the output ciphertexts after the shuffling, $\mathcal{S}$ simply permutes all inputs ciphertexts using $\{\widetilde{\pi}_i\}_{i \in [m]}$, and re-randomizes each ciphertext. 
$\mathcal{S}$ then invokes the simulator $\mathcal{S}_{\scriptscriptstyle \textsf{Shuffle}}$ over the inputting ciphertexts, the outputting ciphertexts, and the leakage profile $\mathcal{L}$. 
$\mathcal{S}$ appends the generated view from $\mathcal{S}_{\scriptscriptstyle \textsf{Shuffle}}$ as a part of its own simulation. 

\end{enumerate} 

Below, we show the simulated view is indistinguishable from real protocol execution via the following hybrid games. 

$\mathbf{G_1}$: This is the real protocol execution. 

$\mathbf{G_2}$: $\mathbf{G_2}$ is same as $\mathbf{G_1}$, except the following difference:
    $\mathbf{G_2}$ uses randomly sampled permutations $\{\widetilde{\pi}_i\}_{i \in [m]}$ with the constraint that $\widetilde{\pi}_i(j) = \pi_i(j)$ for a corrupted party $u_{i,j}$ for $i \in [m]$. 
    The security of secure shuffling ensures that the difference from $\widetilde{\pi}_i(j) \neq \pi_i(j)$ for any non-corrupted party $u_{i,j}$ in $\mathbf{G}_2$ is indistinguishable from~$\mathbf{G}_1$. 

$\mathbf{G_3}$:
    $\mathbf{G_3}$ uses the function output $f(L)$ and $\{\widetilde{\pi}_i\}_{i \in [m]}$ to generated the final encrypted output for each party. For all other parts, $\mathbf{G_3}$ remains the same as $\mathbf{G_2}$. 
    In particular, $\mathbf{G_3}$ encrypts the desired outputs for all parties using $f(L)$ and $\{\widetilde{\pi}_i\}_{i \in [m]} \}$, as done in Step (\ref{2:step:2}) of $\mathcal{S}$. 
    $\mathbf{G_3}$ is perfectly indistinguishable from $\mathbf{G_2}$. 

$\mathbf{G_4}$:
    $\mathbf{G_4}$ uses the shuffled output $L$ and $\{\widetilde{\pi}_i\}_{i \in [m]}$ to generated ciphertexts before and after the shuffling. 
    For all other parts, $\mathbf{G_4}$ remains the same as $\mathbf{G_3}$. 
    In particular, $\mathbf{G_4}$ encrypts the desired outputs for all parties using $L$ and $\{\widetilde{\pi}_i\}_{i \in [m]} \}$, as done in Step (\ref{2:step:3}) of $\mathcal{S}$. 
    $\mathbf{G_4}$ is perfectly indistinguishable from $\mathbf{G_3}$. 

\textbf{$\mathbf{G_5}$:} 
    The only difference between $\mathbf{G_5}$ and $\mathbf{G_4}$ is that $\mathbf{G_5}$ invoke $\mathcal{S}_{\scriptscriptstyle \textsf{Shuffle}}$ to simulate the view for the shuffling phase. 
    Note that $\mathbf{G_5}$ works exactly as the simulator $\mathcal{S}$. 
    $\mathbf{G_5}$ is perfectly indistinguishable from $\mathbf{G_4}$.  
    
Overall, the view generated by $\mathcal{S}$ is perfectly indistinguishable from the view of real protocol execution in the $\mathcal{F}_{\scriptscriptstyle \textsf{Shuffle}}$-hybrid model.  
Intuitively, the only secret information when $S$ is corrupted is the hidden part of the permutations $\{\pi_i\}_{i\in [m]}$ corresponding to each honest party $u_{i,j}$ for all $i \in [m]$ and $j \in [|\mathbf{G}_j|]$, which can be perfectly simulated in the $\mathcal{F}_{\scriptscriptstyle \textsf{Shuffle}}$-hybrid model. 

\section{Privacy Proof for Theorem \ref{the:privacy}}\label{app:privacyproof}

Our proof frequently uses the data processing inequality of DP. The data processing inequality is considered a key feature of distance measures (e.g., Hockey-stick divergence) used for evaluating privacy. It asserts that the privacy guarantee cannot be weakened by further analysis of a mechanism's output.

\begin{definition}[Data processing inequality]\label{def:postprocess}
A distance measure $D:\Delta(\mathcal{S})\times\Delta(\mathcal{S})\to[0,\infty]$ on the space of probability distributions satisfies the data processing inequality if, for all distributions $P$ and $Q$ in $\Delta(\mathcal{S})$ and for all (possibly randomized) functions $g:\mathcal{S}\to\mathcal{S'}$, 
\[D(g(P)\|g(Q))\le D(P\|Q).\]
\end{definition}

The situations of victim user $u_{i^*,j^*}$'s privacy can be categorized into three case: (1) $u_{i^*,j^*}$ himself/herself is corrupted; (2) $u_{i^*,j^*}$ is not corrupted and the server $S$ is a corrupted party (i.e., $S\in C$); (3) neither $u_{i^*,j^*}$ nor the server $S$ is corrupted. In the first case, protecting the privacy of $u_{i^*,j^*}$ against himself/herself is trivial. In the second case, a key observation is that the received computation results of all corrupted parties (including $S$) in $C$ are rooted from the server's received messages $\{x_{i,\pi_(j)}''\}_{i\in[m],j\in [n_i]}$ in Step (4). We let variable $Y'$ denote the server's received messages $\{x_{i,\pi_(j)}''\}_{i\in[m],j\in [n_i]}$ when the input datasets are $X'$, and let variable $Y$ denote the received messages when the input datasets are $X$. Then, for any distance measure $D$ that satisfies the data processing inequality, we readily have the divergence level of the adversaries' view (i.e. $\mech{A}(X)$ when the input datasets are $X$, and $\mech{A}(X')$ when the input datasets are $X'$) is upper bounded by the one about the server's received messages:
\begin{equation}\label{eq:a2}
D(\mech{A}(X)\|\mech{A}(X'))\leq D(Y\| Y').
\end{equation}

We now focus on the DP guarantee of the server's received messages in Step 4 (i.e. the divergence $D(Y\| Y')$). For neighboring datasets $X$ and $X'$ that differ at (and only at) the user $u_{i^*,j^*}$ from group $i^*$, we have the $D(Y\| Y')$ upper bounded by the divergence of shuffled messages from \emph{uncorrupted users} in group $X_{i^*}$. To be precise, we let $\hat{Y}_{i^*}$ denote the shuffled messages $\mech{S}(\{x_{i^*,j}''\}_{u_{i^*,j}\in G_{i*}\backslash C_{i^*}})$ of group $i^*$ when the input dataset is $X$, and let $\hat{Y}'_{i^*}$ denote the shuffled messages $\mech{S}(\{x_{i^*,j}''\}_{u_{i^*,j}\in G_{i*}\backslash C_{i^*}})$ when the input dataset is $X'$. Then for any distance measure that satisfies the data processing inequality, we have $D(Y\| Y')\leq D(\hat{Y}_{i^*}\| \hat{Y}'_{i^*})$ (see Lemma \ref{lemma:y2yi} for proof).

\begin{lemma}\label{lemma:y2yi}
For neighboring datasets $X$ and $X'$ that differ at (and only at) the user $u_{i^*,j^*}$ from group $i^*$, we have:
$$D(Y\| Y')\leq D(\hat{Y}_{i^*}\| \hat{Y}'_{i^*}).$$
\end{lemma}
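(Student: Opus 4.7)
\textbf{Proof plan for Lemma \ref{lemma:y2yi}.} The plan is to exhibit $Y$ (resp.\ $Y'$) as a randomized post-processing of $\hat{Y}_{i^*}$ (resp.\ $\hat{Y}'_{i^*}$) that uses only ingredients whose joint distribution is identical under $X$ and $X'$, and then invoke the data processing inequality (Definition \ref{def:postprocess}). We may assume $u_{i^*,j^*}\notin C_{i^*}$, since if the differing user were corrupted the lemma is vacuous via Eq.~\eqref{eq:a2}.

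First, I would enumerate the building blocks of $Y$ and classify them by whether their distributions depend on the single differing input $x_{i^*,j^*}$. For every group $i\neq i^*$, the shuffled block $\mech{S}(\{x''_{i,j}\}_{j\in[n_i]})$ is a function only of inputs outside group $i^*$, so it is identically distributed under $X$ and $X'$. Within group $i^*$, the sub-list of encrypted messages coming from corrupted users $C_{i^*}$ is also identically distributed under $X$ and $X'$, because those users' sanitized outputs depend only on their own (unchanged) inputs and independent randomness. Call the aggregation of all such $X/X'$-invariant quantities $W$, sampled once from its common distribution.

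Next, I would construct a randomized map $g$ such that $g(\hat{Y}_{i^*}, W)$ has the same distribution as $Y$. For groups $i\neq i^*$, $g$ copies the corresponding shuffled blocks out of $W$. For group $i^*$, $g$ takes the $|G_{i^*}|-|C_{i^*}|$ shuffled uncorrupted messages in $\hat{Y}_{i^*}$ together with the $|C_{i^*}|$ shuffled corrupted messages carried in $W$, and then draws a uniformly random interleaving (equivalently, a uniformly random set of $|C_{i^*}|$ slots out of $n_{i^*}$) to merge the two ordered lists into a single length-$n_{i^*}$ list. The key distributional identity to verify is that a uniform permutation $\pi_{i^*}$ of $n_{i^*}$ elements factors as: an independent uniform shuffle of the uncorrupted subset, an independent uniform shuffle of the corrupted subset, and an independent uniform interleaving of the two. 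This is a standard fact about uniform permutations restricted to a fixed partition of the index set; I would state it as a one-line combinatorial lemma and apply it to conclude $g(\hat{Y}_{i^*}, W) \stackrel{d}{=} Y$ (and analogously $g(\hat{Y}'_{i^*}, W) \stackrel{d}{=} Y'$), where the same randomized $g$ is used in both cases.

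Finally, because $W$ is drawn from the same distribution in both worlds and is independent of the uncorrupted-group-$i^*$ messages, treating the pair $(\hat{Y}_{i^*}, W)\mapsto g(\hat{Y}_{i^*}, W)$ as a (randomized) function of $\hat{Y}_{i^*}$ alone and applying the data processing inequality yields
\[
D(Y\|Y') \;=\; D\bigl(g(\hat{Y}_{i^*},W)\,\big\|\,g(\hat{Y}'_{i^*},W)\bigr) \;\leq\; D(\hat{Y}_{i^*}\|\hat{Y}'_{i^*}),
\]
which is the claim. The main obstacle I anticipate is the verification of the permutation-factorization step: one has to be careful that the ``uniform interleaving'' used by $g$ is sampled independently of $\hat{Y}_{i^*}$ and of $W$, so that the post-processing view really is a valid (randomized) function whose input is $\hat{Y}_{i^*}$ alone with all other randomness integrated over a common distribution; once that is pinned down, the rest is a direct appeal to Definition \ref{def:postprocess}.
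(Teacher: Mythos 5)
Your proposal is correct and takes essentially the same route as the paper's proof: both arguments exhibit $Y$ as a randomized post-processing of $\hat{Y}_{i^*}$ using auxiliary randomness (your $W$; the paper's regeneration of corrupted-user messages, other-group shuffles, and the random placement $\pi'_{i^*}$) whose distribution is identical under $X$ and $X'$ and independent of the honest group-$i^*$ messages, then invoke Definition~\ref{def:postprocess}. The only cosmetic difference is how the uniform permutation of $[n_{i^*}]$ is factored — you shuffle the two sublists independently and then interleave, while the paper samples a uniform injection $\pi'_{i^*}:C_{i^*}\to[n_{i^*}]$ and fills the remaining slots sequentially with the pre-shuffled honest messages; both yield a uniform permutation and the rest of the argument is unchanged (your appeal to Eq.~\eqref{eq:a2} to dispose of the corrupted-victim case is a slightly odd justification, but the restriction to $u_{i^*,j^*}\notin C_{i^*}$ is exactly the implicit assumption the paper also makes).
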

\begin{proof}
Without loss of generality (for both corrupted/malicious and uncorrupted users), we let $\mech{R}_{i,j}$ denote the (possibly random) local message generation function of user $u_{i,j}$ in Step (2), which takes as input the local information (i.e. $x_{i,j}$, $pk_{i,j}$, $sk_{i,j}$) and global parameters. To prove the result, we define the following procedures (denoted as $post$) that takes as input the shuffled messages $\mech{S}(\{x_{i^*,j}''\}_{u_{i^*,j}\in G_{i^*}\backslash C_{i^*}})$ and outputs $\mech{S}(\{x_{i,j}''\}_{i\in [m], u_{i,j}\in G_i})$.
\begin{itemize}
    \item For $u_{i^*,j}\in C_{i^*}$, running $\mech{R}_{i^*,j}$ to obtain $x_{i^*,j}''$;
    \item Uniformly sample a permutation $\pi'_{i^*}:C_{i^*}\mapsto [n_{i^*}]$, then construct a $n_{i^*}$-length list $Y_{i^*}$ such that every $x_{i^*,j}''$ (for corrupted users $u_{i^*,j}\in C_{i^*}$) asides at the $\pi'_{i^*}(j)$-th position, and then inputted messages $\mech{S}(\{x_{i^*,j}''\}_{u_{i^*,j}\in G_{i^*}\backslash C_{i^*}})$ sequentially fill up the list;  
    \item For all $i\in [m]\backslash \{i^*\}$ and all possible $j\in G_i$, running $\mech{R}_{i,j}$ to obtain $x_{i,j}''$;
    \item For all $i\in [m]\backslash \{i^*\}$, uniformly sample a permutation $\pi_{i}:G_i\mapsto [n_i]$ and get $\{x_{i,\pi_{i}(j)}''\}_{j\in [n_i]}$;
    \item Initialize $Y_{temp}$ as $Y_{i^*}$. For $i\in [i^*-1]$, prepend $\{x_{i,\pi_{i}(j)}''\}_{j\in [n_i]}$ to $Y_{temp}$; for $i\in [i^*+1,m]$, append $\{x_{i,\pi_{i}(j)}''\}_{j\in [n_i]}$ to $Y_{temp}$;
    \item Return $Y_{temp}$.
\end{itemize}
It is oblivious that: (1) when the input dataset is $X$, the $post(\hat{Y}_i)$ distributionally equals to $Y$; (2) when the input dataset $X'$, the $post(\hat{Y}_i')$ distributionally equals to $Y'$. Therefore, according to the data processing inequality, we have the conclusion.
\end{proof}


We proceed to analyze the DP guarantee of shuffled messages from uncorrupted users in group $i^*$ (i.e. the divergence $D(\hat{Y}_{i^*})\| \hat{Y}'_{i^*})$). To this end, we firstly summarize the local message generating procedures aside on each uncorrupted client's side as an abstracted mechanism, then show the extended \& encrypted message outputted by the abstracted mechanism has the same local privacy guarantee as the data randomizer $\mech{R}$, and show all clients in the same group follow an identical abstracted mechanism. Finally, we show the privacy amplification via shuffling can be applied to this identical local mechanism, as if purely shuffling data randomized by $\mech{R}$.

\textbf{An abstraction for local client-side procedures. } Recall that the only message leaving a uncorrupted client $j$ (i.e. $u_{i^*,j}\in G_{i^*}\backslash C_{i^*}$) in Phases (0)-(4) is (fingerprint/public key $f_{i^*,j}$, sanitized data $x'_{i^*,j}$, extra data $t_{i^*,j}$), where the fingerprint $f_{i^*,j}=F(sk_{i^*,j})$ is post-processed from the random private key $sk_{i^*,j}$ using some function $F$, the sanitized data $x'_{i^*,j}$ is the privatized version of secret data $x_{i^*,j}=((i^*,j), l_{i^*,j}, v_{i^*,j})$ via a privatization mechanism $\mech{R}$, and the extra data $t_{i^*,j}$ is post-processed as $T(sk_{i^*,j}, f_{i^*,j}, x'_{i^*,j})$ using some function $T$. We summarize and abstract these procedures in Algorithm \ref{alg:local}.

\begin{algorithm}[t]
\small
    \SetKwInOut{Parameter}{Params}
    \renewcommand\baselinestretch{1.0}\selectfont
    \caption{An abstracted local mechanism $\overline{\mech{R}}$}
    \label{alg:local}
    \Parameter{A domain $\dom{D}_{k}$, a distribution $P_{k}:\dom{D}_{k} \mapsto [0,1]$, any function $F:\dom{D}_{k}\mapsto\dom{D}_{f}$, any local mechanism $\mech{R}: \dom{X}\mapsto \dom{Y}$, and any function $T:\dom{D}_{k}\times\dom{D}_{f}\times\dom{Y}\mapsto \dom{D}_{z}$.}
    \KwIn{An input $x_{i^*,j}\in \dom{X}$.}
    \KwOut{An extended message in Phase (1).}

    {$x'_{i^*,j} \leftarrow \mech{R}(x_{i^*,j})$}
    
    {sample $sk_{i^*,j} \sim P_{k}$}

    {$f_{i^*,j} \leftarrow F(sk_{i^*,j})$}

    {$t_{i^*,j} \leftarrow T(sk_{i^*,j}, f_{i^*,j}, y_{i^*,j})$}
    
    \KwRet{$(f_{i^*,j},\ x'_{i^*,j},\ t_{i^*,j})$}
\end{algorithm}

\textbf{Identicalness of local mechanisms. } Recall that privacy amplification via shuffling requires, not only that each local data is randomized, but also that clients follow an identical randomization mechanism (otherwise anonymity might break due to the output domain/distribution difference in different randomizers). In Lemma \ref{lemma:1ldp}, we show that the messages from every \emph{honest} client (in the same group) are indeed randomized by an identical mechanism (i.e., the Algorithm \ref{alg:local}). The proof of the lemma is trivial since every honest client in the same group adopted the same global parameters and the same local randomizer since Phase (0).

\begin{lemma}\label{lemma:1ldp}
For each honest/uncorrupted client in the same group, Phase (1) in the conceptual protocol from Section \ref{sec:protocol} is equivalent to Algorithm \ref{alg:local} with global parameters given in Phase (0).
\end{lemma}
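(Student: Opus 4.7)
\textbf{Proof proposal for Lemma \ref{lemma:1ldp}.} The plan is to establish the equivalence by exhibiting an explicit instantiation of the abstracted parameters $(\dom{D}_{k}, P_{k}, F, \mech{R}, T)$ of Algorithm \ref{alg:local} that reproduces, step by step, the message generation performed by an honest client in Phase (1) of the protocol in Section \ref{sec:protocol}. First, I would recall that Phase (1) consists of three substeps for user $u_{i^*,j}$: (i) invoking $\textsf{Gen}(\lambda)$ to obtain a key pair $(pk_{i^*,j}, sk_{i^*,j})$, (ii) sanitizing the input via $x'_{i^*,j} \leftarrow \mech{R}_{i^*}(x_{i^*,j})$, and (iii) producing the outgoing message $x''_{i^*,j} \leftarrow \textsf{Enc}_{pk_c}(pk_{i^*,j} \| x'_{i^*,j})$. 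I would then match these substeps to Algorithm \ref{alg:local} as follows: $\dom{D}_{k}$ is the private-key space of $\Pi$, $P_{k}$ is the marginal distribution over secret keys induced by $\textsf{Gen}(\lambda)$, the function $F$ is the deterministic mapping from $sk_{i^*,j}$ to its associated public key $pk_{i^*,j}$ (which may be defined by the $\textsf{Gen}$ algorithm itself), the local randomizer $\mech{R}$ is instantiated as $\mech{R}_{i^*}$, and the function $T$ is the map $(sk_{i^*,j}, f_{i^*,j}, x'_{i^*,j}) \mapsto \textsf{Enc}_{pk_c}(f_{i^*,j} \| x'_{i^*,j})$, where the randomness of encryption is absorbed into $T$.

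Next, I would observe that Phase (0) fixes the specification of $\Pi=(\textsf{Gen},\textsf{Enc},\textsf{Dec})$, the security parameter $\lambda$, the server's public key $pk_c$, and the randomizer $\mech{R}_{i^*}$ once and for all, and broadcasts them to the entire group $G_{i^*}$. In particular, these parameters are shared inputs to every honest client in the same group; therefore the induced parameters $(\dom{D}_{k}, P_{k}, F, \mech{R}, T)$ of the abstraction do not depend on the identity $j$ of the honest client. Hence every honest client in group $G_{i^*}$ runs the same instance of $\overline{\mech{R}}$ on its private input $x_{i^*,j}$.

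Finally, I would verify that the joint output distribution of Algorithm \ref{alg:local} on input $x_{i^*,j}$, namely the triple $(f_{i^*,j}, x'_{i^*,j}, t_{i^*,j})$, is identically distributed to the tuple that an honest client actually emits in Phase (1). Since $x''_{i^*,j} = \textsf{Enc}_{pk_c}(pk_{i^*,j} \| x'_{i^*,j})$ is a bijective repackaging (up to encryption randomness, which is already part of $T$) of $(pk_{i^*,j}, x'_{i^*,j})$, this comparison is immediate. I do not anticipate a genuine obstacle here; the only subtlety worth stating explicitly is that the fingerprint component $f_{i^*,j}$ in the abstraction is allowed to be the full public key $pk_{i^*,j}$, which keeps the abstraction faithful to the protocol rather than introducing an auxiliary one-way hash. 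With this matching done, the equivalence claimed by the lemma follows directly from the definitions.
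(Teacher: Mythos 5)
Your proof is correct and takes essentially the same approach as the paper: the paper disposes of this lemma in a single sentence, observing that it is trivial because every honest client in a group uses the same global parameters (fixed in Phase (0)) and the same randomizer $\mech{R}_{i^*}$, and your proposal simply makes this explicit by exhibiting the instantiation of $(\dom{D}_k,P_k,F,\mech{R},T)$ and noting that it depends only on Phase-(0) parameters, not on the client index $j$. The only detail worth flagging is that your $F$ implicitly assumes $\textsf{Gen}(\lambda)$ has the form ``sample $sk$, then deterministically derive $pk$''; for a general PKE one should take $\dom{D}_k$ to be the randomness space of $\textsf{Gen}$ (so $F$ extracts the public key and $T$ has access to the secret key via the same randomness), but this is the same convention the paper's Algorithm~\ref{alg:local} already adopts, so it is not a gap.
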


\textbf{Privacy amplification guarantees. } Since every message from clients in the same group is sanitized by an identical mechanism $\overline{\mech{R}}$ (see the former part), and the server (i.e., the potential privacy adversary) only observe the shuffled messages from each group, one can directly apply the amplification bounds in the literature \cite{balle2019privacy,feldman2021hiding,feldman2022stronger,wang2023unified}. 
Generally, for any distance measure $D$ that satisfies the data processing inequality,  we have the corresponding distance between $\hat{Y}_{i^*})$ and $\hat{Y}'_{i^*})$, is upper bounded by the distance between $\mech{S}\circ\mech{R}(\{x_{i^*,j}\}_{u_{i,j}\in G_{i^*}\backslash C_{i^*}})$ and $\mech{S}\circ\mech{R}(\{x'_{i^*,j}\}_{u_{i,j}\in G_{i^*}\backslash C_{i^*}})$ (see Lemma \ref{lemma:shuffle1ldp}).

\begin{lemma}[Privacy Guarantee of Shuffled Algorithm \ref{alg:local}]\label{lemma:shuffle1ldp}
Given two neighboring dataset $\hat{X}_{i^*}=\{x_{i^*,j}\}_{u_{i,j}\in G_{i^*}\backslash C_{i^*}}$ and  $\hat{X}'_{i^*}=\{x'_{i^*,j}\}_{u_{i,j}\in G_{i^*}\backslash C_{i^*}}$ that differ only at one element, then for any distance measure $D$ that satisfies the data processing inequality,
\[D(\mech{S}\circ\overline{\mech{R}}(\hat{X}_{i^*})\|\mech{S}\circ\overline{\mech{R}}(\hat{X}'_{i^*}))\leq D(\mech{S}\circ\mech{R}(\hat{X}_{i^*})\| \mech{S}\circ\mech{R}(\hat{X}'_{i^*})).\]
\end{lemma}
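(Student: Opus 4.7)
The plan is to exhibit $\mech{S}\circ\overline{\mech{R}}$ as a randomized post-processing of $\mech{S}\circ\mech{R}$, and then invoke the data processing inequality (Definition~\ref{def:postprocess}) directly. Concretely, I will define an auxiliary randomized algorithm $\mathcal{G}$ that takes an arbitrary shuffled list $(y_1,\ldots,y_n)$ as input, independently samples $n$ fresh secret keys $\tilde{sk}_1,\ldots,\tilde{sk}_n \sim P_k$, computes $\tilde{f}_j = F(\tilde{sk}_j)$ and $\tilde{t}_j = T(\tilde{sk}_j, \tilde{f}_j, y_j)$ position-wise, and outputs $\bigl((\tilde{f}_1, y_1, \tilde{t}_1),\ldots,(\tilde{f}_n, y_n, \tilde{t}_n)\bigr)$. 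If I can show that the composite $\mathcal{G} \circ \mech{S} \circ \mech{R}$ and $\mech{S} \circ \overline{\mech{R}}$ induce identical distributions on every input dataset, then the lemma follows from a single application of the data processing inequality.

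To verify this distributional equality, I would use the fact that, inside $\overline{\mech{R}}$, each secret key $sk_{i^*,j}$ is drawn from $P_k$ independently of the input $x_{i^*,j}$, of all other clients' secret keys and inputs, and of the shuffler's permutation $\pi$. Writing out $\overline{\mech{R}}(X) = \bigl((f_j, x'_j, t_j)\bigr)_{j}$ with $x'_j = \mech{R}(x_j)$, $f_j = F(sk_j)$, and $t_j = T(sk_j, f_j, x'_j)$, the joint distribution factorises as: first produce $(x'_1,\ldots,x'_n)$ via $\mech{R}$, then, conditionally on $(x'_1,\ldots,x'_n)$, draw i.i.d.\ keys $sk_j$ and deterministically compute $(f_j, t_j)$ coordinate-by-coordinate. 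Since $\pi$ is independent of the $sk_j$'s, applying $\pi$ to the triples $(f_j, x'_j, t_j)$ gives the same joint law as first applying $\pi$ to the $x'_j$'s alone and then attaching freshly drawn $(\tilde f_j, \tilde t_j)$ at each permuted coordinate — which is precisely what $\mathcal{G} \circ \mech{S} \circ \mech{R}(X)$ outputs.

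The main obstacle, and the step I would check most carefully, is the legitimacy of ``re-drawing'' the secret keys after shuffling. This relies on the symmetry argument that, because $(sk_1,\ldots,sk_n)$ is i.i.d.\ and independent of $\pi$, the permuted tuple $(sk_{\pi(1)},\ldots,sk_{\pi(n)})$ has the same joint law as a fresh i.i.d.\ sample from $P_k^{\otimes n}$; moreover, $T$ depends on position $j$ only through $(sk_j, f_j, x'_j)$, so no cross-coordinate correlation leaks through $T$. Once this symmetry is established, $\mathcal{G}$ is a valid randomized post-processing applied to both $\mech{S}\circ\mech{R}(\hat X_{i^*})$ and $\mech{S}\circ\mech{R}(\hat X'_{i^*})$, and the data processing inequality yields
\[
D\bigl(\mech{S}\circ\overline{\mech{R}}(\hat{X}_{i^*})\,\big\|\,\mech{S}\circ\overline{\mech{R}}(\hat{X}'_{i^*})\bigr) \;\leq\; D\bigl(\mech{S}\circ\mech{R}(\hat{X}_{i^*})\,\big\|\,\mech{S}\circ\mech{R}(\hat{X}'_{i^*})\bigr),
\]
which is exactly the claim. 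A minor subtlety to verify is that the unshuffled fingerprint/extra-data coordinates do not secretly encode the permutation $\pi$; this is automatic here because $F$ and $T$ never see $\pi$ as an input.
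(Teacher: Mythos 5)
Your proposal is correct and takes essentially the same route as the paper: both define the identical post-processing map (your $\mathcal{G}$, the paper's $g_{sr}$) that reattaches freshly sampled keys, fingerprints, and extra data coordinate-wise to the shuffled outputs of $\mech{R}$, then apply the data processing inequality once. Your explicit justification of the distributional equality via the i.i.d.-and-permutation-invariance symmetry argument is a welcome elaboration of a step the paper simply declares obvious, but it is the same proof.
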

\begin{proof}
Given input either $\mech{S}\circ\mech{R}(\hat{X}_{i^*})$ or $\mech{S}\circ\mech{R}(\hat{X}'_{i^*})$, we define a following function (denoted as $g_{sr}$):
\begin{itemize}

\item For each message $x'_{i^*,j}$ in the input, sample $sk_{i^*,j}\sim P_k$, compute $f_{i^*,j}\leftarrow F(sk_{i^*,j})$, then compute $t_{i^*,j}\leftarrow T(sk_{i^*,j},f_{i^*,j},x'_{i^*,j})$, and finally get $(f_{i^*,j},x'_{i^*,j},t_{i^*,j})$.

\item Return a sequential of $(f_{i^*,j},x'_{i^*,j},t_{i^*,j})$ each is generated by the previous step over each input message. 
\end{itemize}
It is obvious that the $g_{sr}(\mech{S}\circ\mech{R}(\hat{X}_{i*}))$ distributionally equals to $\mech{S}\circ\overline{\mech{R}}(\hat{X}_{i*})$, and $g_{sr}(\mech{S}\circ\mech{R}(\hat{X}'_{i*}))$ distributionally equals to $\mech{S}\circ\overline{\mech{R}}(\hat{X}'_{i*})$. Therefore, according to the data processing inequality of the distance measure of $D$, we have the conclusion.
\end{proof}

In particular, when the mechanism $\mech{R}$ satisfies $\epsilon$-LDP, we have $\overline{\mech{R}}$ satisfies $\epsilon$-LDP. Then according to the latest work \cite{feldman2022stronger},  the shuffled messages $\mathcal{S}(\{\overline{\mech{R}}(x_{i^*,j})\}_{u_{i,j}\in G_{i^*}\backslash C_{i^*}})$ from group $G_{i^*}$ satisfy $(\epsilon_c,\delta)$-DP where \[\epsilon_c=\texttt{Amplify}(\epsilon,\delta,|G_{i^*}\backslash C_{i^*}|)=\tilde{O}(\sqrt{e^{\epsilon}/n'_{i^*}}).\]
Specifically, when $n\geq 8 (e^{\epsilon}+1)\log(2/\delta)$, we have $\epsilon_c\leq \log\Big(1+\frac{e^\epsilon-1}{e^\epsilon+1}(\sqrt{\frac{64 e^{\epsilon} \log 4/\delta}{n'_{i^*}}}+\frac{8 e^\epsilon}{n'_{i^*}})\Big)$ \cite[Theorem 3.1]{feldman2021hiding}. This implies the conclusion of the second case.

Finally, let us consider the third case. Simply add the server $S$ to corrupted parties $C$ (i.e. more views are leaked to adversaries), and apply the result for the second case, we arrive at the conclusion.

\section{Proof of Error Lower Bounds in PIC Model}\label{app:errorlower}

We firstly shift our focus from the original $\ell_2$-norm spherical domain $\mathbb{S}_{2,1}({0}^d)$ to the simpler $\ell_{+\infty}$-norm domain $[0,1]^d$. To establish the lower bound of the mean square error (MSE) of a single report derived from the single-message shuffle model with $x_i\in [0,1]^d$ as input, we follow a four-step approach. First, we confine the space of local randomizers to the subspace where the domain of the randomizer's output matches the input. Second, we construct a set of discretized and gridded inputs that are well separated. Third, we compute their expected MSE using a formula that hinges on the probability transition matrix among these discrete inputs. Finally, we leverage the constraints of differential privacy inherent to the shuffle model to ascertain the properties of the probability transition matrix, thereby deducing lower bounds.

\textbf{Step (1):} We begin by demonstrating that the MSE bound of any local randomizer $R:[0,1]^d\mapsto \mathbb{R}^{d'}$ coupled with any post-processing function $f:\mathbb{R}^{d'}\mapsto \mathbb{R}^d$ is lower bounded by a certain local randomizer $R':[0,1]^d\mapsto [0,1]^d$. This local randomizer, $R'$, possesses an output domain identical to the input. We establish it using a constructive method. Defining $R'(x_i) = \max(0, \min(1, f(R(x_i)))$, where $\min$ and $\max$ execute coordinate-wise min/max truncation, we can say for any $x_i\in [0,1]^d$ that:
\begin{alignat*}{2}
&\mathbb{E}[\|f(R(x_i))-x_i\|_2^2]\\
=&\int_{x\in \mathbb{R}^d} \mathbb{P}[f(R(x_i))=x]\cdot \|x-x_i\|_2^2 \mathrm{d}x\\
\geq&\int_{x\in \mathbb{R}^d}  \mathbb{P}[f(R(x_i))=x]\cdot \|\max(0, \min(1, x))-x_i\|_2^2 \mathrm{d}x\\
\geq&\mathbb{E}[\|R'(x_i)-x_i\|_2^2].
\end{alignat*}
As a result, we can now narrow our focus to local randomizers with an output domain of $[0,1]^d$.

\textbf{Step (2):} We proceed by partitioning the domain $[0,1]^d$ into multi-dimensional grids, such that the center points of these grids are well-separated (i.e., have non-zero distances from each other). Specifically, each dimension is segmented into $L$ uniform intervals, where the $l$-th interval is defined as $[\frac{l-1}{L}, \frac{l}{L})$ for $l\in [L-1]$, and the $L$-th interval is $[\frac{L-1}{L}, 1]$. Intervals across all dimensions divide the domain into grids or subdomains, yielding a total of $L^d$ grids. Each grid is indexed by the indices of its intervals in each dimension. For instance, the $(1,1,\ldots,1)$-th grid corresponds to the subdomain $[0, \frac{1}{L})\times [0, \frac{1}{L})\times \cdots \times[0, \frac{1}{L}) \in [0,1]^d$. Each grid possesses a center point, with the center point of the $(l_1,l_2,\ldots,l_d)$-th grid being $\big[\frac{l_1-1/2}{L},\frac{l_2-1/2}{L},\ldots,\frac{l_d-1/2}{L}\big]$ for $l_1,\dots,l_d\in [L]^d$. We denote all center points as:
\[C=\Big\{\big[\frac{l_1-1/2}{L},\frac{l_2-1/2}{L},\ldots,\frac{l_d-1/2}{L}\big]\ |\ l_1,\dots,l_d\in [L]^d\Big\}.\]
We use $G(l_1,l_2,\ldots,l_d)$ to denote the subdomain of the $(l_1,l_2,\ldots,l_d)$-th grid.

\textbf{Step (3): } Following the outcome of Step (1), we examine any local randomizer $R:[0,1]^d\mapsto[0,1]^d$. Given any input $x_i\in C$, the randomizer $R$ defines a probabilistic transition from $x_i$ to each of the $L^d$ grids. We represent the transition probability from $x_i$ to the $\mathbf{l'}$-th grid as $P_{\mathbf{l},\mathbf{l'}}$, where $\mathbf{l}$ is the index of the grid that contains $x_i$ and $\mathbf{l},\mathbf{l'} \in [L]^d$. By iterating over all possible $x_i \in C$, we obtain a transition probability matrix. Now, considering that $x$ is randomly sampled from $C$ in a uniform manner, our objective is to analyze the \emph{expected} MSE of $R$ given $x$ as input. Given that a center point maintains a minimum Manhattan distance of $\frac{1}{2L}$ from other grids, we can lower bound the expected MSE as follows:
\begin{alignat*}{2}
 &\mathbb{E}_{x\sim uniform(C)}[\|R(x)-x\|_2^2]\\
\geq &\frac{1}{L^d}\sum_{\mathbf{l}\in [L]^d} \sum_{\mathbf{l'}\in [L]^d} P_{\mathbf{l},\mathbf{l'}}\cdot
\llbracket \mathbf{l}\neq \mathbf{l'}\rrbracket \cdot \frac{1}{(2L)^2} \\ 
\geq&\frac{1}{L^d}\sum_{\mathbf{l}\in [L]^d}\frac{1- P_{\mathbf{l},\mathbf{l}}}{4 L^2},
\end{alignat*}
where $P_{\mathbf{l},\mathbf{l}}$ represents the probability that the output resides within the same interval as the input central point of the $\mathbf{l}$-th grid.
Additionally, since the squared distance between two points from two different grids (e.g., from the $\mathbf{l}$-th and $\mathbf{l'}$-th grid respectively) is at least $\frac{1}{L^2}\sum_{j\in [d]}\llbracket \mathbf{l}_j\neq \mathbf{l'}_j \rrbracket(|\mathbf{l}_j-\mathbf{l'}_j|-1/2)^2$, we lower bound the expected MSE as follows:
{\small
\begin{alignat*}{2}
 &\mathbb{E}_{x\sim uniform(C)}[\|R(x)-x\|_2^2]\\
\geq&\frac{1}{L^d}\sum_{\mathbf{l}'\in [L]^d} \sum_{\mathbf{l}\in [L]^d}  P_{\mathbf{l'},\mathbf{l}}\cdot \frac{\sum_{j\in [d]}\llbracket \mathbf{l}_j\neq \mathbf{l'}_j \rrbracket(|\mathbf{l'}_j-\mathbf{l}_j|-1/2)^2}{L^2}\\
\geq&\frac{1}{L^d}\sum_{\mathbf{l}'\in [L]^d}\big(\min_{\mathbf{l''}\in [L]^d}  P_{\mathbf{l''},\mathbf{l}}\big)\sum_{\mathbf{l}\in [L]^d} \frac{\sum_{j\in [d]}\llbracket \mathbf{l}_j\neq \mathbf{l'}_j \rrbracket(|\mathbf{l'}_j-\mathbf{l}_j|-1/2)^2}{L^2}\\
\geq&\frac{1}{L^d}\sum_{\mathbf{l}'\in [L]^d}\big(\min_{\mathbf{l''}\in [L]^d}  P_{\mathbf{l''},\mathbf{l}}\big)\sum_{j\in [d]}\sum_{\mathbf{l}\in [L]^d} \frac{\llbracket \mathbf{l}_j\neq \mathbf{l'}_j \rrbracket(|\mathbf{l'}_j-\mathbf{l}_j|-1/2)^2}{L^2}\\
\geq&\frac{1}{L^d}\sum_{\mathbf{l}'\in [L]^d} \big(\min_{\mathbf{l''}\in [L]^d}  P_{\mathbf{l''},\mathbf{l}}\big)\sum_{j\in [d]} \frac{L^{d-1}(L-1/L)}{48},
\end{alignat*}
}
where the last step uses the fact that $ \sum_{\mathbf{l}\in [L]^d} \frac{\llbracket \mathbf{l}_j\neq \mathbf{l'}_j \rrbracket(|\mathbf{l'}_j-\mathbf{l}_j|-1/2)^2}{L^2}\geq \frac{L^{d-1}(L-1/L)}{48}$ holds for all possible $ \mathbf{l'}$. The $\min_{\mathbf{l''}\in [L]^d}  P_{\mathbf{l''},\mathbf{l}}$ denotes the minimum possible probability that the output falls within the same interval as the central point of the $\mathbf{l}$-th grid, given all possible input $\mathbf{l''}\in [L]^d$.



\textbf{Step (4): } We now leverage the DP constraint in the shuffle model to establish a relationship between the following two probabilities:
\begin{alignat*}{1}
p_1=P_{\mathbf{l},\mathbf{l}},\\ p_0=P_{\mathbf{l''},\mathbf{l}}
\end{alignat*}
when $\mathbf{l},\mathbf{l''}$. Let's consider two central points $x,x''\in C$ such that $x$ and $x''$ belong to the $\mathbf{l}$-th and $\mathbf{l''}$-th grid respectively. We can then construct two neighboring datasets $T=(x'',x'',\ldots,x'')$ and $T''=(x,x'',\ldots,x'')$ that both contain $n$ elements. Then, in two independent runs of $S\circ R(T)$ and $S\circ R(T'')$, the corresponding probability that $S\circ R(T)$ (or $S\circ R(T'')$) contains no elements within the $\mathbf{l}_j$-th interval, is constrained by $(\epsilon,\delta)$-differential privacy as follows (assuming $\delta<0.5$):
\begin{alignat}{1}\label{eq:dpineq}
\mathbb{P}[S\circ R(T)\cap G(\mathbf{l})=\emptyset]\leq e^{\epsilon}\mathbb{P}[S\circ R(T'')\cap G(\mathbf{l})=\emptyset]+\delta,
\end{alignat}
where $G(\mathbf{l})\subseteq [0,1]^d$ denotes the domain of the $\mathbf{l}$-th grid. Note that $\mathbb{P}[S\circ R(T)\cap G(\mathbf{l})=\emptyset]=(1-p_0)^n$ and $\mathbb{P}[S\circ R(T)\cap G(\mathbf{l})=\emptyset]=(1-p_1)(1-p_0)^{n-1}$. Consequently, if $(1-p_1)\leq 0.5 e^{-\epsilon}$, then we have $p_0\geq (1/2-\delta)/n$ where $\delta<1/2$, since $(1-p_0)^n> 1/2+\delta> e^{\epsilon}(1-p_1)+\delta\geq 1/2+\delta$ results in a contradiction \cite[Lemma 4.5]{balle2019privacy}. This implies that either $(1-p_1)> 0.5 e^{-\epsilon}$ or $p_0\geq (1/2-\delta)/n$ holds for all possible $\mathbf{l''},\mathbf{l}\in [L]^d$ (under the assumption that $\delta<1/2$). Therefore, we arrive at the expected MSE as follows:
\begin{alignat*}{2}
 &\mathbb{E}_{x\sim uniform(C)}[\|R(x)-x\|_2^2]\\
\geq&\frac{1}{L^d}\sum_{\mathbf{l}\in [L]^d}\min\Big\{\frac{1-p_1}{4L^2},\ \frac{d L^{d-1}(L-1/L)p_0}{48}  \Big\}\\
\geq&  \frac{1}{L^d}\sum_{\mathbf{l}\in [L]^d}\min\Big\{\frac{e^{-\epsilon}}{8 L^2},\ \frac{1/2-\delta}{n}\cdot  \frac{d L^{d-1}(L-1/L)}{96}\Big\}
\end{alignat*}
Choosing $L$ at $\lceil {(n/d)}^{1/(d+2)} \rceil$ yields the expected MSE as $\tilde{\Omega}(\frac{d^{2/(d+2)}}{n^{2/(d+2)}})$. As we are concerned with the asymptotic MSE with respect to $n$, the parameters $(\epsilon,\delta)$ are suppressed in the bound.

Because the expected MSE bounds will never exceed the worst-case MSE bounds, we can establish that for an input data domain $\dom{X}=[0, 1]^d$, the MSE lower bound $\max_{x\in \dom{X}} R(x)\geq \tilde{\Omega}(\frac{d^{2/(d+2)}}{n^{2/(d+2)}})$. Then, for the re-scaled domain $\mathbb{X}'=[0, {1}/{\sqrt{d}}]^d$, the MSE lower bound is $\max_{x'\in \dom{X}'} R(x')\geq \tilde{\Omega}(\frac{d^{-d/(d+2)}}{n^{2/(d+2)}})$. Finally, using the fact that $[0, {1}/{\sqrt{d}}]^d \subseteq \mathbb{B}_{2,1}(\{0\}^{d})$, we can conclude that the MSE lower bound over the domain $\mathbb{B}_{2,1}(\{0\}^{d})$ is $\tilde{\Omega}(\frac{1}{n^{2/(d+2)}})$ when $d$ is fixed and $n$ is sufficiently large. For the $d=1$ case, the \cite{balle2019privacy} has established the $\tilde{\Omega}(\frac{1}{n^{2/3}})$ bound; the concurrent work in \cite{asiprivate2024} also extends  to multi-dimension case for the hyperspherical unit domain and obtains similar results.


\end{document}